\documentclass[acmsmall,screen,nonacm]{acmart}
\acmJournal{TALG}
\usepackage{latexsym}              
\usepackage{amsmath}               
\usepackage{amsthm}                
\usepackage{color}
\usepackage{enumitem}
\usepackage{comment}
\usepackage{marginnote}




\newcommand{\Rem}{{\rm Rem}}

\newcommand{\mylabel}[1]{\label{#1}}
\newcommand{\ind}[1]{\hspace*{#1em}}

\newcommand{\For}{{\bf for\ }}
\newcommand{\From}{{\bf from\ }}
\newcommand{\Downto}{{\bf downto\ }}
\newcommand{\To}{{\bf to\ }}

\newcommand{\Do}{{\bf do\ }}
\newcommand{\Od}{{\bf od}}

\newcommand{\If}{{\bf if\ }}
\newcommand{\Then}{{\bf then\ }}

\newcommand{\Fi}{{\bf fi}}

\newcommand{\Break}{{\bf break}}

\newcommand{\Return}{{\bf return\ }}

\newcommand{\diag}{{\rm diag}}

\newcommand{\Z}{\ensuremath{\mathbb Z}}
\newcommand{\Q}{\ensuremath{\mathbb Q\mskip1mu}}

\newcommand{\R}{{\mathsf{R}}}

\newcommand{\Znn}{\ensuremath{\mathbb Z}^{n\times n}}

\newtheorem{theorem}{Theorem}

\newtheorem{definition}[theorem]{Definition}
\newtheorem{corollary}[theorem]{Corollary}

\newtheorem{remark}[theorem]{Remark}
\newtheorem{lemma}[theorem]{Lemma}
\newtheorem{example}[theorem]{Example}

\DeclareMathOperator{\rowmod}{{\mathbf r}mod}
\DeclareMathOperator{\colmod}{{\mathbf c}mod}
\DeclareMathOperator{\loglog}{loglog}

\DeclareMathOperator{\Span}{\rm Span}
\newlength{\algwidth}
\setlength{\algwidth}{0.9\textwidth}

%
%


\usepackage{float}

\title{A Cubic Algorithm for Computing the Hermite Normal Form of a
Nonsingular Integer Matrix}

\author{Stavros Birmpilis}
\orcid{0000-0002-1766-9814}
\email{sbirmpil@uwaterloo.ca}
\author{George Labahn}
\orcid{0009-0005-8977-4890}

\email{glabahn@uwaterloo.ca}
\author{Arne Storjohann}
\orcid{0000-0002-6354-8810}
\email{astorjoh@uwaterloo.ca}
\affiliation{
  \institution{University of Waterloo}
  \department{Cheriton School of Computer Science}
  \streetaddress{200 University Ave W}
  \city{Waterloo}
  \postcode{N2L 3G1}
  \country{Canada}
}

\begin{CCSXML}
<ccs2012>
<concept>
<concept_id>10003752.10003809</concept_id>
<concept_desc>Theory of computation~Design and analysis of algorithms</concept_desc>
<concept_significance>500</concept_significance>
</concept>
<concept>
<concept_id>10010147.10010148.10010149.10010158</concept_id>
<concept_desc>Computing methodologies~Linear algebra algorithms</concept_desc>
<concept_significance>500</concept_significance>
</concept>
</ccs2012>
\end{CCSXML}

\ccsdesc[500]{Theory of computation~Design and analysis of algorithms}
\ccsdesc[500]{Computing methodologies~Linear algebra algorithms}

\keywords{Hermite normal form, Howell normal form,
Smith massager, integer matrix}

\setcitestyle{nosort}
\citestyle{acmauthoryear}

\begin{document}

\begin{abstract}
A Las Vegas randomized algorithm is given to compute the Hermite
normal form of a nonsingular integer matrix $A$ of dimension $n$.
The algorithm uses quadratic integer  multiplication and cubic
matrix multiplication and has running time bounded by $O(n^3 (\log
n + \log ||A||)^2(\log n)^2)$ bit operations, where $||A||= \max_{ij}
|A_{ij}|$ denotes the largest entry of $A$ in absolute value. A
variant of the algorithm that uses pseudo-linear integer multiplication
is given that has running time $(n^3 \log ||A||)^{1+o(1)}$ bit
operations, where the exponent $``+o(1)"$ captures additional factors
$c_1 (\log n)^{c_2} (\loglog ||A||)^{c_3}$ for positive real constants
$c_1,c_2,c_3$.
\end{abstract}

\maketitle


\section{Introduction}

Corresponding to any nonsingular integer matrix  $A \in \Z^{n \times
n}$, there is a unimodular matrix $U \in \Z^{n \times n}$ such that
\begin{equation}\label{intro:hermite}
H =  U A = \left [ \begin{array}{cccc}
h_1 & h_{12} & \cdots & h_{1n} \\
& h_2 & \cdots & h_{2n} \\
& & \ddots & \vdots \\
& & & h_n
\end{array} \right ] \nonumber
\end{equation}
has all entries nonnegative, and off-diagonal entries $h_{\ast j}$
strictly smaller than the diagonal entry $h_j$ in the same column.
$H$ is the (integer) Hermite normal form of $A$.  The form is unique
with its existence dating back to \citet{Hermite}.  The rows of $H$
give a canonical basis for the lattice generated by the $\Z$-linear
combinations of the rows of $A$. In addition to being upper triangular
and canonical, an important property of the basis given by the
Hermite form is that it requires only $O(n^2(\log n + \log ||A||))$
bits to represent, compared to $O(n^2\log ||A||)$ to write down the
input matrix.

Applications of the Hermite form are well known including, for
example, solving systems of linear diophantine equations
\citep{ChouCollins}, integer programming \citep{Schrijver}, and
determining rational invariants and rewriting rules of scaling
invariants~\citep{HubertLabahn2}, to name just a few.

Algorithms for computing Hermite normal forms for integer matrices
were initially based on triangularizing the input matrix using
variations of Gaussian elimination that used the extended Euclidean
algorithm to eliminate entries below the diagonal. However, such
methods can be prone to exponential expression swell, that is, the
problem of rapid growth of intermediate integer operands.  The first
provably polynomial time algorithm was given by \citet{KannanBachem},
with \citet{ChouCollins} improving this to a running time of $(n^{6}
\log ||A||)^{1+o(1)}$ bit operations.  \citet{DomichKannanTrotter},
\citet{Iliopoulos89:1} and \citet{HafnerMcCurley89} later improved
these to  $(n^{4} \log ||A||)^{1+o(1)}$.  Further improvements came
from \citet{StorjohannLabahnISSAC96} and \citet{thesis}, with worst
case time complexity bounded by $(n^{\omega+1} \log ||A||)^{1+o(1)}$
bit operations, where $\omega$ is the exponent of matrix multiplication.
The standard algorithm for matrix multiplication  has $\omega=3$,
while the current best known asymptotic upper bound for $\omega$
by \citet{AlmanWilliams2021} allows $\omega < 2.37286$.

Recently, a number of approaches have focused on heuristic methods
to achieve faster computation, for example \citep{MicciancioWarinschi01,
PernetStein10, PauderisStorjohann13, LiuPan19} with the last citation
having a complexity of $(n^\omega \log ||A||)^{1 + o(1)}$ in the
case of random input matrices. However, these algorithms require
strong assumptions, for example, that there be only a small number
of non-trivial ($\neq 1$) late diagonal entries of the Hermite form,
something common with random matrices.

In this paper, we give a new randomized algorithm for computing the
Hermite normal form of a nonsingular integer matrix $A \in \Z^{n
\times n}$.  Assuming the use of standard (quadratic) integer
multiplication and standard (cubic) matrix multiplication, the
algorithm has a worst case running time bounded by $O(n^3 (\log n
+ \log ||A||)^2 (\log n)^2)$ bit operations.  If we use a subcubic
matrix multiplication algorithm, for example Strassen's algorithm,
then the cost is $O(n^3 (\log n + \log ||A||)^2)$.  We also give a
variant of our algorithm that has a complexity of  $(n^3 \log
||A||)^{1+o(1)}$ bit operations, assuming fast (pseudo-linear)
integer multiplication.  In all cases, our Hermite form algorithms
are probabilistic of type Las Vegas. That is, the algorithm can
report {\sc Fail} with probability at most $1/2$ but otherwise
returns an answer that is certified to be correct. The three key
ideas that we use are minimal matrix denominators, Smith massagers
and duality of row Hermite and column Howell forms.

We remark that one can also define the Hermite form for a matrix
of univariate polynomials with coefficients from a field. In this
case, the definition requires that the diagonal elements are monic,
while the off-diagonal entries have lower degree than the diagonal
entry in the same column.  The algorithms mentioned in the third
paragraph of this section all have corresponding versions which
work for the polynomial Hermite form, and have a complexity similar
to the integer based algorithms, but with degree taking the place
of bitlength and counting field operations instead of bit operations.
However, there are new, very efficient algorithms which work in
the polynomial case but which have no counterpart in the integer
case. In particular we mention the recent fast algorithm of
\citet{LabahnNeigerZhou17}. This algorithm is deterministic and
computes the (polynomial) Hermite form with a complexity of
$(n^{\omega} \lceil s \rceil)^{1+o(1)}$ field operations, 
with $s$ being the minimum
of the average of the degrees of the columns of $A$ and that of its
rows.  Unfortunately, some of the tools used in that algorithm do
not have counterparts in the case of integer matrices. In particular,
for polynomial matrices one has notions such as degree shifts
\citet{BeckermannLabahnVillard}, order bases
\citet{BeckermannLabahn94,ZhouLabahn12}, column bases \citet{ZhouLabahn13}
and  minimal nullspace bases \citet{ZhouLabahnStorjohann2012} along
with algorithms for their fast computation.  For example, the fast Hermite
algorithm of \citet{LabahnNeigerZhou17} works by directly triangularizing the input matrix,
but is able to exploit the aforementioned tools, that are particular
to polynomial matrices, in order keep degrees of intermediate polynomials
controlled while at the same time maintaining a good complexity.

The rest of this paper is organized as follows. Section~\ref{sec:2}
gives an overview of our approach.  Sections~\ref{sec:3} and~\ref{sec:4}
introduce the mathematical and computational tools we use, including
minimal denominators, Smith massagers, compact representations of
both Hermite forms and Smith massagers, and some basic subroutines.
Section~\ref{sec:diag} then gives an algorithm for determining the
diagonal elements of the Hermite form.  Section~\ref{ssec:colhowell}
describes the column Howell form of a matrix over $\Z/(s)$ for
positive modulus $s$, while Section~\ref{sec:dual} relates the
column Howell form to the inverse of the Hermite form.  Section~\ref{sec:8}
then shows how we compute the Hermite form from a Howell form
corresponding to the inverse of the Hermite form, with
Section~\ref{sec:comphowtrans} detailing our modification of Howell's
algorithm to compute a transformation matrix to produce the required
Howell form.  Section~\ref{sec:dotprod} gives an algorithm to compute
a type of scaled matrix vector product which is essential to obtaining
the running time bound of our algorithm.  Section~\ref{sec:11} uses
the results of the previous section to obtain our main result: a
Las Vegas algorithm for the Hermite form with expected running time
$O(n^3(\log n + \log ||A||))^2(\log n)^2)$ bit operations assuming
standard integer and matrix multiplication.  Section~\ref{sec:fhermalg}
gives a variant of the algorithm that has running time $(n^3 \log
||A||)^{1+o(1)}$ bit operations assuming fast (pseudo-linear) integer
multiplication.  The final section gives a conclusion along with
some topics for future research.

\subsection*{Cost model}

The number of bits in the binary representation of an integer $a$
is given by $$ \lg a = \left\{ \begin{array}{ll} 1 & \mbox{if $a=0$}
\\ 1+ \lfloor \log_2 |a| \rfloor & \mbox{if $a>0$} \end{array}
\right. $$ Using standard integer arithmetic, $a$ and $b$ can be
multiplied in $O((\lg a) (\lg b))$ bit operations, and we can express
$a = qb + r$, with $0 \leq |r| < |b|$, in $O((\lg a/b) (\lg b))$
bit operations.  This complexity model was popularized by \citet{Collins}
and is sometimes called ``naive bit complexity'' (see, for example,
\citet{BachShallit}).

For an integer vector $v$, it will be convenient to define the
\emph{bitlength of $v$} to mean the bitlength of the largest entry
of $v$ in absolute value.

\section{Our approach}\label{sec:2}
 
In this section, we give a high level description of our approach
for computing the Hermite form $H\in\Znn$ of a nonsingular input
matrix $A \in \Z^{n \times n}$.  As previously mentioned, there is
a unimodular matrix $U \in \Z^{n \times n}$ such that $H=UA$.
Multiplying both sides of this equation on the right by $A^{-1}$
gives
\begin{equation} \mylabel{eq:new}
HA^{-1} = U.
\end{equation}
The basis of our approach is to recast the problem of computing
$H$, a unimodular row triangularization of $A$, into that of finding
a {\em minimal left denominator} of $A^{-1}$.  It follows from the
uniqueness of the Hermite form that $H$ can be defined to be the
matrix (in Hermite form) that clears the denominators of $A^{-1}$
under premultiplication and has minimal determinant (i.e., $\det H
= |\det A|$, since $\det U = \pm 1$).

To avoid working with fractions, define $A^* = sA^{-1}$, where $s
\in \Z_{>0}$ is minimal such that $sA^{-1}$ is integral.  Then~$H
A^{-1}  \in \Z^{n \times n}$ holds if and only if
$$
HA^* =  0_{n \times n} \bmod s.
$$
Unfortunately, $A^*$ requires $\Omega(n^3 (\log n + \log ||A||))$
bits to write down in the worst case, and by working with $A^*$
explicitly we do not know how to achieve our target complexity.
However, this approach allows us to bring the Smith form of $A$
into play and reduce the space requirements.

Let $S=\diag(s_1,\ldots,s_n =: s)$ be the Smith form of $A$, and
let $V,W \in\Z^{n \times n}$ be unimodular matrices satisfying $A
V = W S$. Then,
\[A^* \equiv_R V S^*\]
where $S^* = s S^{-1} \in \Z^{n \times n}$ and $\equiv_R$ denotes
right equivalence by unimodular matrices over $\Z$. Such an equivalence also
holds modulo $s$ for a matrix $M = \colmod(V, S)$. Here, $\colmod$
denotes working modulo columns: column $j$ of $M$ is equal to column
$j$ of $V$ reduced modulo $s_j$, $1\leq j \leq n$.  The matrix $M$
is called a {\em reduced Smith massager} of $A$.  The fact that
\begin{equation}\label{mult}
A^* \equiv_R M S^*\bmod s 
\end{equation}
then implies that $A^{-1}$ and $MS^{-1}$ have the same  minimal
left denominator in Hermite form, namely, for any $H \in \Z^{n
\times n}$, we have $HA^* = 0_{n \times n} \bmod s$ if and only if
$$HMS^* = 0_{n \times n} \bmod s.$$ This allows us to look for a
minimal left denominator in Hermite form for a matrix with total
size controlled by the Smith form $S$: the space required to store
$M$ is $O(n^2(\log n + \log ||A||))$ bits. Moreover, there is an
existing algorithm that can compute both $S$ and $M$ quickly.

The special form of the matrix $M S^{-1}$ and the uniqueness of
Hermite forms has a number of advantages for efficient computation.
First, by using an algorithm of \citet{PauderisStorjohann13}, we can
find a minimal triangular denominator for $M S^{-1}$, expressed as
a product of $n$ minimal Hermite denominators. While this does not
produce the Hermite form $H$ of $A$, the product of the diagonals
of these $n$ triangular matrices gives the diagonal entries of $H$.
We show that the overall cost of obtaining the diagonal entries of
$H$ from $M$ and $S$ is $O(n (\log \det S)^2)$ bit operations.  This
allows us to overcome one of the biggest issues in designing a fast
algorithm for the Hermite form in the worst case, that is, we now
know the bitlength of each of the columns of $H$.

Notice that finding $H^{-1}$ is equivalent to finding the Hermite
form, since $H$ is triangular. Indeed, let $H_j$ be equal to $I_n$
except with column $j$ equal to that of $H$, $1\leq j\leq n$. Then,
since both $H$ and $H^{-1}$ are upper triangular, 
there is a simple iterative scheme to go from $H^{-1}$ to $H$ shown
in Figure~\ref{figure1}. We remark that in the first line of the $j$-loop
in Figure~\ref{figure1}, the principal leading $(j-1) \times (j-1)$ submatrix of $\bar{H}$ will be $I_{j-1}$, and column $j$ of $\bar{H}$ will have the form
$$
-\frac{1}{h_{j}} \left [ \begin{array}{c} h_{1j} \\
 \vdots \\
 h_{j-1,j}\\
 -1 \\
\phantom{0} \\
\phantom{0} \\
\end{array} \right],
$$
from which $H_j$ is easily recovered.
\begin{figure}[H] \centering
\fbox{
\begin{minipage}{.5\textwidth}
$\bar{H} := H^{-1}$\\
\For $j=1$ \To $n$ \Do\\
\ind{1} Recover $H_j $ from column $j$ of $\bar{H}$\\
\ind{1} $\bar{H} := H_j \bar{H}$\\
\Od\\
\Return $H_n H_{n-1} \cdots H_1$
\end{minipage}
}
\caption{\label{figure1} Hermite form $H$ from $H^{-1}$}
\end{figure}

However, we can do better. The same process can work without having
$H^{-1}$ exactly. Since there exists a unimodular matrix $U$ such
that $U A = H$, then by letting $H^*=  s H^{-1}$, we can write this
as the dual problem
$$
A^* U^* = H^*
$$
with $U^*$ unimodular. Since $H^*$ is an upper triangular integer
matrix, we later show that we can replace $H^*$ by any upper
triangular matrix having the same diagonal entries and which is
right equivalent to $H^*$ modulo $s$. The natural form for such a
matrix is the column Howell form $T$, a type of column reduced
echelon matrix over the residue class ring $\Z/(s)$.

This implies that we can construct the Hermite form from any column
Howell form $T$ that is right equivalent to $H^{\ast}$ over $\Z/(s)$.
This allows us to replace $H^{-1}$ by $T$ in the procedure shown
in Figure~1, and to work modulo $s$, and thus avoid explicit
fractions.


\begin{example} \label{exbasis}
Let
$$
A = \left[ \begin {array}{cccc} -13&10&-20&27\\ {}27&30&
15&30\\ {}0&15&15&6\\ {}-21&0&-15&9
\end {array} \right]. 
$$
\cite[Algorithm \texttt{SmithMassager}]{BirmpilisLabahnStorjohann20}
gives the Smith form $S = \diag(s_1,s_2,s_3,s_4 =: s) =  \diag(1,3,15,105 =:s)$ 
and a Smith massager $M$ for $A$ as
$$
M := \left[ \begin {array}{cccc} 0&2&0&55\\ {}0&0&7&32
\\ {}0&2&2&41\\ {}0&2&10&10
\end {array} \right].
$$
Let $S^* = s S^{-1}$.  By computing a minimal
denominator of $M$ that is expressed as the product of four
upper triangular matrices, we determine the diagonal elements 
of $H$ to be $h_1,h_2,h_3,h_4 = 1, 15, 15, 21$.
A Howell form of 
$M S^*  \in \Z/(s)^{n \times n}$ 
with the appropriate diagonal elements of $H^*$ is then given by 
\[
T  =    \left[ \begin {array}{cccc} 105&70&70&45\\ &7&0&
100\\ &&7&101\\ &&&5
\end {array} \right]
=
\left [ \begin{array}{cccc} \frac{s}{h_{1}}  & 70 & 70 & 45 \\
 & \frac{s}{h_{2}}  & 7 & 100 \\
 & & \frac{s}{h_{3}}  & 101 \\
 & & & \frac{s}{h_{4}} \end{array} \right ] .
 \]
Section~\ref{sec:dual} shows that column $j$ of $(H_{j-1}\cdots H_1)T$
is congruent modulo $s$ to
$$
-\frac{s}{h_{j}} \left [ \begin{array}{c} h_{1j} \\
 \vdots \\
 h_{j-1,j}\\
 -1 \\
\phantom{0} \\
\phantom{0} \\
\end{array} \right] \bmod s,
$$
from which $H_j$ is easily recovered.
Using $T$ instead of $H^{-1}$ in the procedure of Figure~\ref{figure1}
and working modulo $105$ then gives
\begin{eqnarray*}
j=1: & & 
H_1 = \left [ \begin{array}{cccc}
1 & \phantom{0} & \phantom{0} & \phantom{0} \\
 & 1 & & \\ 
& & 1 & \\
& & &  1 \end{array} \right]
\mbox{~~~and~~~}
H_1 T =  T \\
j=2: & & H_2 = \left [ \begin{array}{cccc}
1 & 5 & \phantom{0} & \phantom{0} \\
 & 15 & & \\
& & 1 & \\
 & & & 1 \end{array} \right ]
\mbox{~~and~~}
 H_2 H_1 T = \left[ \begin {array}{cccc} 
 \phantom{0}&\phantom{0}&70&20\\ 
 &&0&30\\
 &&7&101\\
 && &5\end {array} \right] \\
j=3: & & H_3  = \left [ \begin{array}{cccc}
1 & \phantom{0} & 5 & \phantom{0} \\
 & 1 &0 & \\
& & 15 & \\
 & & & 1 \end{array} \right ]
 \mbox{~~and~~}
H_3 H_2 H_1  T =  \left[ \begin {array}{cccc} 
\phantom{0}&\phantom{0}&\phantom{0}&0\\
&&&30\\
 &&&45\\
&&&5\end {array}
 \right]  \\
j=4: & & H_4  = \left [ \begin{array}{cccc}
1 & \phantom{0} &   & 0 \\
 &  1  &  &15 \\
& & 1 & 12 \\
 & & & 21 \end{array} \right ] 
 \mbox{~~and~~}
H_4 H_3 H_2 H_1  T  = 0_{4 \times 4}
\end{eqnarray*}
with the Hermite basis given by 
$$ H_4 H_3 H_2 H_1 = 
\left[ \begin {array}{cccc} 1&5&5&0\\ &15&0&15
\\ &&15&12\\ &&&21
\end {array} \right].
$$
\end{example}

Unfortunately, as mentioned previously for $A^*$, the size of a
Howell form $T$ can be $\Omega(n^3 (\log n + \log ||A||))$ bits in
the worst case, and by working with $T$ directly we do not know how
to achieve our target complexity.  Instead, we compute a matrix
$\tilde{U}$ satisfying
\begin{equation}\label{howell}
T =  M S^* \tilde{U} \bmod s,
\end{equation}
where $S^* = s S^{-1}$. Furthermore, in the same way that we could
assume that $M$ was column reduced modulo $S$, we may assume that
$\tilde{U}$ is row reduced modulo $S$.  The number of bits required
to represent all three matrices on the right hand side of (\ref{howell})
is then $O(n^2(\log n  + \log ||A||))$.

The matrix $\tilde{U}$ can be found by a simple modification of
Howell's original algorithm for determining his normal form. In
order to then find column $j$ of $H_{j-1}H_{j-2} \cdots H_1T$, we
need to determine $$(v_1, \ldots, v_n) = (- h_{1j}, \ldots ,
-h_{j-1,j}, 1, 0, \ldots , 0)$$  satisfying the equation
$$
\frac{s}{h_j} 
\overbrace{
\left [ \begin{array}{c} v_1\\ 
 \vdots \\
v_n
\end{array} \right]}^{\textstyle v}  \equiv
\overbrace{
\left [\begin{array}{ccc}
m_{11} & \cdots & m_{1n}\\   
\vdots &  \ddots & \vdots\\  
m_{n1} & \cdots & m_{nn} 
 \end{array}
 \right ]}^{\textstyle \tilde{M}}
\overbrace{
\left [ \begin{array}{ccc}
\frac{s}{s_1} & & \\
  & \ddots & \\
  & & \frac{s}{s_n} \end{array} \right ]}^{\textstyle S^*}
\overbrace{
\left [ \begin{array}{c} u_{1} \\
\vdots \\
u_{n} \end{array} \right ]}^{\textstyle u}
\bmod s,
$$
where $\tilde{M} = \colmod(H_{j-1}H_{j-2} \cdots H_1M,S)$, and $u$
is column $j$ of $\tilde{U}$.  To compute this matrix vector product
with the intermediate scaling matrix $S^*$, we take advantage of
the fact that $\tilde{M}$ and $u$ are column and row reduced modulo
$S$, respectively.  We also exploit the fact that we have precomputed
the diagonal entries of the Hermite form, and thus know the scaling
factor $s/h_j$.  This allows us to achieve a cost estimate for computing
column $j$ that depends on $\log ||v|| \leq \log h_j$ instead of
$\log s$.

Ultimately, our algorithm computes the Hermite form in a column by
column basis, with the  computation for column $j$ requiring
$$O(n^2(\log n + \log ||A||)(\log h_{j} + \log n + \log ||A||))$$
bit operations.  Adding over all iterations $1\leq j\leq n$ then
gives the total cost of our algorithm.

\section{Mathematical preliminaries} \mylabel{sec:3}

In this section, we discuss some basic mathematical building blocks
used in our Hermite form algorithm. These include minimal denominators
of rational matrices, Smith massagers of $A$, and data structures
for the compact representation of Hermite forms and Smith massagers.

\subsection{Minimal denominators} \mylabel{subsec:2.1}

\begin{definition}  \label{def:minden}
A \emph{(left) denominator} of a matrix $B \in \Q^{n \times m}$ is
a matrix $H \in \Z^{n \times n}$ whose rows are  in the lattice
\begin{equation} \mylabel{lat} \{v \in \Z^{1 \times n} \mid vB \in
\Z^{1 \times m} \}.  \end{equation} $H$ is a \emph{minimal denominator}
if the rows of $H$ are a basis for~(\ref{lat}).  The \emph{minimal
Hermite denominator} is the unique minimal denominator that is in
Hermite form.
\end{definition}

For example, a minimal denominator of a zero matrix with $n$ rows
is $I_n$, while a minimal denominator of $A^{-1}$ is $A$ itself.
The minimal Hermite denominator of $A^{-1}$ is $H$, the Hermite
form of $A$. Similarly, if $A^{-1}$ and $B^{-1}$ are right equivalent
then they have the same minimal Hermite denominator.
\begin{example} \mylabel{ex:nontriv}
The minimal Hermite denominator of 
$$
\frac{1}{16}\left [\begin{array}{c} 1 \\
 4 \\
 4\\
8 \end{array} \right ] \in \Q^{4 \times 1}
$$
is 
$$
H = \left [ \begin{array}{cccc} 4 & 0 & 1 & 1 \\
 & 1 & 1 & 1 \\
 & & 2 & 1 \\
 & & & 2 \end{array} \right ] \in \Z^{4 \times 4}.
$$
This shows that a rational matrix with $n$ rows but with fewer than $n$ 
columns can encode a nontrivial $n \times n$ Hermite form.
\end{example}

The next  two lemmas follow from the fact that a minimal denominator
is a basis for the lattice shown in~(\ref{lat}).
\begin{lemma} \mylabel{denequiv} 
Any two minimal denominators for a $B \in \Q^{n \times m}$ are left
equivalent over $\Z$.
\end{lemma}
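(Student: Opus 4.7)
The plan is to use the standard argument that two bases of the same lattice differ by a unimodular change of basis, after first verifying that the lattice in question has full rank~$n$ (so that any basis is encoded by a nonsingular $n \times n$ integer matrix).

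First I would show that the lattice
\[
L = \{v \in \Z^{1 \times n} \mid vB \in \Z^{1 \times m}\}
\]
has $\Z$-rank exactly $n$. Pick any positive integer $d$ that clears the denominators of $B$, so that $dB \in \Z^{n \times m}$. Then $d\, e_i \in L$ for each standard basis row $e_i \in \Z^{1 \times n}$, so $L$ contains the full-rank sublattice $d\,\Z^{1 \times n}$ of $\Z^{1 \times n}$. In particular any basis of $L$ consists of exactly $n$ linearly independent vectors, so any minimal denominator $H \in \Z^{n \times n}$ is nonsingular.

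Next, let $H_1, H_2 \in \Z^{n \times n}$ both be minimal denominators of $B$. Every row of $H_2$ lies in $L$ by Definition~\ref{def:minden}, and the rows of $H_1$ are a $\Z$-basis of $L$, so each row of $H_2$ is a unique integer linear combination of the rows of $H_1$. Collecting these combinations gives a matrix $U \in \Z^{n \times n}$ with $H_2 = U H_1$. By the symmetric argument, there is some $U' \in \Z^{n \times n}$ with $H_1 = U' H_2 = U' U H_1$. Since $H_1$ is nonsingular over $\Q$, we may cancel it on the right to conclude $U' U = I_n$, so $U$ has an integer inverse and is therefore unimodular. Hence $H_2 = U H_1$ exhibits $H_1$ and $H_2$ as left equivalent over $\Z$.

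There is no real obstacle: the only substantive input is the observation that $L$ is a full-rank sublattice of $\Z^{1 \times n}$, which makes every minimal denominator nonsingular and licenses the cancellation step. Everything else is the textbook fact that two $\Z$-bases of a full-rank lattice in $\Z^{1 \times n}$ are related by a unimodular transformation.
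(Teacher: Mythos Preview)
Your argument is correct and is exactly the approach the paper has in mind: the paper does not give a detailed proof but simply remarks that the lemma ``follows from the fact that a minimal denominator is a basis for the lattice shown in~(\ref{lat}),'' and your proof just spells out this standard basis-change argument, including the verification that the lattice has full rank so that the cancellation step is justified.
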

\begin{lemma}
The determinant of a minimal denominator for a $B \in \Q^{n \times m}$ 
divides the determinant of any other denominator of $B$.
\end{lemma}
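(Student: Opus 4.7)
The plan is to reduce everything to the standard fact that if a set of vectors forms a $\Z$-basis of a lattice $L$, then every element of $L$ is a unique integer linear combination of those vectors. Write $L = \{v \in \Z^{1\times n} \mid vB \in \Z^{1\times m}\}$ for the lattice in the definition, and let $H$ be a minimal denominator of $B$, so the rows of $H$ form a $\Z$-basis of $L$. Let $H'$ be any other denominator; its rows are by definition also in $L$.

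First I would observe that $\det H \neq 0$. Indeed, if $d \in \Z_{>0}$ is a common denominator of all entries of $B$, then $d I_n$ is a denominator, so $L$ contains $d\Z^{1\times n}$ and therefore has full rank $n$. Any $\Z$-basis of a rank-$n$ sublattice of $\Z^{1\times n}$ forms the rows of a nonsingular integer matrix.

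Next, because the rows of $H'$ lie in $L$ and the rows of $H$ are a $\Z$-basis of $L$, each row of $H'$ is a unique $\Z$-linear combination of the rows of $H$. Collecting coefficients gives a matrix $U \in \Z^{n \times n}$ with $H' = UH$. Taking determinants yields
\[
\det H' \;=\; (\det U)(\det H),
\]
and since $\det U \in \Z$, we conclude that $\det H$ divides $\det H'$ in $\Z$.

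There is essentially no obstacle here beyond checking that the lattice $L$ has full rank (so that the notion of a minimal denominator in the sense of Definition~\ref{def:minden} is nonvacuous and forces $\det H \neq 0$); once that is in place, the divisibility is immediate from the universal property of a lattice basis. The same argument, incidentally, reproves Lemma~\ref{denequiv}: if $H'$ is itself a minimal denominator then $U$ is invertible over $\Z$, i.e.\ unimodular.
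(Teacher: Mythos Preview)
Your proof is correct and follows exactly the approach the paper indicates: the paper states (without further detail) that this lemma follows from the fact that a minimal denominator is a basis for the lattice in~(\ref{lat}), and you have simply spelled out that argument, including the observation that $L$ has full rank so $\det H \neq 0$.
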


Important for our work is that minimal denominators can be computed
in parts as shown by the following lemma.
\begin{lemma} \mylabel{lem:recurse}
Decompose $B \in \Q^{n \times m}$ arbitrarily 
as $B = \left [ \begin{array}{c|c} B_1 & B_2 \end{array} \right ]$.
If $H_1$ is a minimal denominator
of $B_1$, and $H_2$ is a minimal denominator of $H_1 B_2$, then $H_2 H_1$
is a minimal denominator of $B$.
\end{lemma}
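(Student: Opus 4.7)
The plan is to verify the two defining properties of a minimal denominator in Definition~\ref{def:minden} for the matrix $H_2 H_1$: that it is a denominator of $B$ (its rows lie in the lattice~(\ref{lat})) and that it is minimal (every lattice vector is an integer $\Z$-linear combination of the rows of $H_2 H_1$). Both properties will follow from chasing the corresponding properties for $H_1$ and $H_2$ through the block decomposition $B = [B_1 \mid B_2]$.

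For the first property, I would compute $(H_2 H_1) B = \bigl[\, H_2 (H_1 B_1) \,\bigm|\, H_2 (H_1 B_2) \,\bigr]$. The left block is integral because $H_1$ is a denominator of $B_1$ (so $H_1 B_1 \in \Z^{n \times m_1}$) and $H_2 \in \Z^{n \times n}$, hence the product is integral. The right block is integral precisely because $H_2$ is a denominator of $H_1 B_2$. This shows that every row of $H_2 H_1$ lies in the lattice~(\ref{lat}) associated with $B$, so $H_2 H_1$ is a denominator of $B$.

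For minimality, take any $v \in \Z^{1 \times n}$ with $vB \in \Z^{1 \times m}$, and lift it through the two stages. From $vB_1 \in \Z^{1 \times m_1}$ and the minimality of $H_1$ as a denominator of $B_1$, there exists a unique $w \in \Z^{1 \times n}$ with $v = w H_1$ (here I am using that $H_1$ is a basis of the lattice, hence nonsingular, so the integer coefficient vector $w$ is well-defined). Then $vB_2 = w (H_1 B_2) \in \Z^{1 \times m_2}$, so $w$ itself is a lattice vector for $H_1 B_2$. By minimality of $H_2$, there exists $u \in \Z^{1 \times n}$ with $w = u H_2$. Combining, $v = wH_1 = u (H_2 H_1)$, exhibiting $v$ as an integer combination of the rows of $H_2 H_1$.

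There is no real obstacle in this argument; the main point to be careful about is that Definition~\ref{def:minden} requires $H_1$ and $H_2$ to be square $n \times n$ matrices and that minimality means \emph{basis} of the rank-$n$ lattice, so the lift $v = w H_1$ automatically yields an integer $w$ rather than merely a rational one. Nothing in the argument requires $H_1$ or $H_2$ to be in Hermite form or even triangular, so the conclusion holds for arbitrary minimal denominators; by Lemma~\ref{denequiv}, $H_2 H_1$ will then be left equivalent over $\Z$ to the minimal Hermite denominator of $B$.
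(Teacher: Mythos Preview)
Your proof is correct. You verify directly that the rows of $H_2H_1$ lie in the lattice for $B$, and then lift an arbitrary lattice vector $v$ first through the basis $H_1$ (using minimality for $B_1$) and then through $H_2$ (using minimality for $H_1B_2$) to write $v=u(H_2H_1)$ with $u$ integral; this exhibits $H_2H_1$ as a basis.

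The paper argues by contradiction instead: after noting $H_2H_1$ is a denominator, it supposes $H_2H_1 = WH$ for a minimal denominator $H$ of $B$ with $W$ non-unimodular, and then derives a contradiction with $H$ already being a denominator of $B_2$. Your direct two-step lifting is more transparent and makes explicit exactly where each minimality hypothesis is used, whereas the paper's version is terser but asks the reader to unpack the interplay between $W$, $H$, and the minimality of $H_2$. Both routes rely on the same underlying fact that a minimal denominator is a \emph{basis} (not merely a generating set), which you correctly flag as the key point.
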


\begin{proof}
It is evident that $H_2H_1$ is a denominator of $B$, and hence, we
only need to show that it is minimal. If it is not a minimal
denominator, then there exist matrices $H, W\in\Znn$ such that $H$
is a minimal denominator, $H_2 H_1 = W H$ and $W$ is not unimodular.

However, since $H_2=WHH_1^{-1}$ is a minimal denominator of $H_1B_2$,
then $WH$ must be a minimal denominator of $B_2$. This is a
contradiction since $H$ is a denominator of $B_2$ and $W$ is not
unimodular.
\end{proof}

Finally, recall that any rational number can be written as an integer
and a proper fraction.  For example,
\begin{equation} \label{eq:frac}
{\frac{9622976468279041913}{21341}}  = 450914974381661 + {\frac{14512}{21341}},
\end{equation}
where $450914974381661$ is the quotient and 14512 is the remainder
of the numerator with respect to the denominator.
We see that, for any rational matrix $B$, if $s$ is a positive integer
such that $sB$ is integral, then
the proper fraction $\Rem(sB,s)/s$ and $B$ have the same
denominators.  Here, $\Rem$ denotes the positive remainder.
Thus, instead of working with the rational matrix $B$, 
we can work with the matrix $\Rem(sB,s)$ over $\Z/(s) = \{0,1,\ldots,s-1\}$.
\begin{lemma} \mylabel{lem:1}
For  $B \in \Q^{n \times m}$ and any $s \in \Z_{>0}$ such that $sB$
is integral, we have: $\{v \in \Z^{1 \times n} \mid  vB \in \Z^{1
\times m} \} = \{ v \in \Z^{1 \times n} \mid v(sB) \equiv 0_{1
\times m} \bmod s \}$.
\end{lemma}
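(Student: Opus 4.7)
The plan is to prove set equality by showing the two inclusions, both of which reduce to the scalar identity $v(sB) = s(vB)$, valid because $s \in \Z_{>0}$ commutes with matrix multiplication.

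For the forward inclusion, suppose $v \in \Z^{1 \times n}$ satisfies $vB \in \Z^{1 \times m}$. Then $v(sB) = s(vB)$ is an integer vector all of whose entries are multiples of $s$, so $v(sB) \equiv 0_{1 \times m} \bmod s$, as required.

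For the reverse inclusion, suppose $v \in \Z^{1 \times n}$ satisfies $v(sB) \equiv 0_{1 \times m} \bmod s$. By hypothesis $sB \in \Z^{n \times m}$, so $v(sB) \in \Z^{1 \times m}$, and the congruence $v(sB) \equiv 0 \bmod s$ then says that every entry of this integer vector is divisible by $s$. Hence there exists $w \in \Z^{1 \times m}$ with $v(sB) = sw$. Using $v(sB) = s(vB)$ and cancelling the nonzero integer $s$ from each entry gives $vB = w \in \Z^{1 \times m}$, completing the proof. There is no real obstacle here; the statement is essentially just the observation that, over $\Q$, divisibility of $v(sB)$ by $s$ is the same as integrality of $vB$.
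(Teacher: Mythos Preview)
Your proof is correct. The paper does not supply an explicit proof of this lemma; it is stated immediately after an informal remark that the proper fraction $\Rem(sB,s)/s$ and $B$ have the same denominators, and is treated as self-evident. Your direct verification of both inclusions via the identity $v(sB)=s(vB)$ is precisely the natural argument underlying that remark.
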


\begin{remark} \mylabel{rem:U}
If $U \in \Z/(s)^{m \times m}$ satisfies $\det U \perp s$, then $H$ 
is a (minimal) denominator of $B$ if and only if $H$ is a (minimal)
denominator of $BU$. Here, $\perp$ denotes two integers being relatively prime. 
\end{remark}

\subsection{Smith massagers}

Important for our work is the notion of a Smith massager of $A$.

\begin{definition}[\protect{\cite[Definition~1]{BirmpilisLabahnStorjohann21}}]
 \label{def:sm}
Let $A\in\Znn$ be a nonsingular integer matrix with Smith form $S$.
A matrix $M\in\Znn$ is a \emph{Smith massager} for $A$ if
\begin{itemize}
\item[(i)] it satisfies that
\begin{equation} \label{eq:AM}
AM\, \equiv\, 0\, \colmod S, and
\end{equation}
\item[(ii)] there exists a matrix $\hat{W}\in\Znn$ such that
\begin{equation}\label{eq:UM}
\hat{W}   M\, \equiv\, I_n\, \colmod S.
\end{equation}
\end{itemize}
\end{definition}
It follows directly from Definition~\ref{def:sm} that if $M$ is a
Smith massager for $A$, then $\colmod(M,S)$ is also a Smith massager
for $A$.  If $M= \colmod(M,S)$, then $M$ is called a \emph{reduced
Smith massager}.  Compared to $A^{-1}$, a reduced Smith massager
$M$ requires only $O(n^2(\log n + \log ||A||))$ space to store.

The key feature of a Smith massager that we exploit in this paper
is the following.

\begin{lemma} \label{lem:key}
Let $A \in \Z^{n \times n}$ be nonsingular with Smith
form $A$.  Any Smith massager $M \in \Z^{n \times n}$ for $A$ has the property
that $MS^{-1}$ has minimal denominator $A$.
\end{lemma}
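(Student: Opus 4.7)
The plan is to identify the lattice of denominators of $MS^{-1}$ explicitly and compare its covolume to $|\det A|$.

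First, I would verify that $A$ is indeed a denominator of $MS^{-1}$. By Definition~\ref{def:sm}(i), $AM \equiv 0 \colmod S$, which means $AM = LS$ for some $L \in \Z^{n \times n}$, so $A \cdot MS^{-1} = L \in \Z^{n \times n}$. Hence the row lattice $\Lambda_A := \Z^{1\times n} A$ is contained in the denominator lattice
\[
\Lambda := \{v \in \Z^{1\times n} \mid vMS^{-1} \in \Z^{1\times n}\} = \{v \in \Z^{1\times n} \mid vM \equiv 0 \colmod S\},
\]
where the second equality uses that $(vM)_j / s_j$ is an integer iff $s_j \mid (vM)_j$.

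Next, I would compute the covolume of $\Lambda$ inside $\Z^{1\times n}$. Consider the $\Z$-linear map
\[
\phi \colon \Z^{1\times n} \longrightarrow \Z/(s_1) \oplus \Z/(s_2) \oplus \cdots \oplus \Z/(s_n), \qquad \phi(v) = vM \bmod S,
\]
where the $j$-th coordinate of $\phi(v)$ is $(vM)_j \bmod s_j$. By construction $\ker \phi = \Lambda$. The key step is to show that $\phi$ is surjective: this is where Definition~\ref{def:sm}(ii) is used. The condition $\hat W M \equiv I_n \colmod S$ says that $\phi$ maps the $i$-th row of $\hat W$ to the image of the $i$-th standard basis vector of $\Z^n$ in the target, and these images generate the target group. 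Surjectivity then gives $[\Z^{1\times n} : \Lambda] = s_1 s_2 \cdots s_n = \det S$.

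Finally, I would close the argument by comparing covolumes. The sublattice $\Lambda_A \subseteq \Lambda$ has covolume $|\det A|$, and $|\det A| = \det S$ because $S$ is the Smith form of $A$. Therefore $\Lambda_A$ and $\Lambda$ have the same index in $\Z^{1\times n}$, forcing $\Lambda_A = \Lambda$. So the rows of $A$ already form a basis of the denominator lattice, i.e., $A$ is a minimal denominator of $MS^{-1}$, as required. The only delicate point is the surjectivity of $\phi$, but once one notices that $\hat W M \equiv I_n \colmod S$ directly exhibits preimages of the coordinate generators, the rest is lattice-index bookkeeping.
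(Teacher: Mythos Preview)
Your argument is correct. You verify that $A$ is a denominator via property~(i), then use property~(ii) to show the map $\phi$ is surjective, conclude that the denominator lattice $\Lambda$ has index $\det S = |\det A|$ in $\Z^{1\times n}$, and finish by an index comparison with $\Lambda_A$. Each step is sound; in particular, the surjectivity argument is exactly right: the rows of $\hat W$ hit the coordinate generators of $\bigoplus_j \Z/(s_j)$ because $\hat W M \equiv I_n \colmod S$.

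The paper takes a different route. Rather than computing the index of $\Lambda$ directly, it quotes \cite[Theorem~4]{BirmpilisLabahnStorjohann21}, which asserts that the lattice $\{v \mid vMS^{-1} \in \Z^{1\times n}\}$ coincides with $\{v \mid vA^{-1} \in \Z^{1\times n}\}$; since $A$ is trivially a minimal denominator of $A^{-1}$, the result follows. So the paper's proof is a one-line citation, while yours is a self-contained lattice-index computation. Your approach has the advantage of not relying on an external reference and of making transparent exactly how each of the two defining properties of a Smith massager is used (property~(i) for $\Lambda_A \subseteq \Lambda$, property~(ii) for $[\Z^{1\times n}:\Lambda] = \det S$). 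The cited theorem presumably establishes the same lattice equality by a similar mechanism, so your argument can be read as an in-place proof of the special case needed here.
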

The lemma follows directly 
from Definition~\ref{def:minden} combined with
\cite[Theorem~4]{BirmpilisLabahnStorjohann21} which shows that
the lattices $\{ v \in \Z^{1 \times n} \mid vA^{-1} \in \Z^{1 \times n} \}$
and $\{ v \in \Z^{1\times n} 
\mid vMS^{-1} \in \Z^{1 \times n} \}$ are identical.
Instead of working with the rational matrix $MS^{-1}$, we can avoid fractions
using Lemma~\ref{lem:1},
which shows that, for any $s$ that is a positive multiple of the
largest invariant factor of $A$,  the lattices  $\{ v \in \Z^{1
\times n} \mid vMS^{-1} \in \Z^{1 \times n} \}$ and $\{ v \in \Z^{1
\times n} \mid v M(sS^{-1}) \bmod s \}$ are identical. In particular,
this implies that $sA^{-1}\equiv_R M(sS^{-1})\bmod s$.

\begin{example}  \mylabel{ex:denom}
The input matrix 
$$
A =
 \left[ \begin {array}{cccc} -8&3&-1&0\\ 0&1&1&-1
\\ 4&-2&-1&-1\\ 4&-1&0&0 \end {array} \right] \in \Z^{4 \times 4}
$$
has Smith form $S=\diag(1,1,1,16 =: s)$ and 
$$
sA^{-1} =
 \left[ \begin {array}{cccc} 2&1&-1&9\\ 8&4&-4&20
\\ -8&4&-4&-12\\ 0&-8&-8&8 \end {array} \right].
$$
A reduced Smith massager for $A$ is given by 
$$
M = \left [ \begin{array}{cccc} \phantom{0} & \phantom{0} & \phantom{0} & 1 \\
 & & & 4 \\
 & & & 4 \\
 & & & 8 \end{array} \right ] \in \Z^{4 \times 4}.
$$
The Hermite form of $A$ is thus the Hermite denominator of 
the last column of $M$ divided by $s$.
This form is given in Example~\ref{ex:nontriv}.
\end{example}

\begin{remark} We say that a Smith form diagonal entry is \emph{trivial}
if it is equal to 1.  It is easy to see that the  number of nonzero
columns in a reduced Smith massager for $A$ is equal to the number of
nontrivial invariant factors of $A$.
\end{remark}

\subsection{Compact representations}

In the naive cost model, the integers 0 and 1 both require one bit
to store in their binary representation. For example, the total number of bits
required to store a nonsingular Hermite form $H \in \Z^{n \times
n}$ as a dense $n \times n$ matrix is $O(n^2 + n \log \det H)$ bits,
even if $\log \det H \ll n$.

We can save space and simplify the derivation of running time 
estimates by adopting a data structure that avoids explicitly
storing integers that are known \emph{a priori} to be zero, and by
avoiding integer multiplications where one of the operands is known
\emph{a priori} to be equal to one.
For example, 
we can avoid storing 
\emph{trivial} column of $H$ (corresponding to diagonal entry
$h_i=1$) or trivial columns of reduced Smith massagers (where $s_i = 1$).

In the proof of the following lemma, recall that we define the
bitlength of a vector to be the bitlength of the largest entry in
absolute value, as opposed to the sum of the bitlengths of the
entries.

\begin{lemma} \mylabel{remarkH}
Let $H \in \Z^{n \times n}$ be in Hermite form. Then $H$ can be
represented using $O(n \log \det H)$ bits by storing the submatrix
comprised of its nontrivial columns, together with the 
list of the indices of the nontrivial columns.
\end{lemma}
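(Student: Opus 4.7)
The plan is to show that the storage cost of the nontrivial columns, together with bookkeeping for which column indices are nontrivial, sums to $O(n\log\det H)$. Throughout, call column $j$ \emph{trivial} when $h_j=1$ and \emph{nontrivial} when $h_j>1$. Note that if $h_j=1$, the Hermite form conditions force the off-diagonal entries $h_{1j},\ldots,h_{j-1,j}$ (each in $[0,h_j)=\{0\}$) to vanish, so column $j$ is the $j$-th column of $I_n$; once we know the index set of nontrivial columns, the trivial columns are fully determined and need not be stored.

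Next I would bound the cost of storing a single nontrivial column $j$. Every entry $h_{ij}$ in that column satisfies $0\le h_{ij}\le h_j$, so each entry has bitlength $O(\log h_j)$; there are at most $n$ such entries, giving a per-column storage cost of $O(n\log h_j)$ bits. Summing over the set $J=\{j : h_j>1\}$ of nontrivial indices,
\[
\sum_{j\in J} O(n\log h_j)
\;=\; O\!\left(n\log\prod_{j\in J}h_j\right)
\;=\; O\!\left(n\log\prod_{j=1}^{n}h_j\right)
\;=\; O(n\log\det H),
\]
using that trivial columns contribute a factor of $1$ to the product and that $\det H=h_1h_2\cdots h_n$.

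It remains to account for the list of nontrivial column indices. The simplest choice is a length-$n$ bit vector marking the nontrivial positions, which uses $n$ bits. Whenever $J\neq\emptyset$ we have $\det H\ge 2$ and hence $\log\det H\ge 1$, so $n\le n\log\det H$ and the bit-vector cost is absorbed into the claimed bound. If $J=\emptyset$, then $H=I_n$ and a constant-size flag suffices. Combining the two contributions yields the stated $O(n\log\det H)$ bit estimate.

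The only mildly delicate point — the one I would flag as the main obstacle — is reconciling the seemingly $\log n$-dependent cost of recording column indices with a bound that depends only on $\log\det H$. The observation that any nontrivial column already forces $\det H\ge 2$, combined with using a bit vector rather than a list of $\log n$-bit indices, resolves this cleanly and keeps the bound tight.
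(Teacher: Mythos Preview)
Your proof is correct and follows essentially the same approach as the paper: bound each nontrivial column's storage by $O(n\log h_j)$ (since every entry is at most $h_j$), then sum over nontrivial columns to get $O(n\log\det H)$. The paper's proof is terser and does not explicitly account for the index list at all, whereas you handle this with the bit-vector argument; that extra care is fine and does not change the substance.
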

\begin{proof}
Entries in column $i$ of $H$ have magnitude bounded by the diagonal
entry $h_{i}$ of column $i$. The sum of the bitlengths of the 
\emph{nontrivial} columns of $H$ is bounded by 
\begin{eqnarray*}\mylabel{sumbit}
\sum_{\substack{i=1 \\h_i\neq 1}}^n \lg h_i &  
  \leq  & \sum_{\substack{i=1 \\h_i\neq 1}}^n (1 + \log h_i)
   \leq   \sum_{\substack{i=1\\h_i\neq 1}}^n (2\log h_i)
   =  2 \log \det H.
\end{eqnarray*}
\end{proof}
A statement similar to Lemma~\ref{remarkH} also holds for reduced Smith massagers.
\begin{lemma} \mylabel{remarkM}
Let $M \in \Z^{n \times n}$ satisfy $M =  \colmod(M, S)$ where $S
\in \Z^{n \times n}$ is a nonsingular Smith form. Then $M$ can be
represented using $O(n \log \det S)$ bits by storing only the
nontrivial columns.
\end{lemma}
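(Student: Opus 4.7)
The plan is to follow the same template as the proof of Lemma~\ref{remarkH}, replacing the role of the diagonal entry $h_i$ of a Hermite form with the diagonal entry $s_j$ of the Smith form $S$. The key structural fact is that, because $M = \colmod(M,S)$, every entry in column $j$ of $M$ is reduced modulo $s_j$, so its absolute value is strictly less than $s_j$. In particular, whenever $s_j = 1$, column $j$ is identically zero, which justifies discarding the trivial columns without losing any information.

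Concretely, I would first observe that each nontrivial column $j$ of $M$ has $n$ integer entries, each of bitlength at most $\lg s_j \leq 1 + \log s_j \leq 2 \log s_j$ (using $s_j \geq 2$). The total number of bits needed to store the nontrivial columns is therefore bounded by
\[
\sum_{\substack{j=1 \\ s_j \neq 1}}^{n} n \cdot \lg s_j \;\leq\; 2n \sum_{\substack{j=1 \\ s_j \neq 1}}^{n} \log s_j \;=\; 2n \log \prod_{\substack{j=1 \\ s_j \neq 1}}^{n} s_j \;=\; 2n \log \det S,
\]
which is $O(n \log \det S)$. The auxiliary list of indices of nontrivial columns can be encoded, for example, as an $n$-bit bitmap, contributing $O(n)$ additional bits; this is absorbed into the bound whenever there is at least one nontrivial column (so $\det S \geq 2$ and $\log \det S \geq 1$), and in the remaining degenerate case $M$ is the zero matrix and requires no storage at all.

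There is no real obstacle here: once one imports the ``bitlength of a vector equals bitlength of its largest entry'' convention used in Lemma~\ref{remarkH}, the argument is a direct translation, with the only genuine change being that columns of $M$ are dense (up to $n$ entries bounded by $s_j$) rather than upper triangular, which is exactly compensated by the absence of the extra factor of $n$ that would be hidden in $\sum_i i \cdot \lg h_i$ for the Hermite case.
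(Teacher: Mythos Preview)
Your proof is correct and follows exactly the template the paper intends: the paper does not give an explicit proof of this lemma, merely noting that ``a statement similar to Lemma~\ref{remarkH} also holds,'' and you have correctly filled in the parallel argument with $s_j$ in place of $h_i$. The closing remark about ``compensation'' is a bit muddled (both lemmas use the same crude bound of $n$ entries per column, so no compensation is actually happening), but this is side commentary and does not affect the validity of the proof itself.
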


\section{Computational preliminaries} \mylabel{sec:4}

In this section we define some computational tasks which will be
used later in the paper, and derive upper bounds on their complexity.
We also summarize in Subsection~\ref{ssec:prev} two results which
we need from the literature.

We consider first the computation of the remainder modulo $Y$ of
the product of two integers. Here, $b \in \Z/(Y)$ implicitly means
$b \in [0,Y)$.

\begin{lemma}  \mylabel{lem:modmult}
Let $a \in \Z$ and $b \in \Z/(Y)$.
If  $~\lg a \leq D$, then
$\Rem(ab,Y)$ can be computed in $O(D(\log Y))$ bit operations.
\end{lemma}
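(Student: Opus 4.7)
The plan is to simply combine one integer multiplication with one integer division, invoking the cost estimates for each that are recalled in the paper's cost model paragraph. Given $a \in \Z$ with $\lg a \leq D$ and $b \in \Z/(Y)$ (so $\lg b \leq \lg Y$), I would first compute the product $p := ab$ using standard schoolbook multiplication, which by the stated model costs $O((\lg a)(\lg b)) = O(D\log Y)$ bit operations and produces an integer $p$ of bitlength at most $D + \lg Y$.

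Next I would compute $\Rem(p, Y)$ via the stated division cost: writing $p = qY + r$ with $0 \leq r < Y$ costs $O((\lg(p/Y))(\lg Y))$ bit operations. Since $\lg(p/Y) = \lg(ab/Y) \leq \lg a \leq D$ up to an additive constant, this second step is also $O(D \log Y)$ bit operations. Summing the two contributions yields the claimed $O(D \log Y)$ bound.

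There is essentially no obstacle here beyond unwinding the definitions; the only subtlety is that the statement allows $a$ to be negative or zero, in which case I would handle the sign separately (computing $\Rem(|a|b, Y)$ first and then adjusting by a single subtraction from $Y$ if $a<0$ and the intermediate remainder is nonzero) and treat the boundary case $a = 0$ trivially. Neither adjustment changes the asymptotic cost, so the overall bound $O(D \log Y)$ follows directly.
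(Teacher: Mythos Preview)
Your proposal is correct and follows essentially the same two-step argument as the paper: bound the multiplication cost by $O((\lg a)(\lg b))$ and the division cost by $O((\lg(ab/Y))(\lg Y))$, then use $|b| < Y$ to reduce both to $O(D\log Y)$. Your added remarks on the sign of $a$ and the trivial case $a=0$ are fine but not needed for the asymptotic statement.
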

\begin{proof} There exists a constant $c_1$ such that
the multiplication $ab$ over $\Z$ has cost bounded by $c_1 (\lg a)(\lg b)$.
There exists a second constant $c_2$ such that $\Rem(ab,Y)$
has cost bounded by $c_2 (\lg ab/Y)(\lg Y)$.  Using
$|b| < Y$ shows that  both of these cost bounds are bounded
by $c(\lg a)(\lg Y)$ where $c=\max(c_1,c_2)$. Using $\lg a \leq D$ and
$Y>1$ we have $c(\lg a)(\lg Y) \leq c D(1+\log Y)
\leq c D(2 \log Y) \in O(D(\log Y))$.
\end{proof}

The following lemma extends Lemma~\ref{lem:modmult} by replacing the
first operand $a$ with a matrix, and the second operand $b$ with a vector.

\begin{lemma} \mylabel{lem:AB} 
Let $A \in \Z^{n \times k}$ and $b \in \Z/(Y)^{k \times 1}$.
If the sum of the bitlengths of the columns of $A$ is bounded by $D$,
then $\Rem(Ab,Y)$ can be computed in $O(nD(\log Y))$ bit operations.
\end{lemma}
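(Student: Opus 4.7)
The plan is to reduce the matrix--vector product to entrywise invocations of Lemma~\ref{lem:modmult}, done column by column, and then accumulate the partial sums modulo~$Y$.

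First I would write $Ab = \sum_{j=1}^{k} A_{*,j}\, b_j$, where $A_{*,j}$ denotes column $j$ of $A$, and let $D_j$ denote the bitlength of $A_{*,j}$, so that $\sum_{j=1}^{k} D_j \leq D$. Note that the convention $\lg 0 = 1$ forces $D_j \geq 1$, hence $k \leq D$; this simple observation will be what absorbs the additive overhead of the accumulation step. Also, since $b_j \in [0,Y)$, the second operand of each scalar multiplication is a valid input to Lemma~\ref{lem:modmult}.

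For each column index $j$, I would compute the vector $v_j := \Rem(A_{*,j} b_j, Y) \in \Z/(Y)^{n \times 1}$ entrywise. By Lemma~\ref{lem:modmult} applied with $a = (A_{*,j})_i$ (whose bitlength is at most $D_j$) and with second operand $b_j$, each scalar $\Rem((A_{*,j})_i\, b_j, Y)$ takes $O(D_j \log Y)$ bit operations, so computing the whole vector $v_j$ costs $O(n D_j \log Y)$. I would maintain a running sum $r \in \Z/(Y)^{n \times 1}$, initialized to zero, and update $r := \Rem(r + v_j, Y)$ after each $v_j$ is produced; each such update costs $O(n \log Y)$ bit operations because addition and reduction modulo $Y$ of a single residue take $O(\log Y)$ bit operations. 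After all $k$ columns are processed, $r = \Rem(Ab, Y)$.

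Summing the two contributions gives a total cost of
\[
O\!\left(n \log Y \cdot \sum_{j=1}^{k} D_j\right) + O(n k \log Y) \;=\; O(n(D + k)\log Y) \;=\; O(n D \log Y),
\]
where the last equality uses $k \leq D$. The only step that requires any care is recognizing the bound $k \leq D$ so that the accumulation cost is not a separate asymptotic term; otherwise the argument is a direct assembly of Lemma~\ref{lem:modmult} over the $nk$ entries grouped by column.
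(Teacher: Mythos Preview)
Your proof is correct and follows essentially the same approach as the paper: decompose $Ab$ column by column, apply Lemma~\ref{lem:modmult} to each scalar product, and sum the per-column costs $O(nD_j\log Y)$ to get $O(nD\log Y)$. Your treatment of the accumulation step is actually more careful than the paper's---the paper simply asserts that ``accumulating the sum modulo $Y$ is within this cost,'' whereas you explicitly justify this via the observation $k \leq D$ from $D_j \geq 1$.
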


\begin{proof}
Decompose $A$ into columns as $A = \left [ \begin{array}{ccc}
\vec{a}_1 & \cdots & \vec{a}_k \end{array} \right ] \in \Z^{n \times
k}$, and let $d_i$ be the bitlength of $\vec{a}_i$, $1\leq i\leq
k$.  Then, $\sum_{i}^k d_i \leq D$.  Let $b_i$ be entry $i$ 
of $b$.  Then, $$\Rem(Ab,Y) = \Rem \left (\sum_{i=1}^k \Rem(\vec{a}_i
b_i,Y),Y\right ).$$ By Lemma~\ref{lem:modmult}, there is a constant
$c$ such that computing $\Rem(\vec{a}_i b_i,Y) \in \Z/(Y)^{n \times 1}$
has cost bounded by $cnd_i (\log Y)$. Computing all $\Rem(\vec{a}_ib_i,Y)$
then has cost bounded by $\sum_{i=1}^k cnd_i(\log Y) \in O(nD(\log
Y))$.  Accumulating the sum modulo $Y$ is within this cost.
\end{proof}

The following result follows by accumulating the multiplication
cost over the rows of $A$.

\begin{corollary} \mylabel{cor:ABrows} Let $A \in \Z^{n \times k}$
and $b \in \Z/(Y)^{k \times 1}$.  If the sum of the bitlengths of
the rows of $A$ are bounded by $D$, then $\Rem(Ab,Y)$ can be computed
in $O(kD(\log Y))$ bit operations.
\end{corollary}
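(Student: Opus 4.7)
The plan is to mirror the proof of Lemma~\ref{lem:AB} but decompose $A$ into rows rather than columns. Write $A = \left[ \vec{a}_1^T \mid \cdots \mid \vec{a}_n^T \right]^T$ where $\vec{a}_i \in \Z^{1 \times k}$ is row $i$ of $A$, and let $d_i$ denote the bitlength of $\vec{a}_i$, so that $\sum_{i=1}^n d_i \leq D$ by hypothesis. Entry $i$ of $\Rem(Ab,Y)$ is then $\Rem(\vec{a}_i b, Y)$, a scalar, and it suffices to bound the cost of computing each such scalar and sum.

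For a fixed row $i$, every entry $a_{ij}$ of $\vec{a}_i$ satisfies $\lg a_{ij} \leq d_i$. I would apply Lemma~\ref{lem:modmult} to each of the $k$ products $\Rem(a_{ij} b_j, Y)$, each costing $O(d_i \log Y)$ bit operations. Accumulating the $k$ residues modulo $Y$ contributes only $O(k \log Y)$ further bit operations, which is absorbed into the multiplication cost since $d_i \geq 1$. Thus computing $\Rem(\vec{a}_i b, Y)$ costs $O(k d_i \log Y)$.

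Summing over the $n$ rows gives a total cost of
\[
\sum_{i=1}^n O(k d_i \log Y) \;=\; O\!\left( k (\log Y) \sum_{i=1}^n d_i \right) \;\leq\; O(kD \log Y),
\]
as claimed. There is no real obstacle here: the argument is dual to that of Lemma~\ref{lem:AB}, trading the outer loop over columns of $A$ (scalar-vector products of length $n$) for an outer loop over rows of $A$ (dot products of length $k$). The essential point is simply that each scalar multiplication $a_{ij} b_j \bmod Y$ is charged $O((\lg a_{ij}) \log Y)$ by Lemma~\ref{lem:modmult}, and the hypothesis bounds the sum of row-bitlengths, which equals $\sum_{i,j} (\text{cost of one row entry}) / (\log Y \cdot k)$ up to the constant hidden in the $O$-notation.
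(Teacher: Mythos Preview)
Your proof is correct and follows exactly the approach the paper indicates: the paper states only that the result ``follows by accumulating the multiplication cost over the rows of $A$,'' and your row-wise decomposition using Lemma~\ref{lem:modmult} for each scalar product $a_{ij}b_j$ is precisely that accumulation. The final sentence of your last paragraph is muddled, but the argument in the first two paragraphs is complete and matches the paper's intent.
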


We now apply Lemma~\ref{lem:AB} to obtain the following result.

\begin{lemma} Given as input \mylabel{lem:applyherm}
\begin{itemize}
\item[(i)] a nonsingular Smith form $S = \diag(s_1,\ldots,s_n) \in
\Z^{n \times n}$,
\item[(ii)] a matrix  $M  \in \Z^{n \times n}$ such that $M =
\colmod(M, S)$, and
\item[(iii)] a nonsingular Hermite form $H \in \Z^{n \times n}$,
\end{itemize}
we can compute $\colmod(H M,S)$ in $O(n(\log \det S)(\log \det H))$
bit operations.
\end{lemma}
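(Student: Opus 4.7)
The plan is to compute $\colmod(HM, S)$ one column at a time, exploiting the compact representation of $H$ from Lemma~\ref{remarkH} in order to avoid redoing work for each column of $M$.

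Write $M = [\vec{m}_1 \mid \cdots \mid \vec{m}_n]$.  For each index $j$ with $s_j = 1$, the reduction condition $M = \colmod(M, S)$ forces $\vec{m}_j = 0$, and column~$j$ of the output is zero as well, so nothing needs to be done.  For each index $j$ with $s_j > 1$, column $j$ of $\colmod(HM, S)$ equals $\Rem(H \vec{m}_j,\, s_j)$, and the hypothesis $M = \colmod(M,S)$ guarantees that $\vec{m}_j \in \Z/(s_j)^{n \times 1}$ as required to invoke Lemma~\ref{lem:AB}.

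First I would let $D$ denote the sum of the bitlengths of the columns of $H$.  The proof of Lemma~\ref{remarkH} already gives $D \leq 2\log \det H$.  Applying Lemma~\ref{lem:AB} with $A = H$, $b = \vec{m}_j$, and $Y = s_j$ then bounds the cost of computing $\Rem(H \vec{m}_j, s_j)$ by $O(n D (\log s_j)) = O(n (\log \det H)(\log s_j))$ bit operations.  Summing over the nontrivial indices,
\begin{equation*}
\sum_{j:\, s_j > 1} O\bigl(n (\log \det H)(\log s_j)\bigr)
\;=\; O\!\left(n (\log \det H) \sum_{j=1}^n \log s_j \right)
\;=\; O\bigl(n (\log \det H)(\log \det S)\bigr),
\end{equation*}
which is the claimed bound.

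The only substantive point to verify is that the per-column application of Lemma~\ref{lem:AB} really charges us only $D$ (the total column bitlength of $H$) rather than something like $n \log \det H$.  This is the whole reason we insist on the compact representation of $H$: it lets the outer $n$ factor in the bound of Lemma~\ref{lem:AB} pair with the column bitlength sum of $H$ exactly once, while the $\log s_j$ factor accumulates across columns of $M$ into $\log \det S$.  There is no remaining obstacle; the routine simply iterates over the nontrivial columns of $M$, and the cost analysis follows from Lemmas~\ref{remarkH} and~\ref{lem:AB}.
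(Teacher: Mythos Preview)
Your approach is essentially the paper's, but there is a real gap in your bound on $D$. The proof of Lemma~\ref{remarkH} bounds only the sum of bitlengths of the \emph{nontrivial} columns of $H$ by $2\log\det H$. Each trivial column of $H$ (one with $h_i=1$) is the unit vector $e_i$, which has bitlength $\lg 1 = 1$, so the sum of bitlengths over \emph{all} columns of $H$ is only bounded by $(n-t) + 2\log\det H$, where $t$ is the number of nontrivial diagonals. Applying Lemma~\ref{lem:AB} with $A=H$ as you state therefore gives $O(n(n+\log\det H)(\log s_j))$ per nontrivial column of $M$, and after summing over $j$ the extra $O(n^2\log\det S)$ term is not absorbed by $O(n(\log\det H)(\log\det S))$; take for instance $\det H = 2$ with $n$ large. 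Invoking the ``compact representation'' does not rescue this: if you feed only the nontrivial columns of $H$ into Lemma~\ref{lem:AB}, the resulting product is not $H\vec{m}_j$ but $H\vec{m}_j$ minus the contributions $e_i m_{ij}$ from the trivial columns.

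The paper closes this gap by writing $\colmod(HM,S) = \colmod(M + (H-I)M,S)$ and applying Lemma~\ref{lem:AB} to $H-I$ rather than to $H$. Since a trivial column of $H$ equals the corresponding column of $I_n$, the trivial columns of $H-I$ are genuinely zero and can be dropped from the matrix--vector product; the sum of bitlengths of the remaining columns of $H-I$ really is $O(\log\det H)$. Adding $M$ back at the end costs $O(n\log\det S)$, which is within the target. With this small modification your argument goes through.
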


\begin{proof}
Let $\bar{M} = \colmod(HM,S)$.
If $\det S=1$, then $\bar{M}$ is the zero matrix and there is nothing
to compute. Similarly, if $\det H = 1$, then $\bar{M} = M$. Assume
therefore that $\det S,~ \det H > 1$. Note that $\bar{M} = \colmod(M
+ (H-I)M,S)$. We can thus compute $\bar{M}$ in two steps, by first computing
$B :=  \colmod((H-I)M,S)$ and then returning $\bar{M} :=  \colmod(M + B,S)$.

The second step, which adds together two matrices that are column reduced
modulo $S$, can be done in linear time, that is, in $O(n (\log\det
S))$ bit operations. It remains to bound the cost of the first step. By
Lemma~\ref{remarkH}, the sum of the bitlengths of the nonzero columns of
$H-I$ are bounded by $2 \log \det H$. Computing $B$ can be done
by premultiplying each nontrivial column of $M$ by $(H-I)$, working
modulo the corresponding diagonal entry in $S$.
By Lemma~\ref{lem:AB}, there exists a constant $c$ such that
the total cost is 
\begin{equation*}
\sum_{\substack{i=1\\s_i\neq 1}}^n 
c n(\log \det H) (\log s_j)  \in O(n(\log \det S)(\log \det H))
\end{equation*}
bit operations.
\end{proof}

The following corollary is obtained by replacing the use of
Lemma~\ref{lem:AB} with Corollary~\ref{cor:ABrows} in the proof of
Lemma~\ref{lem:applyherm}.

\begin{corollary} \label{cor:applyherm}
Given the same input as in Lemma~\ref{lem:applyherm},
we can compute $\colmod(H^T M,S)$ in\\ $O(n(\log \det S)(\log \det H))$ bit operations.
\end{corollary}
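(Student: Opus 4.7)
The plan is to follow the proof of Lemma~\ref{lem:applyherm} essentially verbatim, with only two small adjustments: replace $H$ by $H^T$ throughout, and invoke Corollary~\ref{cor:ABrows} in place of Lemma~\ref{lem:AB} at the point where the matrix-vector products are costed.

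First I would dispose of the trivial cases $\det S = 1$ and $\det H = 1$, then write
$$
\colmod(H^T M, S) \;=\; \colmod\bigl(M + (H^T - I)\, M,\; S\bigr),
$$
and compute this in two stages: first $B := \colmod((H^T - I) M, S)$, then $\bar M := \colmod(M + B, S)$. The second stage is a column-wise addition, costing $O(n \log \det S)$ bit operations, which is within the claimed bound.

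The key observation is that the sum of the bitlengths of the rows of $H^T - I$ equals the sum of the bitlengths of the columns of $H - I$, which by Lemma~\ref{remarkH} is bounded by $2 \log \det H$. This is precisely the hypothesis required to invoke Corollary~\ref{cor:ABrows} on $A := H^T - I$ acting on each nontrivial column $m_j$ of $M$, regarded as a vector in $\Z/(s_j)^{n \times 1}$. Corollary~\ref{cor:ABrows} then gives a cost of $O(n (\log \det H)(\log s_j))$ per column, and summing over the nontrivial columns of $M$ yields
$$
\sum_{\substack{j=1\\ s_j \neq 1}}^n O\bigl(n (\log \det H)(\log s_j)\bigr) \;\in\; O\bigl(n (\log \det S)(\log \det H)\bigr),
$$
as required.

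There is no real obstacle here; the only point that needs verification is the transposed analogue of Lemma~\ref{remarkH}, namely that the nontrivial rows of $H^T - I$ are indexed exactly by those $i$ with $h_i \neq 1$ (since they arise from transposing the nontrivial columns of $H - I$) and have total bitlength bounded by $2 \log \det H$, which is immediate from the same calculation used to prove Lemma~\ref{remarkH}.
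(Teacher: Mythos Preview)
Your proposal is correct and follows exactly the approach the paper indicates: the paper states that the corollary ``is obtained by replacing the use of Lemma~\ref{lem:AB} with Corollary~\ref{cor:ABrows} in the proof of Lemma~\ref{lem:applyherm},'' which is precisely what you do. Your added remark about why the row-bitlength bound on $H^T - I$ follows from Lemma~\ref{remarkH} via transposition is the only nontrivial check, and you handle it correctly.
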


\subsection{Computing Hermite denominators and Smith massagers} 
\label{ssec:prev}

We will make use of  the following algorithms for computing the
Hermite denominator of a rational column vector and fast computation
of Smith forms and massagers. 

\begin{theorem}[\protect{\citet[Theorem~2]{PauderisStorjohann13}}] 
\mylabel{thm:hcol} 
There exists an algorithm \texttt{hcol}$(w,d)$ that takes as
input a vector $w \in \Z/(d)^{n
\times 1}$, and returns as output
the Hermite denominator $H$ of $wd^{-1}$. The cost of the
algorithm is $O(n(\log d)^2)$ bit operations. The Hermite form $H$
will satisfy $(\det H) \mid d$.
\end{theorem}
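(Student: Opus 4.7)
By Lemma~\ref{lem:1}, computing the Hermite denominator of $wd^{-1} \in \Q^{n\times 1}$ amounts to producing a Hermite basis of the integer lattice $L = \{v \in \Z^{1\times n} \mid vw \equiv 0 \pmod{d}\}$. The plan is to build such a basis row by row, processing the entries of $w$ from bottom to top, exploiting the triangular shape required of the output.

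Initialize $g_{n+1} := d$, and for $i = n, n-1, \ldots, 1$ apply the extended Euclidean algorithm to obtain $g_i := \gcd(g_{i+1}, w_i)$ together with Bezout coefficients $\alpha_i, \beta_i$ satisfying $g_i = \alpha_i g_{i+1} + \beta_i w_i$. The $i$-th diagonal entry of $H$ is declared to be $h_i := g_{i+1}/g_i$. The justification is that in any Hermite basis of $L$, the first nonzero entry of row $i$ is the minimum positive $x$ such that $xw_i$ is congruent modulo $d$ to some $\Z$-linear combination of $w_{i+1}, \ldots, w_n$; since those combinations are exactly the multiples of $g_{i+1}$ modulo $d$, the minimum such $x$ equals $g_{i+1}/\gcd(g_{i+1},w_i) = h_i$. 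Each extended GCD runs on operands bounded by $d$ and costs $O((\log d)^2)$ bit operations, so the diagonals together cost $O(n(\log d)^2)$.

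A key structural fact is that $h_i > 1$ forces $g_i < g_{i+1}$, and each such strict drop at least halves the running gcd; hence $H$ has at most $\log_2 d$ nontrivial columns, and by Lemma~\ref{remarkH} the compact output size is only $O(n \log d)$ bits. The off-diagonal entries $h_{ij} \in [0,h_j)$ can be produced by telescoping the Bezout identities into a representation $g_{i+1} \equiv \sum_{k > i} c_k^{(i+1)} w_k \pmod{d}$ and setting $h_{ij}$ to be the reduction of $-(w_i/g_i)\, c_j^{(i+1)}$ modulo $h_j$, followed by a back-substitution against the rows already computed to enforce the Hermite range constraint. By maintaining only one coefficient $c_j^{(\cdot)}$ for each of the $O(\log d)$ nontrivial columns $j$, and by carrying all arithmetic for column $j$ modulo the small modulus $h_j$ rather than modulo $d$, each per-row update on a nontrivial column costs $O((\log d)(\log h_j))$ bit operations. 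Summing, $\sum_i \sum_{j\text{ nontrivial}} O((\log d)(\log h_j)) = O(n(\log d)(\log \det H)) = O(n(\log d)^2)$. Finally, $\det H = \prod_i h_i = g_{n+1}/g_1 = d/\gcd(d, w_1, \ldots, w_n)$, which telescopes and divides $d$.

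The delicate step is the off-diagonal phase. A naive column-by-column row reduction would require $\Theta(n^2)$ arithmetic operations on $\log d$-bit integers, already exceeding the target cost. Achieving the stated bound relies on three points: never materializing the trivial (rank-one identity) columns of $H$; reusing the Bezout data produced during the diagonal phase so that no further GCDs are needed for the off-diagonals; and reducing, for each nontrivial column $j$, modulo the small modulus $h_j$ rather than modulo $d$, so that the per-entry arithmetic is charged against the corresponding bits of the compact output.
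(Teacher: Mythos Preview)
The paper does not prove this theorem at all: it is quoted verbatim as \citet[Theorem~2]{PauderisStorjohann13} in Subsection~\ref{ssec:prev} and used thereafter as a black box. So there is no ``paper's own proof'' to compare against; you are supplying an argument where the authors simply cite one.

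Your diagonal phase is correct and cleanly argued. Setting $g_{n+1}=d$, $g_i=\gcd(g_{i+1},w_i)$, $h_i=g_{i+1}/g_i$ does give the Hermite diagonals, the telescoping identity $\det H = d/g_1 \mid d$ is right, and the observation that at most $\log_2 d$ of the $h_i$ exceed $1$ (hence the compact representation of Lemma~\ref{remarkH} applies) is exactly what is needed to make the output size $O(n\log d)$.

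The off-diagonal phase, however, has a genuine gap. You write that one can ``set $h_{ij}$ to be the reduction of $-(w_i/g_i)c_j^{(i+1)}$ modulo $h_j$'' and then back-substitute, and that the coefficients $c_j^{(\cdot)}$ can themselves be carried modulo $h_j$. Neither of these reductions is legitimate in isolation. The identity $g_{i+1}\equiv\sum_{k>i}c_k^{(i+1)}w_k\pmod d$ is destroyed if you alter a single $c_j$ by a multiple of $h_j$, because $h_jw_j\not\equiv 0\pmod d$ in general; rather $h_jw_j\equiv -\sum_{k>j}h_{jk}w_k\pmod d$, so any reduction of the $j$-th coefficient must simultaneously update all later coefficients. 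That coupling \emph{is} the back-substitution, and once you do it correctly the per-column work is no longer a single modular multiplication by a $\log h_j$-bit modulus: subtracting $q$ times row $j$ of $H$ touches every later nontrivial column, and the intermediate magnitudes of the running entries (hence of the successive quotients $q$) are not bounded by your argument. Your cost estimate $O((\log d)(\log h_j))$ per row per nontrivial column therefore does not follow from what you have written. A correct accounting is possible---for instance by keeping all working entries reduced modulo $d$ (using $d\Z^n\subseteq L$) and then charging the $O(\log d)$ back-substitution steps per row against the $O(\log d)$ nontrivial columns---but that yields $O((\log d)^2)$ per row via a different bookkeeping than the one you state, and you would need to spell it out.
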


\begin{theorem}[\citet{BirmpilisLabahnStorjohann20,BirmpilisLabahnStorjohann21}] 
\mylabel{thm:sm}
There exists a Las Vegas algorithm \texttt{SmithMassager}$(A)$ that
takes as input a nonsingular $A \in \Z^{n \times n}$, and returns
as output a tuple $(M,S,p) \in (\Z^{n \times n}, \Z^{n \times n},
\Z_{>2})$ with
\begin{itemize}
\item[(i)] $S$ the Smith form of $A$,
\item[(ii)]  $M$ is a reduced Smith massager of $A$, and
\item[(iii)] $p$ is prime with $p \perp \det S$ and $\log p \in \Theta(\log n + \loglog ||A||)$.
\end{itemize}
The algorithm has cost
$O(n^3 (\log n + \log ||A||)^2(\log n)^2)$ bit operations,
using standard integer and matrix multiplication 
\end{theorem}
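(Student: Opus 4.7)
The plan is to describe how such an algorithm is assembled, since the theorem imports a result established in the cited works of Birmpilis, Labahn, and Storjohann. The three deliverables are the Smith form $S$, the reduced massager $M$, and the coprime prime $p$; each is produced by a separate subroutine, and the dominant cost is the computation of $M$.

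To obtain $S$, the plan is to exploit the fact that applying a random unimodular preconditioner concentrates the nontrivial structure of $A$ into a small trailing block, leaving the trivial invariants cheap to account for. Combined with high-order (double-plus-one) lifting of $A^{-1}$ modulo a power of a suitable small prime, the invariant factors can be extracted via rational reconstruction. The Las Vegas nature of the overall algorithm stems from certifying the output $S$ by checking $\det S = |\det A|$ and verifying compatibility with the lifted inverse; any inconsistency triggers \textsc{Fail}.

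To build a reduced Smith massager we use conditions (\ref{eq:AM}) and (\ref{eq:UM}) directly. For each nontrivial diagonal entry $s_j$ of $S$, we would construct column $j$ of $M$ by sampling a random $u_j \in \Z^{n\times 1}$, computing $m_j := \Rem(A^{-1}(s_j u_j),\, s_j)$ through the lifted inverse, and reducing modulo $s_j$. By construction $A m_j \equiv 0 \colmod s_j$, and with constant probability the resulting matrix also admits a left inverse $\hat{W}$ modulo the columns of $S$, fulfilling (\ref{eq:UM}). This existence check is certifiable, which keeps the algorithm Las Vegas. The prime $p$ is produced by rejection sampling: draw random primes of bitlength $\Theta(\log n + \loglog \|A\|)$ and discard any that share a factor with $\det S$. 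Since $|\det S|$ divides $|\det A|$ and therefore has $O(n \log \|A\|)$ distinct prime factors, while the bitlength target is chosen to make primes abundant, a constant number of trials succeeds with constant probability.

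The hard part will be the cost analysis. To meet the bound $O(n^3 (\log n + \log \|A\|)^2 (\log n)^2)$ under standard integer and matrix multiplication, no intermediate object can carry full precision arithmetic applied to a dense $n \times n$ matrix more than a polylogarithmic number of times. The plan is to structure the lifting into $O(\log n)$ stages, each consisting of a constant number of $n \times n$ integer matrix multiplications whose operand bitlengths are controlled by reducing modulo the Smith diagonal after each stage, exactly as in the analysis of \citet{BirmpilisLabahnStorjohann21}. Propagating this per-stage cost and accumulating over the $O(\log n)$ stages, together with the two extra $\log n$ factors from the quadratic integer multiplication model applied to operands of bitlength $O(n(\log n + \log \|A\|))$, yields the stated bit complexity.
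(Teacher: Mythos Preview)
The paper does not prove this theorem at all: it is stated in Section~\ref{ssec:prev} purely as an imported result from \citet{BirmpilisLabahnStorjohann20,BirmpilisLabahnStorjohann21}, with no accompanying argument. You correctly recognize this at the outset, so there is nothing in the present paper to compare your sketch against.

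That said, as a summary of the cited works your outline is only roughly accurate, and a couple of points would not survive a referee. First, the massager in \citet{BirmpilisLabahnStorjohann21} is not built column-by-column by sampling random $u_j$ and applying a lifted inverse; rather, it emerges from the same dimension-doubling / partial-linearization machinery that produces $S$, and the certification of property~(\ref{eq:UM}) is tied to that construction, not to a separate left-inverse check. Second, your explanation of the $(\log n)^2$ factor is off: quadratic integer multiplication on operands of bitlength $d$ costs $O(d^2)$, with no extra logarithmic factors, so attributing the $(\log n)^2$ to ``the quadratic integer multiplication model applied to operands of bitlength $O(n(\log n + \log \|A\|))$'' does not type-check. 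In the cited analysis the extra $(\log n)^2$ arises from the recursive structure of the Smith/massager algorithm itself (roughly, $O(\log n)$ levels each incurring an additional $\log n$ from the lifting/linearization schedule), not from the bit-multiplication model. If you want this to stand as a genuine proof sketch rather than a citation, those two points need repair; otherwise, simply citing the result as the paper does is the appropriate choice.
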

We remark that the prime $p$ in part~(iii) of the output specification of
Theorem~\ref{thm:sm} is needed by the subroutine developed in
Section~\ref{sec:dotprod}.

\section{Diagonal entries of the Hermite form}\label{sec:diag}\label{sec:5}

In this section, we give an algorithm for determining the diagonal
entries of the Hermite form of a nonsingular $A \in \Z^{n \times
n}$. Let the Smith form of $A$ be $S$, and suppose $M$ is a Smith
massager for $A$. The algorithm is based on Lemma~\ref{lem:key},
which states that the Hermite denominator of $MS^{-1}$ is the same
as that of $A^{-1}$.
\begin{figure}[H]
\centering
\fbox{
\begin{minipage}{.95\textwidth}
\texttt{HermiteDiagonals}$(A,M,S)$\\
{\bf Input:}
\begin{itemize}
\item[(i)] A nonsingular $A \in \Z^{n \times n}$.
\item[(ii)] The Smith form $S=\diag(s_1,\ldots,s_n)$ of $A$.
\item[(iii)] A reduced Smith massager $M$ for $A$.
\end{itemize}
{\bf Output:}
\begin{itemize}
\item[] The diagonal entries $h_1,\ldots,h_n$ of the
Hermite form of $A$.
\end{itemize}
\end{minipage}}
\caption{Problem \texttt{HermiteDiagonals} \label{fig:diag}}
\end{figure}

\begin{theorem} \mylabel{thm:diag}
Problem \texttt{HermiteDiagonals}
can be solved in $O(n (\log \det S)^2)$ bit operations.
\end{theorem}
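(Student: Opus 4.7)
The plan is to combine Lemma~\ref{lem:key} with Lemma~\ref{lem:recurse} to compute the Hermite denominator of $MS^{-1}$ one column at a time, reading off the diagonal entries of $H$ as we go. By Lemma~\ref{lem:key}, the minimal (Hermite) denominator of $MS^{-1}$ coincides with the Hermite form $H$ of $A$, so its diagonal is exactly $(h_1,\ldots,h_n)$. Since column $j$ of $MS^{-1}$ equals $M_j/s_j$, each single-column Hermite denominator can be computed by the subroutine \texttt{hcol} of Theorem~\ref{thm:hcol}, whose cost depends only on $s_j$.

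I would maintain a running matrix $\tilde M$, initialized to $M$, with the invariant that at the start of iteration $j$, $\tilde M = \colmod(H_{j-1}\cdots H_1 M, S)$, where each $H_k$ is the Hermite denominator of column $k$ of $(H_{k-1}\cdots H_1)MS^{-1}$ (with the empty product denoting $I_n$). Each iteration does three things. (i) If $s_j>1$, call $\texttt{hcol}(\tilde M_j, s_j)$ to obtain $H_j$; this is justified by Lemma~\ref{lem:1}, which shows the Hermite denominator of $(H_{j-1}\cdots H_1 M_j)/s_j$ depends only on $(H_{j-1}\cdots H_1 M_j)\bmod s_j$, i.e.\ on $\tilde M_j$. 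Otherwise set $H_j := I_n$. (ii) Pointwise-multiply a diagonal accumulator $(h_1,\ldots,h_n)$ by the diagonal of $H_j$. (iii) Update $\tilde M := \colmod(H_j\tilde M, S)$ using Lemma~\ref{lem:applyherm}. After $n$ iterations, Lemma~\ref{lem:recurse} (applied inductively) gives that $H_n\cdots H_1$ is a minimal denominator of $MS^{-1}$; since the product of upper triangular matrices has diagonal equal to the pointwise product of the factors' diagonals, and any two upper triangular minimal denominators agree on their diagonals up to sign by Lemma~\ref{denequiv} (an upper-triangular unimodular left-equivalence has $\pm 1$ diagonal entries), and because $H$ and each $H_j$ have positive diagonals by the Hermite convention, the accumulator equals the diagonal of $H$.

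For the cost, the $\texttt{hcol}$ calls contribute
\[
\sum_{j=1}^n O\!\bigl(n(\log s_j)^2\bigr) \;\leq\; O\!\Biggl(n\Bigl(\sum_{j=1}^n \log s_j\Bigr)^{2}\Biggr) \;=\; O\!\bigl(n(\log\det S)^2\bigr)
\]
bit operations, using $\sum a_i^2\le(\sum a_i)^2$ for non-negative reals. The updates of $\tilde M$ via Lemma~\ref{lem:applyherm} cost $\sum_j O(n(\log\det S)(\log\det H_j))$, which is $O(n(\log\det S)(\log\det H))$ since $\prod_j\det H_j=\det H$; because $\det H=|\det A|=\det S$, this is again $O(n(\log\det S)^2)$. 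The accumulator updates and the extraction of diagonal entries are absorbed inside these estimates.

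The main obstacle is checking cleanly that the column-wise reduction $\colmod(\,\cdot\,,S)$ applied to the running matrix does not invalidate the subsequent \texttt{hcol} call: column $j$ of $\colmod(X,S)$ equals $X_j\bmod s_j$, and $\texttt{hcol}(w,d)$ depends on $w$ only modulo $d$, so the invariant is preserved column by column. The remaining bookkeeping (handling trivial factors $s_j=1$, and the edge case $\det S=1$) is routine.
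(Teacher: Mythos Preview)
Your proposal is correct and follows essentially the same approach as the paper: iterate over the columns of $MS^{-1}$, calling \texttt{hcol} on the current (reduced) column and then updating the massager by premultiplying with the resulting Hermite denominator, appealing to Lemma~\ref{lem:recurse} for correctness and summing $O(n(\log s_j)^2)$ and $O(n(\log\det S)(\log\det H_j))$ over $j$ for the cost. Your justification that the product's diagonal matches that of $H$ (via Lemma~\ref{denequiv} and the observation that an upper-triangular unimodular matrix has $\pm 1$ diagonal) is slightly more explicit than the paper's, which simply asserts the diagonals coincide.
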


\begin{proof}
Define $s_0 := 1$, and let $0\leq k\leq n$ be such that $s_i=1$ for
all $i\leq k$. Then, since the first $k$ columns of $MS^{-1}$ are
zero, they have minimal denominator $I_n$, and so can be ignored.
By Lemma~\ref{lem:recurse}, the following loop will compute matrices
$\hat{H}_{k+1},\hat{H}_{k+2},\ldots,\hat{H}_n$ in Hermite form such
that $\hat{H}_{n}\hat{H}_{n-1} \cdots \hat{H}_{k+1}$ is a minimal
denominator of $MS^{-1}$.\\

\noindent
\For $i=k+1$ \To $n$ \Do\\
\ind{1} $\hat{H}_i := \texttt{hcol}({\rm Column}(M,i),s_i)$\\
\ind{1} $M := \colmod(\hat{H}_i M,S)$\\
\Od\\

By Theorem~\ref{thm:hcol}, the cost of the call to \texttt{hcol}
in iteration $i$ is bounded by $cn(\log s_i)^2$ for some constant
$c$.  The total cost of all calls to \texttt{hcol} is therefore
$O(n(\log S)^2)$.  By Lemma~\ref{lem:AB}, the cost of updating $M$
during iteration $i$ is bounded by $\hat{c}n(\log \det \hat{H}_i)(\log
\det S)$ bit operations for some constant $\hat{c}$.  Since $\det
\hat{H}_i  \mid s_i$, this is bounded by $\hat{c}n(\log s_i)(\log
\det S)$.  The total cost of all updates of $M$ is then also $O(n(\log
\det S)^2)$.

While the product $\hat{H}_{n}\hat{H}_{n-1} \cdots \hat{H}_{k+1}$
is a minimal denominator of $MS^{-1}$ that is upper triangular, it
might not be in Hermite form because the off-diagonal entries might
not be reduced.  However the diagonal entries of $\hat{H}_{n}\hat{H}_{n-1}
\cdots \hat{H}_{k+1}$ will be the same as those of $H$.  Taking
advantage of our compact representation for the $\hat{H}_{\ast}$,
the total cost of computing the diagonal entries of $H$ is then
bounded by $O(n (\log \det S)^2)$.
\end{proof}

\section{Column Howell forms} \label{ssec:colhowell}

Working with matrix denominators, as discussed in Subsection
\ref{subsec:2.1}, naturally implies doing linear algebra in the
residue class ring $\R := \Z/(s)$ for a given modulus $s \in \Z_{>0}$
(c.f. Lemma~\ref{lem:1}). In this section, we investigate a type
of column echelon form for matrices in such a residue ring.

For a matrix $B \in \R^{n \times n}$, we denote by
\[\Span(B)=\{Bv\in\R^{n\times 1}\mid v\in\R^{n\times 1}\}\] the set
of all $\R$-linear combinations of the columns of $B$.  By $\Span_k(B)$
we denote the subset of $\Span(B)$ consisting of all column vectors
that have the last $k$ entries zero.

A column Howell form of $B$, first introduced by \citet{Howell}, is a
matrix $T\in\R^{n \times n}$ that is right equivalent to $B$ over $\R$
and that satisfies the \textit{Howell property}: for all $0 \leq k\leq
n$, $\Span_k(B) = \Span(T_k)$ where $T_k$ is the submatrix of $T$
comprised of those columns that have the last $k$ entries zero.

\begin{example} \mylabel{ex:howell1}
Consider the matrix
$$
B = \left [ \begin{array}{cccc} \phantom{0} & \phantom{0} & \phantom{0} & 1 \\
 & & & 4 \\
 & & & 4 \\
 & & & 8 \end{array} \right ] \in \Z/(16)^{4 \times 4}.
$$
The span of the columns of $B$ which have the last entry zero (in
this example the first three zero columns) contains only the zero
vector.
But multiplying the last column of $B$ by $2$ yields the nonzero column
$$
\left [ \begin{array}{c} 2 \\ 8 \\8 \\ \phantom{0} \end{array} \right],
$$
with last entry zero, and so $B$ does not satisfy the Howell
property. In this case the column triangularization of $B$ given by
\begin{equation} \mylabel{eq:t}
 T  =  
 \left [ \begin{array}{cccc} 
\phantom{0} & 4 & 2 & 1 \\
 & & 8 & 4 \\
 & & 8 & 4 \\
 & & & 8
\end{array} \right ]  = 
\left [ \begin{array}{cccc} \phantom{0} & \phantom{0} & \phantom{0} & 1 \\
 & & & 4 \\
 & & & 4 \\
 & & & 8 \end{array} \right ] 
\left [ \begin{array}{cccc} 
1 &  &  &  1\\
 & 1& 13 &  \\
 & & 1 & 1\\
 4 & 0& 6& 11
\end{array} \right ]  
=  B  U 
\end{equation}
with $U$ unimodular does satisfy the Howell property.
\end{example}
The Howell form is a natural generalization of the notion of the
column echelon form over a field. Variations include alternate
locations for the zero columns and/or including some additional
normalization conditions.  For our purposes, in order to simplify
the subsequent presentation, we say that a matrix $T$ is in
\emph{Howell form} if $T$ satisfies the Howell property and is upper
triangular with the diagonal entries being positive and divisors
of the modulus $s$.  The diagonal entries of the zero columns modulo
$s$ are replaced with $s$ in order to be positive.  Uniqueness of
the form can be achieved by stipulating that off-diagonal entries
are reduced modulo the diagonal entry in the same row, as per
\citet[Theorem~2]{Howell}, but we do not require this.  We will
however use the fact that the diagonal entries of a Howell form are
unique.
\begin{example} \mylabel{exhow}
Consider the matrix $B \in \Z/(16)^{4 \times n}$
from Example~\ref{ex:howell1}. A
Howell form of $B$ is obtained from the matrix $T$ in~(\ref{eq:t})
by swapping the first two columns and adding the pivot $16$ in the second
column:
$$
 \left [ \begin{array}{cccc} 
4 & 0 & 2 & 1 \\
 & 16 & 8 & 4 \\
 & & 8 & 4 \\
 & & & 8
\end{array} \right ].
$$
Another Howell form of $B$ is
$$
\left [ \begin{array}{cccc} 4 & 0 & 6 & 11 \\
 & 16 & 8 & 12 \\
 & & 8 & 12 \\
 & & & 8 \end{array} \right ] .
$$
\end{example}

\section{Solving in the dual: Hermite via Howell}
\mylabel{sec:dual}

Throughout this section, let
$$
H = \left [ \begin{array}{cccc} h_1 &h_{12} & \cdots & h_{1n} \\
 & h_2 & \cdots & h_{2n} \\
 & & \ddots & \vdots \\
 & & & h_n \end{array} \right ] \in \Z^{n \times n}
$$
be the Hermite form of $A \in \Z^{n\times
n}$. For $1\leq j\leq  n$, define
$$
\setlength{\arraycolsep}{.5\arraycolsep}\renewcommand{\arraystretch}{.75}
H_j = \left [ \begin{array}{ccccccc} 
 1   &          &        &h_{1,j}  &        &        &          \\
     & \ddots   &        & \vdots  &        &        &          \\
     &          &   1    &h_{j-1,j}&        &        &          \\
     &          &        &   h_j   &        &        &          \\ 
     &          &        &         &  1     &        &          \\ 
     &          &        &         &        &\ddots  &          \\
     &          &        &         &        &        &   1     
\end{array} \right ] \in \Z^{n \times n}
$$
to be the $n \times n$ matrix with column $j$ equal to that of $H$
and the remaining columns those of $I_n$.
Computing $H$ is thus equivalent to 
computing $H_1,\ldots,H_n$. In addition, it is useful to note that the first $j$ columns of $H_j \cdots H_1$
are those of $H$, while the last $n-j$ columns are those of $I_n$.

In this section, we establish a duality between $H$ and any Howell
form $T$ of $sA^{-1}$ over $\Z/(s)$, with $s$ a positive integer
such that $sA^{-1}$ is integral. In particular, we show that column
$j$ of $(H_{j-1} \cdots H_1)T$ is congruent modulo $s$ to
$$
-\frac{s}{h_{j}} \left [ \begin{array}{c} h_{1j} \\
 \vdots \\
 h_{j-1,j}\\
 -1 \\
\phantom{0} \\
\phantom{0} \\
\end{array} \right] \bmod s.
$$
This property points out the following algorithm for computing $H$:\\

\noindent
\For $j=1$ \To $n$ \Do\\
\ind{1} Recover $H_j$ from column $j$ of $T$\\
\ind{1} $T := \Rem(H_j T,s)$\\
\Od\\

We first show that any nonsingular upper triangular matrix over
$\Z$ corresponds to a Howell form over
$\Z/(s)$.
\begin{lemma} \mylabel{thm:notdual}
Let $T \in \Z^{n \times n}$ be nonsingular and upper triangular.
If $s \in \Z_{>0}$  is such that $sT^{-1}$ is integral, then $sT^{-1}$
satisfies the Howell property over $\Z/(s)$.
\end{lemma}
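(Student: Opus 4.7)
The plan is to prove that $T^{*}:=sT^{-1}$ satisfies the Howell property by eliminating, from the bottom up, the last $k$ columns from any combination $T^{*}u \bmod s$ whose last $k$ entries vanish.

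First I would observe that since $sT^{-1}$ is integral, every diagonal entry $t_j$ of $T$ must divide $s$; hence $T^{*}$ is itself an upper triangular integer matrix, with diagonal entries $t_j^{*}:=s/t_j$. Upper triangularity of $T^{*}$ then makes $(T^{*})_k$ (the submatrix of columns whose last $k$ entries vanish) equal to the first $n-k$ columns of $T^{*}$. The inclusion $\Span((T^{*})_k)\subseteq \Span_k(T^{*})$ is immediate, so only the reverse inclusion carries content.

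The key tool for the reverse inclusion is the identity $T^{*}T = sI_n$. Reading column $j$ of this identity modulo $s$ and isolating the diagonal term yields
$$t_j\, T^{*}e_j \;\equiv\; -\sum_{i<j}T_{ij}\,T^{*}e_i \pmod{s}.$$
Now, given $v = T^{*}u \bmod s$ with its last $k$ entries zero, I would argue by descending induction on $j$ from $n$ down to $n-k+1$. After the previous eliminations the current combination uses only columns $1,\ldots,j$ of $T^{*}$, so by upper triangularity the $j$-th entry of the residual is exactly $t_j^{*}u_j$; congruence to the corresponding entry of $v$ (which is zero modulo $s$) forces $t_j\mid u_j$. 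The displayed relation then absorbs the $u_j$-contribution into corrections to $u_1,\ldots,u_{j-1}$, leaving a combination that is still congruent to $v$ modulo $s$ but uses only columns $1,\ldots,j-1$ of $T^{*}$.

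After $k$ such steps the combination uses only the first $n-k$ columns of $T^{*}$, which places $v$ in $\Span((T^{*})_k)$, as required. The only place care is needed is the bookkeeping for the induction step: one must check that eliminating column $j$ preserves the vanishing of entries below row $j$ in the residual. This is automatic because the residual stays congruent to $v$ modulo $s$ throughout, and because the update touches only coefficient indices strictly less than $j$, hence cannot disturb the rows with index $\geq j$ of the residual. This mild bookkeeping is the only technical obstacle; the rest is direct.
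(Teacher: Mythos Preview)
Your argument is correct. Both your proof and the paper's rest on the relation $T^{*}T = sI_n$, but the paper packages it as a single block computation rather than an iterative elimination. Specifically, the paper fixes $k$, partitions $T$ as an upper block-triangular matrix with diagonal blocks $T_1 \in \Z^{(n-k)\times(n-k)}$ and $T_2 \in \Z^{k\times k}$ and off-diagonal block $\bar T$, and writes out the block-inverse formula for $sT^{-1}$. If the last $k$ entries of $sT^{-1}v$ vanish modulo $s$, the lower block reads $sT_2^{-1}v_2 \equiv 0 \pmod s$, which forces $v_2 = T_2 v_2'$ for some integer vector $v_2'$; substituting into the upper block then exhibits the top part as $sT_1^{-1}(v_1 - \bar T v_2')$, a combination of the first $n-k$ columns of $T^{*}$, in one stroke. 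Your version unrolls this into $k$ successive scalar steps, trading the block-inverse formula for the column identity $t_j\, T^{*}e_j \equiv -\sum_{i<j}T_{ij}\,T^{*}e_i \pmod s$ and an explicit descending induction. The paper's argument is terser and sidesteps the induction bookkeeping you flagged; yours is more elementary in that it never writes down a block inverse.
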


\begin{proof}
To establish the Howell property, we need to show that, for $0 \leq
k\leq n$, $\Span_k(sT^{-1})$ is equal to the span of the columns
of $sT^{-1}$ that have the last $n-k$ entries zero. To this end,
fix $k$ and decompose $T$ as $$T = \left [ \begin{array}{cc} T_1 &
\bar{T} \\
 & T_2 \end{array} \right ] $$
where $T_2 \in \Z^{k \times k}$ and the dimensions of $T_1$ and
$\bar{T}$ are implied. Then,
$$
sT^{-1}=
\left[ \begin{array}{cc}
sT_1^{-1} & -sT_1^{-1}\bar{T}T_2^{-1} \\
& sT_2^{-1}
\end{array}\right] \in \Z^{n \times n},
$$
and it will suffice to show that
\[
\Span_k(sT^{-1})
\subseteq 
\Span\left( \left[ \begin{array}{c}
sT_1^{-1} \\ \textcolor{white}{T_2}
\end{array}\right]\right).
\]
This is equivalent to saying that for any vector $v\in\Z^{n\times 1}$ such that
\[sT^{-1}v=\left[ \begin{array}{c}
\bar{v} \\ \textcolor{white}{v}
\end{array}\right]\bmod s,\]
for some $\bar{v}\in\Z^{(n-k)\times 1}$,
there exists another vector $u\in\Z^{(n-k)\times 1}$ such that
$sT_1^{-1}u=\bar{v}$.
Now, 
\begin{align}
sT^{-1}v &= \left[ \begin{array}{cc}
sT_1^{-1} & -sT_1^{-1}\bar{T}T_2^{-1} \nonumber \\
& sT_2^{-1}
\end{array}\right]
\left[ \begin{array}{c}
v_1 \\ v_2
\end{array}\right]\\ &= \label{eq:hheq1}
\left[ \begin{array}{c}
sT_1^{-1}v_1 -sT_1^{-1}\bar{T}T_2^{-1}v_2 \\ sT_2^{-1}v_2
\end{array}\right]\\ &= \label{eq:hheq2}
\left[ \begin{array}{c}
\bar{v} \\ \textcolor{white}{v}
\end{array}\right]\bmod s.
\end{align}
From the lower block of (\ref{eq:hheq1}) and (\ref{eq:hheq2}), it follows that there exist a vector $v_2'\in\Z^{k\times 1}$ such that
\[sT_2^{-1}v_2=sv_2' ~~~ \Leftrightarrow ~~~ v_2=T_2v_2'.\]
Moreover, from the upper block of (\ref{eq:hheq1}) and (\ref{eq:hheq2}), we have that
\begin{align*}
\bar{v} &= sT_1^{-1}v_1 -sT_1^{-1}\bar{T}T_2^{-1}v_2 \\
&= sT_1^{-1}v_1 - sT_1^{-1}\bar{T}v_2' \\
&= sT_1^{-1}(v_1-\bar{T}v_2'),
\end{align*}
which proves the claim.
\end{proof}

\begin{corollary}   \mylabel{corhowherm}
If $H$ is the Hermite form of $A$,  then $sH^{-1}$ is a Howell form of
$sA^{-1}$ over $\Z/(s)$.
\end{corollary}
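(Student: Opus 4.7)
The plan is to apply Lemma \ref{thm:notdual} to the upper triangular integer matrix $T := H$; what then remains is to verify the two pieces that Lemma \ref{thm:notdual} does not supply, namely the Howell-form normalization (upper triangular with positive diagonal entries dividing $s$) and the right equivalence of $sH^{-1}$ to $sA^{-1}$ over $\Z/(s)$.

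First I would handle integrality and right equivalence in one step. Since $H$ is the Hermite form of $A$, there exists a unimodular $U \in \Z^{n \times n}$ with $H = UA$, and hence
\[
  sH^{-1} \;=\; sA^{-1} U^{-1}.
\]
Because $U^{-1}$ is an integer matrix, $sH^{-1}$ is integral, so Lemma \ref{thm:notdual} applies and yields that $sH^{-1}$ satisfies the Howell property over $\Z/(s)$. Rewriting the identity above as $sA^{-1} = (sH^{-1})\,U$ and observing that $\det U = \pm 1$ is coprime to $s$, so that $U$ reduces to a unit in $(\Z/(s))^{n\times n}$, shows that $sH^{-1}$ is right equivalent to $sA^{-1}$ over $\Z/(s)$.

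For the normalization, $H^{-1}$ is upper triangular with diagonal entries $1/h_i$, so $sH^{-1}$ is upper triangular with diagonal entries $s/h_i$. Each $h_i$ is positive, so $s/h_i$ is positive, and it remains only to argue $h_i \mid s$ (which also makes $s/h_i$ a divisor of $s$, with quotient $h_i$). This follows from the Smith-form structure: the invariant factors of $H$ agree with those of $A$, so the largest invariant factor $s_n$ is the smallest positive integer with $s_n H^{-1}$ integral; inspecting the diagonal of $s_n H^{-1}$ gives $h_i \mid s_n$, and from $sH^{-1}\in\Z^{n\times n}$ we obtain $s_n \mid s$, hence $h_i \mid s$.

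The only genuinely nontrivial ingredient is the divisibility $h_i \mid s$, which relies on the coincidence of the Smith forms of $H$ and $A$; everything else is a direct consequence of Lemma \ref{thm:notdual} together with the fact that unimodular matrices become units modulo $s$.
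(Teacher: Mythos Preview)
Your proof is correct and follows essentially the same approach as the paper's: both invoke Lemma~\ref{thm:notdual} for the Howell property, verify the right equivalence via the unimodular $U$, and check the upper-triangular and diagonal normalization conditions. The only difference is that you supply an explicit justification for $h_i \mid s$ via the Smith-form characterization, whereas the paper simply asserts that the diagonal entries $s/h_i$ are divisors of $s$.
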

\begin{proof}
The result follows since $sH^{-1} \equiv_R sA^{-1}$, $sH^{-1}$ is
upper triangular, the diagonal entries of $sH^{-1}$ are positive
divisors $s/h_1,s/h_2,\ldots,s/h_n$ of $s$ and, from
Lemma~\ref{thm:notdual}, $sH^{-1}$ satisfies the Howell property.
\end{proof}

\begin{corollary}  \mylabel{lem:diags}
The diagonal entries of any any Howell form of $sA^{-1}$
over $\Z/(s)$ are equal to $s/h_1,\ldots,s/h_n$.
\end{corollary}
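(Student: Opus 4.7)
The plan is to leverage the preceding Corollary~\ref{corhowherm} together with the uniqueness of the diagonal entries of a Howell form, a fact already noted in the paper (after Example~\ref{exhow}, citing \citet[Theorem~2]{Howell}).

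First, I would invoke Corollary~\ref{corhowherm} to exhibit one particular Howell form of $sA^{-1}$, namely $sH^{-1}$, whose diagonal entries are $s/h_1,\ldots,s/h_n$. This is immediate because $H$ is upper triangular with positive diagonal entries $h_1,\ldots,h_n$ dividing $s$, so $sH^{-1}$ is upper triangular with positive diagonal entries $s/h_1,\ldots,s/h_n$ that divide $s$, matching our normalization convention for Howell forms.

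Next, I would appeal to the uniqueness of the diagonal entries of a Howell form. Two matrices in Howell form that are right equivalent modulo $s$ must share the same diagonal entries: for each $k$, the invariant $\Span_k(\cdot)$ determines, together with the Howell property and the upper triangular shape with positive divisor-of-$s$ diagonal entries, the diagonal entry in column $n-k$. Since any Howell form $T$ of $sA^{-1}$ satisfies $T\equiv_R sA^{-1}\equiv_R sH^{-1}\bmod s$, it is right equivalent to $sH^{-1}$ over $\Z/(s)$, and thus has the same diagonal entries $s/h_1,\ldots,s/h_n$.

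The proof is essentially a two-line deduction; the only conceptual step is recognizing that uniqueness of the diagonals (already flagged as a fact we rely on) immediately transfers the computation of diagonals from the distinguished Howell form $sH^{-1}$ to an arbitrary one. No genuine obstacle arises here, provided one is comfortable citing the diagonal-uniqueness property of Howell forms from the earlier discussion.
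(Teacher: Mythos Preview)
Your proposal is correct and follows essentially the same approach as the paper: invoke Corollary~\ref{corhowherm} to exhibit $sH^{-1}$ as a Howell form of $sA^{-1}$ with diagonal entries $s/h_1,\ldots,s/h_n$, then appeal to the uniqueness of the diagonal entries of a Howell form. The paper's proof is the one-line version of exactly this argument.
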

\begin{proof}
This follows from Corollary~\ref{corhowherm} and the fact that 
the diagonal entries of a Howell form of $sA^{-1}$ are unique.
\end{proof}

\begin{lemma}  \mylabel{lem:denom}
Let $T$ be a Howell form of $sA^{-1}$ over $\Z/(s)$.
Then, $H_j \cdots H_1$ is a denominator of the first $j$
columns of $(1/s)T$, for $1\leq j\leq n$.
\end{lemma}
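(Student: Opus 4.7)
The plan is to exploit the block structure of $G_j := H_j \cdots H_1$ together with the upper triangular form of $T$ to reduce the claim to the single identity $HT \equiv 0 \bmod s$. By Lemma~\ref{lem:1}, $G_j$ is a denominator of the first $j$ columns of $(1/s)T$ if and only if $G_j$ applied to those columns is congruent to the zero matrix modulo $s$, and it is this congruence that I will establish.

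First I would record the block structure of $G_j$. The text has already noted that the first $j$ columns of $G_j$ coincide with those of $H$, while the remaining $n-j$ columns coincide with those of $I_n$. Because $H$ is upper triangular, its first $j$ columns vanish in rows $j+1,\ldots,n$, and the last $n-j$ columns of $I_n$ trivially vanish in rows $1,\ldots,j$. Hence
$$
G_j \;=\; \left[\begin{array}{cc} H' & 0 \\ 0 & I_{n-j} \end{array}\right],
$$
where $H' \in \Z^{j\times j}$ denotes the leading principal submatrix of $H$. Similarly, since $T$ is upper triangular, the submatrix consisting of its first $j$ columns has the block form $\left[\begin{array}{c} T' \\ 0 \end{array}\right]$, with $T' \in \Z^{j\times j}$ the leading principal submatrix of $T$. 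Block multiplication then yields that the first $j$ columns of $G_j T$ are $\left[\begin{array}{c} H' T' \\ 0 \end{array}\right]$, and $H' T'$ is exactly the leading principal $j \times j$ block of the full product $HT$.

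It therefore remains to prove $HT \equiv 0 \bmod s$. By the definition of the Hermite form there is a unimodular $U$ with $H = UA$, so $HA^{-1} = U \in \Z^{n\times n}$, which is equivalent to $H(sA^{-1}) \equiv 0 \bmod s$. Since $T$ is a Howell form of $sA^{-1}$ over $\Z/(s)$ it is right equivalent to $sA^{-1}$ over $\Z/(s)$; that is, there is an integer matrix $V$ with $\det V$ coprime to $s$ such that $T \equiv sA^{-1}V \bmod s$. Hence $HT \equiv H(sA^{-1})V \equiv 0 \bmod s$, which closes the argument. The only mildly delicate point is the block description of $G_j$, but this is essentially a restatement of the column structure already emphasized in the text, so I do not anticipate a significant obstacle.
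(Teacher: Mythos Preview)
Your proof is correct and follows essentially the same approach as the paper: both arguments first establish $HT \equiv 0 \bmod s$ from the right equivalence of $T$ and $sA^{-1}$ over $\Z/(s)$, and then use the upper triangularity of $H$, $T$, and $G_j = H_j\cdots H_1$ to conclude that $G_jT_{1..j}$ coincides with the first $j$ columns of $HT$. The paper phrases the second step via the factorization $H = H_n\cdots H_{j+1}G_j$ and the observation that $H_k$ for $k>j$ leaves the first $j$ columns of an upper triangular matrix unchanged, while you phrase it as an explicit block decomposition; these are two presentations of the same computation.
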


\begin{proof}
Since $T$ is right equivalent to $sA^{-1}$ over $\Z/(s)$, and $H$ is
a denominator of $A^{-1}$, we have that $H$ is a denominator
of $(1/s)T$.
The claim in the lemma now follows from the fact
that $T$ is upper
triangular.  In particular, premultiplying an upper triangular
matrix by $H_k$ for $k > j$ does not change the first $j$ columns.
Let $T_{1\ldots j}$ denote the submatrix of $T$ comprised of the first $j$ columns.  Then ,
\begin{eqnarray}
\label{eq:HT1}
 HT_{1..j} & = & (H_n\cdots H_{j+1})(H_{j}\cdots
H_1)T_{1..j}\\
\label{eq:HT2}  & = & (H_{j}\cdots H_1)T_{1..j}.
\end{eqnarray} 
Since the left hand side of~(\ref{eq:HT1}) is zero modulo $s$,
so is the right hand side of~(\ref{eq:HT2}).
\end{proof}

\begin{theorem} \mylabel{lemidea} 
Let $T$ be a Howell form of $sA^{-1}$ over $\Z/(s)$.
Then, $H_j \cdots H_1$ is the minimal Hermite denominator
of the first $j$ columns of $(1/s)T$, for $1\leq j\leq  n$.
\end{theorem}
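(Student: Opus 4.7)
The plan is to verify three properties of the product $P := H_j H_{j-1} \cdots H_1$: (a) $P$ is a denominator of $B_{1..j} := (1/s) T_{1..j}$, where $T_{1..j}$ denotes the first $j$ columns of $T$; (b) $P$ is in Hermite form; and (c) $P$ is minimal, i.e., it achieves the smallest possible determinant among such denominators.

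Property (a) is exactly Lemma~\ref{lem:denom}. For (b), observe that by the definitions of the $H_k$ the first $j$ columns of $P$ agree with the first $j$ columns of $H$, while the last $n-j$ columns agree with those of $I_n$. Hence $P$ is block upper triangular with top-left block $H^{(j)}$, the principal leading $j \times j$ submatrix of $H$, and bottom-right block $I_{n-j}$, and with zero off-diagonal blocks. All diagonal entries of $P$ are positive, the off-diagonal entries in columns $1,\ldots,j$ inherit the Hermite reduction property from $H$, and those in columns $j+1,\ldots,n$ are zero; thus $P$ is in Hermite form with $\det P = h_1 h_2 \cdots h_j$.

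The main work lies in (c). Since $T$ is upper triangular, the last $n-j$ rows of $T_{1..j}$ vanish, so the denominator lattice $\{v \in \Z^{1 \times n} \mid v B_{1..j} \in \Z^{1 \times j}\}$ splits as $L' \times \Z^{n-j}$, where $L'$ is the denominator lattice of the $j \times j$ rational matrix $B^{(j)} := (1/s) T^{(j)}$, with $T^{(j)}$ the principal leading $j \times j$ submatrix of $T$. Consequently, the minimal Hermite denominator of $B_{1..j}$ has the block form $\hat{H}' \oplus I_{n-j}$, where $\hat{H}'$ is the minimal Hermite denominator of the square matrix $B^{(j)}$. By Corollary~\ref{lem:diags}, the diagonal entries of $T$ are $s/h_1, \ldots, s/h_n$, so $\det T^{(j)} = s^j/(h_1 \cdots h_j)$ and $\det B^{(j)} = 1/(h_1 \cdots h_j)$. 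Since $\hat{H}' B^{(j)}$ is an integer matrix, its determinant $\det(\hat{H}')/(h_1 \cdots h_j)$ is a positive integer, forcing $(h_1 \cdots h_j) \mid \det \hat{H}'$. Combining this with the converse divisibility $\det \hat{H}' \mid \det H^{(j)} = h_1 \cdots h_j$, which follows because $H^{(j)}$ is itself a denominator of $B^{(j)}$ (deduced from (a) via the same block splitting) together with the divisibility property stated just after Lemma~\ref{denequiv}, pins down $\det \hat{H}' = h_1 \cdots h_j = \det H^{(j)}$.

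Hence $H^{(j)}$ realises the minimum determinant and is itself a minimal denominator of $B^{(j)}$; being also in Hermite form, Lemma~\ref{denequiv} and uniqueness of the Hermite form give $H^{(j)} = \hat{H}'$, so $P = H^{(j)} \oplus I_{n-j}$ is the minimal Hermite denominator of $B_{1..j}$, as claimed. The main conceptual step is the block splitting of the denominator lattice, which reduces everything to a clean determinant calculation using Corollary~\ref{lem:diags}; the rest is bookkeeping.
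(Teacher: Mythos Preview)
Your proof is correct and takes a genuinely different route from the paper's.  The paper argues by induction on $j$: assuming $H_{j-1}\cdots H_1$ is already the minimal Hermite denominator of $(1/s)T_{1\ldots j-1}$, it invokes Lemma~\ref{lem:recurse} to write a minimal denominator of $(1/s)T_{1\ldots j}$ as $\bar{H}_j(H_{j-1}\cdots H_1)$, then uses the shape of column $j$ of $(H_{j-1}\cdots H_1)T$ (entry $j$ equals $s/h_j$, last $n-j$ entries zero) together with Lemma~\ref{lem:denom} to force $\bar{H}_j=H_j$.  You instead bypass induction entirely: the block splitting of the denominator lattice reduces the question to the square $j\times j$ problem for $B^{(j)}=(1/s)T^{(j)}$, and Corollary~\ref{lem:diags} lets you compute $\det B^{(j)}=1/(h_1\cdots h_j)$ exactly, after which the two-sided divisibility between $\det\hat{H}'$ and $\det H^{(j)}$ pins down equality.

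Your argument is shorter and more direct for the theorem statement itself.  The paper's inductive approach, on the other hand, is tailored to the algorithm: the induction step explicitly identifies column $j$ of $(H_{j-1}\cdots H_1)T$, which is exactly the content of Corollary~\ref{cor:getH} and is what the Hermite-via-Howell procedure actually computes.  From your proof the corollary still follows, but one has to go back through Lemma~\ref{lem:recurse} to extract it, essentially reconstructing the paper's inductive step after the fact.
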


\begin{proof}
Recall that we let $T_{1\ldots j}$ denote the first $j$ columns of
$T$. We will use induction. For $j=1$, the claim of the theorem
follows from Corollary~\ref{lem:diags} and Lemma~\ref{lem:denom},
since the first diagonal entry of $T$ is $s/h_1$ and $H_1$ is a
denominator of $(1/s)T_1$.

Now,
assume that the claim is true for $j-1$, for some $j > 1$, that is,
assume that $H_{j-1} \cdots H_{1}$ is the minimal Hermite denominator of
$(1/s)T_{1\ldots j-1}$. Then, let $v$ be
column $j$ of $(H_{j-1} \cdots H_1) T$.
We first show that $v$ has the shape
\begin{equation} \mylabel{eq:coljT}
v = \left [ \begin{array}{c} \ast \\
 \vdots \\
 \ast \\
 s/h_j \\
\phantom{0} \\
\phantom{0} \\
\end{array} \right ] \in \Z/(s)^{n \times 1}.
\end{equation}
To see this, note that premultiplying $T$ by $H_{j-1} \cdots H_1$
only affects the first $j-1$ rows, so the last $n-j+1$ entries of
$v$ are the same as those of column $j$ of $T$. By
Lemma~\ref{lem:diags}, entry $j$ of $v$ is equal to $s/h_j$.

Next, by Lemma~\ref{lem:recurse}, a minimal
denominator of $(1/s)T_{1\ldots j}$ is given by $\bar{H}_jH_{j-1}
\cdots H_1$, where $\bar{H}_j$ is the minimal Hermite denominator of
$(1/s)v$. By Lemma~\ref{lem:denom},
$H_j H_{j-1} \cdots H_1$ is a denominator
of $(1/s)T_{1\ldots j}$, so we must have that $\det \bar{H}_j$ is
a divisor of $\det H_j = h_j$.
But since entry $j$ of $(1/s)v$ 
is $1/h_j$, the diagonal entry $j$
 of $\bar{H}_j$ must equal $h_j$, which means that the remaining columns of $\bar{H}_j$
have diagonal entry $1$. Because of the shape of $\bar{H}_j$,
and the fact that it is in Hermite form,
we have that $\bar{H}_j H_{j-1} \cdots H_1$ is also in Hermite form.
The uniqueness of the Hermite form then implies that $\bar{H}_j= H_j$.  
\end{proof}

\begin{corollary} \mylabel{cor:getH}
For $1\leq j \leq n$, column $j$ of $(H_{j-1}\cdots H_1)T$ is equal to 
\begin{equation} \label{coljj}
-\frac{s}{h_{j}} \left [ \begin{array}{c} h_{1j} \\
 \vdots \\
 h_{j-1,j}\\
 -1 \\
\phantom{0} \\
\phantom{0} \\
\end{array} \right] \bmod s.
\end{equation}
\end{corollary}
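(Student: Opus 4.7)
The plan is to leverage Theorem \ref{lemidea} and reuse the shape information for column $j$ of $(H_{j-1}\cdots H_1)T$ that was already established inside its proof. Let $v$ denote that column. From the proof of Theorem \ref{lemidea} we already know that $v$ has the form
$$
v \;=\; \bigl( v_1,\ldots,v_{j-1},\; s/h_j,\; 0,\ldots,0 \bigr)^T \bmod s,
$$
because premultiplication by $H_{j-1}\cdots H_1$ only alters the first $j-1$ rows of $T$, and by Corollary \ref{lem:diags} the $j$-th diagonal entry of $T$ is $s/h_j$. So only the unknowns $v_1,\ldots,v_{j-1}$ remain to be identified, and matching these against the claimed vector is the entire content of the corollary.

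To pin down the $v_i$, I would apply $H_j$ to $v$ and exploit the fact that $H_j H_{j-1}\cdots H_1$ is a denominator of the first $j$ columns of $(1/s)T$ (Lemma \ref{lem:denom}), so $H_j v \equiv 0 \bmod s$. Expanding $H_j v$ using the explicit form of $H_j$ (identity except for column $j$, whose entries above the diagonal are $h_{1j},\ldots,h_{j-1,j}$ and whose diagonal entry is $h_j$), row $i < j$ reads
$$
v_i + h_{ij}\cdot (s/h_j) \;\equiv\; 0 \pmod{s},
$$
while row $j$ reads $h_j\cdot(s/h_j) = s \equiv 0 \pmod{s}$, and rows $i>j$ read $0\equiv 0$. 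The first equation gives $v_i \equiv -(s/h_j)\,h_{ij} \bmod s$ for $i<j$, which together with $v_j = s/h_j = -(s/h_j)(-1)$ and $v_i = 0$ for $i>j$ is exactly the claimed vector.

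I do not foresee any serious obstacle: both ingredients (the shape of $v$ and the identity $H_j v\equiv 0 \bmod s$) have already been proved, and the remainder is just reading off coordinates from the product $H_j v$. The only minor subtlety is to note that the quantities $v_i$ should be interpreted in $\Z/(s)$, so the equalities above are modulo $s$; this matches the statement of the corollary, which also asserts equality modulo $s$.
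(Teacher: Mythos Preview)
Your proof is correct and essentially follows the paper's approach: both arguments use the shape of $v$ established in the proof of Theorem~\ref{lemidea} (last $n-j$ entries zero, entry $j$ equal to $s/h_j$) together with $H_j v \equiv 0 \bmod s$, and then read off the first $j-1$ coordinates. The paper phrases the last step as ``the only such vector $v$ is column $j$ of $sH_j^{-1}$'', invoking that $H_j$ is the \emph{minimal} Hermite denominator of $(1/s)v$ from Theorem~\ref{lemidea}; you instead write out the row-by-row identity $v_i + h_{ij}(s/h_j)\equiv 0$ explicitly and observe that only the weaker denominator property of Lemma~\ref{lem:denom} is needed, which is a small but pleasant economy.
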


\begin{proof}
Column $j$ of $(H_{j-1} \cdots H_1)T$ is the vector $v \in \Z/(s)^{n
\times 1}$ in (\ref{eq:coljT}), from
the proof of Theorem~\ref{lemidea}, where it was established
that entry $j$ of $v$ is $s/h_j$, the
last $n-j$ entries of $v$ are zero, and $H_j$ is the minimal Hermite
denominator of $(1/s)v$.
The only such vector $v$ is the one shown in~(\ref{coljj}),
namely, column $j$ of $sH_j^{-1}$.
\end{proof}

The following example illustrates the approach of
Corollary~\ref{cor:getH} for computing the Hermite form over $\Z$
by first computing a Howell form in the space $\Z/(s)$.
\begin{example} \mylabel{exampd1}
The input matrix 
$$
A = 
 \left[ \begin {array}{cccc} -8&3&-1&0\\ 0&1&1&-1
\\ 4&-2&-1&-1\\ 4&-1&0&0 \end {array} \right] \in \Z^{4 \times 4}
$$
has Smith form $S=\diag(1,1,1,16 =: s)$ and
$$
sA^{-1} = 
 \left[ \begin {array}{cccc} 2&1&-1&9\\ 8&4&-4&20
\\ -8&4&-4&-12\\ 0&-8&-8&8 \end {array} \right].
$$
We now work over $\Z/(s)$.  A Howell form of $sA^{-1}$
over $\Z/(s)$ is given by
$$T =  \left [ \begin{array}{cccc} 4 & 0 & 6 & 11 \\
 & 16 & 8 & 12 \\
 & & 8 & 12 \\
 & & & 8 \end{array} \right ] 
  =  \left [ \begin{array}{cccc} \frac{s}{4} & 0 & 6 & 11 \\
 & \frac{s}{1} & 8 & 12 \\
 & & \frac{s}{2}  & 12 \\
 & & & \frac{s}{2} \end{array} \right ]   = 
  \left [ \begin{array}{cccc} \frac{s}{h_{1}}  & 0 & 6 & 11 \\
 & \frac{s}{h_{2}}  & 8 & 12 \\
 & & \frac{s}{h_{3}}  & 12 \\
 & & & \frac{s}{h_{4}} \end{array} \right ].$$
The diagonal elements of $H$ are thus $h_1,h_2,h_3,h_4 = 4,1,2,2$.
Using Corollary~\ref{cor:getH} gives the following:
\begin{eqnarray*}
j=1: & & 
H_1 = \left [ \begin{array}{cccc} 4 & \phantom{0} & \phantom{0} & \phantom{0} \\
 & 1 & & \\ 
& & 1 & \\
& & &  1 \end{array} \right]\mbox{~~~and~~~}
H_1 T = 
\left [ \begin{array}{cccc}   & 0 & 8 & 12 \\
 & 16 & 8 & 12 \\
 & & 8 & 12 \\
 & & & 8 \end{array} \right ]\\
j=2: & & H_2 = \left [ \begin{array}{cccc}
1 & 0 & \phantom{0} & \phantom{0} \\
 & 1 & & \\
& & 1 & \\
 & & & 1 \end{array} \right ] \mbox{~~and~~}
 H_2 H_1 T = H_1 T\\
j=3: & & H_3  = \left [ \begin{array}{cccc}
1 & \phantom{0} & 1 & \phantom{0} \\
 & 1 &1 & \\
& & 2 & \\
 & & & 1 \end{array} \right ] \mbox{~~and~~}
H_3 H_2 H_1  T = 
\left [
\begin{array}{cccc} \phantom{0} & \phantom{0} & \phantom{0} & 8 \\
 & & & 8 \\
 & & & 8 \\
 & & & 8 \end{array} \right ]\\
j=4: & & H_4  = \left [ \begin{array}{cccc}
1 & \phantom{0} &   & 1 \\
 &  1  &  &1 \\
& & 1 & 1 \\
 & & & 2 \end{array} \right ] \mbox{~~and~~}
H_4 H_3 H_2 H_1  T  = 0_{4 \times 4}
\end{eqnarray*}
The Hermite denominator of $(1/s)T$ is thus 
$$ H_4 H_3 H_2 H_1 = 
\left [ \begin{array}{cccc}
4 & 0 & 1 & 1 \\
 & 1 & 1 & 1\\
 & & 2 & 1\\
& & & 2 
\end{array} \right ].$$
\end{example}

\section{Computing a Hermite form from a Howell form}\label{sec:8}

For $A \in \Z^{n \times n}$ nonsingular with Smith form $S$, let
$s= s_n$ be the largest invariant factor of $S$, and let ${S}^*
= s S^{-1}$ and $A^* = s A^{-1}$.
A problem with using the approach of Example~\ref{exampd1} to compute
$H$, is that the size of $A^*$ and its Howell form $T$ over
$\Z/(s)$ can be $\Omega(n^2 \log s)$ bits.

In this section, we show how we can avoid computing the Howell form
$T$ explicitly, and instead work with matrices $M,U\in\Znn$ such
that \[T=MS^*U\bmod s.\] 

We start first with $M{S}^*$, where $M$ is a reduced
Smith massager for $A$, which we know is
right equivalent to $A^*$
over $\Z/(s)$ but has total size only $O(n \log \det S)$ bits, as per
Lemma~\ref{remarkM}. 
We then compute a transformation matrix $U$ such that $T = MS^* U\bmod s$.
\begin{lemma} Let $U \in \Z/(s)^{n \times n}$ be 
such that $T=M{S}^*U$ is a Howell form of $M{S}^*$
over $\Z/(s)$.  Then $T=M{S}^* \rowmod(U,S)$.
\end{lemma}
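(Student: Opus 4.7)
The plan is to observe that the action of $MS^{\ast}$ on a vector only sees each row $i$ of the multiplier modulo $s_i$, so reducing $U$ row-by-row modulo the diagonal of $S$ cannot change the product $MS^{\ast}U \bmod s$. The key structural fact is that $S^{\ast}=sS^{-1}=\diag(s/s_1,\ldots,s/s_n)$, so column $i$ of $MS^{\ast}$ equals $(s/s_i)$ times column $i$ of $M$.

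First I would set $\tilde U := \rowmod(U,S)$ and, for each entry, write $U_{i\ell} = \tilde U_{i\ell} + s_i\,q_{i\ell}$ for some integer $q_{i\ell}$. Then entry $(k,\ell)$ of $MS^{\ast}U$ satisfies
\begin{equation*}
\sum_{i=1}^{n} M_{ki}\,\frac{s}{s_i}\,U_{i\ell}
= \sum_{i=1}^{n} M_{ki}\,\frac{s}{s_i}\,\tilde U_{i\ell}
+ s\sum_{i=1}^{n} M_{ki}\,q_{i\ell},
\end{equation*}
so modulo $s$ the second sum vanishes and we obtain the entry $(k,\ell)$ of $MS^{\ast}\tilde U$. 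This gives $MS^{\ast}U \equiv MS^{\ast}\,\rowmod(U,S) \pmod{s}$, and since $T = MS^{\ast}U \bmod s$ by hypothesis, we conclude $T = MS^{\ast}\,\rowmod(U,S)$.

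There is no real obstacle here; the only subtle point is making sure the divisibility $(s/s_i)\cdot s_i = s$ is used correctly, which is exactly why row $i$ of $U$ can be reduced modulo $s_i$ (rather than modulo $s$) without altering the product modulo $s$. This parallels the column-reduction fact for Smith massagers: just as $M$ may be taken to satisfy $M=\colmod(M,S)$ because post-multiplying by $S^{\ast}$ scales column $i$ by $s/s_i$, here we exploit the symmetric statement that pre-multiplying by $MS^{\ast}$ consumes a factor of $s/s_i$ in row $i$ of the right-hand factor.
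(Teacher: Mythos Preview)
Your proof is correct. The paper actually states this lemma without proof, treating the claim as immediate; your entrywise computation using $(s/s_i)\cdot s_i = s$ is exactly the natural justification the paper leaves implicit.
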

Thus, we may assume without loss of generality that $U= \rowmod(U,S)$.
So, while the overall size of $T$ itself can be large, the
transformation matrix $U$ to generate $T$ can be assumed to be small, that is, just
like $M$, it can be represented using $O(n \log \det S)$ bits.
The following example illustrates how a Howell form $T$
can be represented implicitly as the product $M {S}^* U$.
\begin{example} \mylabel{ex:toolarge} 
The input matrix
$$
A = \left [ \begin{array}{ccccc} 2 & -1 & & & \\
 & 2 & -1 & & \\
 & & 2 & \ddots &     \\
 & & & \ddots & -1 \\
 & & & &  2 
\end{array} \right ] \in \Z^{n \times n}
$$
has Smith form $\diag(1,\ldots,1,2^n =: s)$ 
and
$$
A^* := sA^{-1} = \left [ \begin{array}{cccccc}
2^{n-1} & 2^{n-2} & 2^{n-3} & \cdots & 1 \\
 & 2^{n-1} & 2^{n-2} & \cdots & 2^1 \\
&  & 2^{n-1} &\cdots & 2^2 \\
 &  & & \ddots & \vdots \\
 &  & & & 2^{n-1}
\end{array} \right ].
$$
By Lemma~\ref{thm:notdual},
$A^*$ is in Howell form over $\Z/(s)$. 
The sum of the bitlengths of entries
in $A^*$ is clearly $\Theta(n^3)$. 

However, a reduced Smith massager
for $A$ is given by the $n \times n$ matrix
$$
M = \left [ \begin{array}{ccccc}
\phantom{0}& \phantom{0} & \phantom{0}& \phantom{0} & 1 \\
\phantom{0}& \phantom{0} & \phantom{0}& \phantom{0} & 2^1 \\
\phantom{0}& \phantom{0} & \phantom{0}& \phantom{0} & 2^2 \\
\phantom{0}& \phantom{0} & \phantom{0}& \phantom{0} & \vdots \\
\phantom{0}& \phantom{0} & \phantom{0}& \phantom{0} & 2^{n-1} 
\end{array} \right ].
$$
Let ${S}^*:= s S^{-1}$. A matrix $U$ such that
$A^* = M{S}^*U$ is given by
$$
U = \left [ \begin{array}{cccccc} \phantom{1} & & & & \\
 & \phantom{1} & & & \\
 & & \phantom{1} & & \\
 & & & \phantom{\ddots} & \\
 & & &        & \phantom{1} \\
2^{n-1} & 2^{n-2} & 2^{n-3} & \cdots  & 2^1 & 1 
\end{array} \right ].
$$
The sum of the bitlengths of all entries in $M$ and $U$ is only $O(n^2)$.
Restricting $M$ and $U$ to their nonzero columns and rows, respectively,
and restricting $S^*$ to its only nonzero entry, gives
$$
A^* = M S^* U = \left [ \begin{array}{c} 1 \\
2^1 \\
2^2 \\
\vdots \\
2^{n-1} \end{array} \right ] 
\left [ \begin{array}{c} 1 \end{array} \right ]
\left [ \begin{array}{cccccc} 2^{n-1} & 2^{n-2} & 2^{n-3} &
\cdots & 2^1 & 1 \end{array} \right ] ~\mod~s.
$$
\end{example}

Instead of working with an explicit Howell form $T$ of $A^*$,
we work with the right hand side of the equation $T = M{S}^*U$.
At iteration $j$, we then compute column $j$ of $-(h_j/s)\Rem(M{S}^* U,s)
\in \Z/(h_j)^{n \times 1}$ which gives the off-diagonal entries in
column $j$ of $H$.  Finally, to update $T$ at iteration $j$ we simply
update $M := \colmod(H_j M,S)$.

\begin{example} \mylabel{exampd}
The input matrix 
$$
A = 
 \left[ \begin {array}{cccc} -8&3&-1&0\\ 0&1&1&-1
\\ 4&-2&-1&-1\\ 4&-1&0&0 \end {array} \right] \in \Z^{4 \times 4}
$$
has Smith form $S=\diag(1,1,1,16 =: s)$.  Since in this example
$A$ has only one nontrivial
invariant factor, a reduced
Smith massager $M$ for $A$ and transformation matrix $U$ such
that $M{S}^*U$ is in Howell form will have one nonzero column and row respectively.
Restrict $M$ to its last column, $U$ to its last row, set 
$S= \diag(16)$ and ${S}^* = \diag(1)$.
Then
$$
T = M {S}^* U = \left[ \begin {array}{c} 1\\ \noalign{\medskip}4\\ \noalign{\medskip}4
\\ \noalign{\medskip}8\end {array} \right] \left [ \begin{array}{c} 1 \end{array} \right ]
\left[ \begin {array}{cccc} 4&0&6&11\end {array} \right].
$$
Suppose we have precomputed the diagonal entries $h_1,h_2,h_3,h_4 =
4,1,2,2$ of the Hermite denominator of $MS^{-1}$.
Applying the approach of Corollary~\ref{cor:getH} gives the following:
\begin{eqnarray*}
j=1: & - \frac{h_1}{s}{\rm Column}(M{S}^* U,1) = \left [ \begin{array}{c} -1 \\
 \phantom{0} \\
\phantom{0} \\
\phantom{0}  \end{array} \right ] &
M := \colmod(H_1 M,S) =  \left [ \begin{array}{c} 4 \\ 
4 \\
4 \\
8 \end{array} \right ]\\
j=2: & 
-\frac{h_2}{s} {\rm Column}(M {S}^* U,2) =  \left [ \begin{array}{c} 0 \\ -1 \\
\phantom{0} \\
\phantom{0} \end{array} \right ] 
&
M := \colmod(H_2M,S) = 
\left [ \begin{array}{c}   4 \\
 4 \\
 4 \\
 8\end{array} \right ] \\
j=3: & 
-\frac{h_3}{s} {\rm Column}(M {S}^* U,3) =  \left [ \begin{array}{c} 1 \\ 1 \\
-1  \\
\phantom{0} \end{array} \right ]  &
M := \colmod(H_2M,S) = 
\left [ \begin{array}{c}   8 \\
 8 \\
 8 \\
 8\end{array} \right ] \\ 
j=4: & 
-\frac{h_4}{s} {\rm Column}(M {S}^* U,4) =  \left [ \begin{array}{c} 1 \\ 1 \\
1 \\
-1  
\end{array} \right ]  &
M := \colmod(H_3 M,S) = \left [\begin{array}{c} 0 \\
0 \\
0 \\
0 \end{array} \right ] .
\end{eqnarray*}
The Hermite basis of $A$ is thus given by 
$$ H_4 H_3 H_2 H_1 = 
\left [ \begin{array}{cccc}
4 & 0 & 1 & 1 \\
 & 1 & 1 & 1\\
 & & 2 & 1\\
& & & 2 
\end{array} \right ].$$
\end{example}

\begin{figure}[H]
\centering
\fbox{
\begin{minipage}{.95\textwidth}
\texttt{HermiteViaHowell}$(A, M, S, U, [h_1,\ldots,h_n],p)$\\
{\bf Input:}
\begin{itemize}
\item[(i)] A nonsingular $A \in \Z^{n \times n}$.
\item[(ii)] The Smith form  $S = \diag(s_1,\ldots,s_n)$ of $A$.
\item[] \- Let $s := s_n$ and ${S}^* := sS^{-1}$.
\item[(iii)] A reduced Smith massager $M$ for $A$.
\item[(iv)] A $U \in \Z^{n \times n}$ such that $\Rem(M{S}^*U,s)$ is in
Howell form over $\Z/(s)$ and $U = \rowmod(U,S)$.
\item[(v)] The diagonal entries $h_1,\ldots,h_n$ of the Hermite form of $A$.
\item[(vi)] A prime $p$ that satisfies $p \perp s$ and $\log p \in O(\loglog S)$.
\end{itemize}
{\bf Output:}
\begin{itemize}
\item[] The Hermite form $H$ of $A$.
\end{itemize}
\end{minipage}}
\caption{Problem \texttt{HermiteViaHowell} \label{fig:hvh}}
\end{figure}

\begin{theorem} \mylabel{thm:hvh}
Problem \texttt{HermiteViaHowell} can be solved in 
$$O(n(\log \det S)^2 + n^2(\log \det S) (\loglog \det S))$$
bit operations.
\end{theorem}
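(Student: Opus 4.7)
The plan is to implement the iterative scheme suggested by Corollary~\ref{cor:getH} and illustrated in Example~\ref{exampd}, but keeping $T$ implicit as $\Rem(MS^*U,s)$ rather than forming it explicitly. Starting from the given $M$ and $U$, I iterate $j=1,2,\ldots,n$, maintaining the invariant that at the start of iteration $j$ the (implicit) matrix $\Rem(MS^*U,s)$ equals $(H_{j-1}\cdots H_1)T^{(0)}$, where $T^{(0)}$ is the Howell form corresponding to the initial inputs. At iteration $j$, I compute column $j$ of the current implicit $T$ by the scaled matrix--vector product $w_j := \Rem(MS^*u_j, s)$, where $u_j = {\rm Column}(U,j)$. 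Using the precomputed diagonal entry $h_j$ and Corollary~\ref{cor:getH}, I read off $H_j$ from $-(h_j/s)w_j$, and then restore the invariant by updating $M := \colmod(H_j M, S)$. After $n$ iterations the product $H_n \cdots H_1$ equals $H$.

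The cost analysis splits into two parts. For the update step, Lemma~\ref{lem:applyherm} applied to the pair $(H_j, M)$ gives cost $O(n(\log\det S)(\log h_j))$ at iteration $j$, since $H_j$ differs from $I_n$ only in its $j$-th column and $\det H_j = h_j$. Summing and using $\sum_{j=1}^{n}\log h_j = \log\det H = \log\det S$ yields a total update cost of $O(n(\log\det S)^2)$ bit operations, matching the first term in the claim.

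For the extraction step, a direct application of Lemma~\ref{lem:AB} to $M\cdot\Rem(S^*u_j,s)$ gives $O(n(\log\det S)(\log s))$ per iteration, which sums to $O(n^2(\log\det S)^2)$ -- too expensive. Instead I would invoke the scaled matrix--vector product subroutine developed in Section~\ref{sec:dotprod}, which is precisely designed to exploit (i)~that $M = \colmod(M,S)$, (ii)~that the middle factor $S^* = sS^{-1}$ is diagonal with entries $s/s_i$, (iii)~that $u_j$ is an entry of $U = \rowmod(U,S)$, and (iv)~that the output $w_j$ is known a priori to be $(s/h_j)$ times a vector of magnitude at most $h_j$. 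The prime $p$ from input~(vi) is the modulus used by that subroutine. Accepting the cost stated there of $O(n(\log\det S)(\loglog\det S))$ bit operations per product, the total extraction cost over all $j$ is $O(n^2(\log\det S)(\loglog\det S))$, matching the second term.

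The main obstacle is the extraction cost: without the dedicated routine of Section~\ref{sec:dotprod}, one is stuck with a factor of $\log s$ rather than $\loglog\det S$, so the proof of this theorem genuinely depends on the subroutine to be justified in that later section. Everything else -- the correctness of the iteration, the output shape used to read off $H_j$, and the update cost -- reduces to already-established results (Corollary~\ref{cor:getH} and Lemma~\ref{lem:applyherm}), together with the straightforward bookkeeping that $M$ always remains column-reduced modulo $S$, so successive iterations remain within scope of Lemma~\ref{lem:applyherm}.
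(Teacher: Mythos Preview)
Your approach is essentially identical to the paper's: iterate over~$j$, extract column~$j$ of the implicit Howell form via \texttt{ScaledMatVecProd}, read off $H_j$ using Corollary~\ref{cor:getH}, and update $M := \colmod(H_jM,S)$ via Lemma~\ref{lem:applyherm}. Correctness and the update-cost analysis are fine.

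There is one inaccuracy in the extraction-cost step. You quote the per-call cost of \texttt{ScaledMatVecProd} as $O(n(\log\det S)(\loglog\det S))$, but Theorem~\ref{thm:smvp} actually gives
\[
O\bigl(n(\log\det S)(\log h_j + \loglog\det S) + (\log\det S)^2\bigr).
\]
This is not merely cosmetic: the $\log h_j$ term and the additive $(\log\det S)^2$ term are both present and must be accounted for. Fortunately, summing the $n(\log\det S)\log h_j$ contributions over~$j$ and using $\sum_j \log h_j = \log\det S$ yields $O(n(\log\det S)^2)$, and summing the $(\log\det S)^2$ term over~$n$ iterations also yields $O(n(\log\det S)^2)$; both are absorbed by the first term of the theorem. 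The $\loglog\det S$ part sums to the second term as you say. So your conclusion is right, but the per-iteration bound you state is too optimistic, and the argument as written does not actually justify the first term of the claimed complexity coming from the extraction phase.
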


\begin{proof}
By Corollary~\ref{cor:getH}, we can compute $H_1,\ldots,H_n$
iteratively  as follows:\\

\noindent
\For $j=1$ \To $n$ \Do
\begin{enumerate}
\item[] \# If $h_j=1$ then set $H_j := I_n$ and go to next loop iteration.
\item[1.] \# Let $u \in \Z/(s)^{1 \times n}$ be column $j$ of $U$.\\
$v := -(h_j/s)\Rem(M{S}^*u,s)$
\item[2.] \# Construct $H_j$ from $v$ and $h_j$.\\
$M := \colmod(H_j M,s)$
\end{enumerate}
\Od\\

For the construction of $H_j$ in Step~2, the off-diagonal entries in column $j$
are given by the first $j-1$ entries of $v$, and $h_j$ is given as input.
The proof of Theorem~\ref{thm:diag} shows that the total cost of the updates
to $M$ in Step~2 is bounded by $O(n (\log \det S)^2)$ bit operations.

In Section~\ref{sec:dotprod} we develop an algorithm \texttt{ScaledMatVecProd}
that will compute $v$ in Step~1 during iteration $j$ with the call
$$
v := \texttt{ScaledMatVecProd}(M,S,u,h_j,p).
$$
The \texttt{ScaledMatVecProd}
algorithm exploits the properties $M = \colmod(M,S)$, $u = \rowmod(u,s)$,
the product $M{S}^*u$ is only required modulo $s$, and that
$\Rem(M{S}^*u,s)$ has a known factor $s/h_j$.
We show later in Theorem~\ref{thm:smvp},  that \texttt{ScaledMatVecProd}
has cost
\begin{equation} \mylabel{eq:cdkdk}
O(n(\log \det S)(\log h_j + \loglog \det S) + (\log \det S)^2)
\end{equation}
bit operations. Since $\prod_{i=1}^n h_j = \det S$, 
the sum of~(\ref{eq:cdkdk})
over over all  $h_j$ with $j>1$ is bounded by the cost stated in
the theorem.
\end{proof}

\section{Computing the multiplier for a Howell form}
\mylabel{sec:comphowtrans}

In this section, we work over the residue class ring $\R=\Z/(s)$
for a given modulus $s \in \Z_{>0}$.
\citet{Howell} gives an algorithm to compute a Howell form $T$ of
a $B\in \R^{n \times n}$.  Here we adapt Howell's approach
to our context. In particular, instead of $T$, we focus on the
invertible  transformation matrix $U \in \R^{n \times n}$ such that
$T=BU$. Furthermore, we know positive divisors $h_1,\ldots,h_n$ of $s$
such that the diagonal entries of the Howell form are $t_1,\ldots,t_n$,
with $t_i=s/h_i$.

Howell's algorithm begins by augmenting the input matrix with $n$
initial zero columns: to this end, let $\bar{B} := \left[\begin{array}{c|c}
0_{n \times n} & B \end{array}\right]\in\R^{n\times 2n}$.  Our goal now is
to find a  matrix $\bar{U} \in \R^{2n \times 2n}$ such
that $\bar{B}\bar{U}$ is a Howell form of $\bar{B}$, as defined
before Example~\ref{exhow}.
Once $\bar{U}$ has been found, we can take $U$ to be the trailing principal
$n \times n$ submatrix of $\bar{U}$.  Then $BU$ will be a Howell
form of $B$.

Howell's algorithm proceeds in $n$ iterations, for $i=0,1,\ldots,n-1$.
We initialize $\bar{U} = I_{2n}$. At the start of iteration $i=0$
we thus have $\bar{B}\bar{U}=\bar{B}$.  By the time we reach
the start of iteration
$i$, the matrix $\bar{U}$ has been updated so that
\begin{equation} \mylabel{eq:parthowell}
\bar{B}\bar{U} = 
\left[\begin{array}{cc|cccc|ccc}
&& \ast & \cdots & \ast & \ast & \ast & \cdots & \ast \\
&& \vdots & & \vdots & \vdots & \vdots & & \vdots \\
&& t_{n-i}a_1 & \cdots & t_{n-i}a_{n-1} & t_{n-i}a_n & \ast & \cdots & \ast \\\hline
&&  &  &  &  & t_{n-i+1} & \cdots & \ast \\
&&  &  &  &  & & \ddots & \vdots \\
&&&&&&&& t_n
\end{array}\right]. 
\end{equation}
Note that we do not compute the complete partial triangularization
$\bar{B}\bar{U}$ in~(\ref{eq:parthowell}). We will see that we only
need the elements $a_1,\ldots,a_n \in \Z/(h_{n-i})$ shown
in~(\ref{eq:parthowell}).  Since we are working modulo $s$ and
$h_{n-i} = s/t_{n-i}$, the integers $a_1,\ldots,a_n$ can be considered
to be elements of $\Z/(h_{n-i})$.
Iteration $i$ now applies the following two-part unimodular column
transformation.
\citet{Howell} points out that there exist integers
$c_1,\ldots,c_{n-1},c_n\in\Z/(h_{n-i})$, with $c_n$ relatively
prime to $s$, satisfying 
$$c_1 a_1 + \cdots +c_{n-1}a_{n-1} + c_n a_n = 1 \bmod h_{n-i}.$$
Postmultipying the matrix on the right of~(\ref{eq:parthowell})
by the matrix
\begin{equation}
C_i = 
\left[\begin{array}{c|cccc|c}
I_{n-i} &&&&& \\\hline
& 1 & & & c_1 &\\
&& \ddots & & \vdots & \\
&&& 1 & c_{n-1} & \\
&&&& c_n & \\\hline
&&&&& I_i
\end{array}\right] 
\label{eq:howellC}
\end{equation}
gives
\begin{equation} \mylabel{eq:dok}
\left[\begin{array}{cc|cccc|ccc}
&& \ast & \cdots & \ast & \ast & \ast & \cdots & \ast \\
&& \vdots & & \vdots & \vdots & \vdots & & \vdots \\
&& t_{n-i}a_1 & \cdots & t_{n-i}a_{n-1} & t_{n-i} & \ast & \cdots & \ast \\\hline
&&  &  &  &  & t_{n-i+1} & \cdots & \ast \\
&&  &  &  &  & & \ddots & \vdots \\
&&&&&&&& t_n
\end{array}\right].
\end{equation}
Thus, we can use $t_{n-i}$ to zero out the nonzero entries to the
left of $t_{n-i}$ This step also fills in a new column which is to be
used in the subsequent iterations. If we were to postmultiply the matrix
in~(\ref{eq:dok}) by
\begin{equation}
W_i := 
\left[\begin{array}{cc|cccc|c}
I_{n-i-1} &&&&&& \\
& 1 && & & &\\\hline
&& 1 & & & &\\
&&& \ddots & & & \\
&&&& 1 & & \\
& h_{n-i}&-a_1& \cdots& -a_{n-1}& 1 & \\\hline
&&&&&& I_i
\end{array}\right] 
\label{eq:howellW}
\end{equation}
then we would obtain
$$
\left[\begin{array}{cc|cccc|ccc}
&\ast & \ast & \cdots & \ast & \ast & \ast & \cdots & \ast \\
&\vdots & \vdots & & \vdots & \vdots & \vdots & & \vdots \\
&\ast & \ast & \cdots & \ast & \ast & \ast & \cdots & \ast \\
&& & & & t_{n-i} & \ast & \cdots & \ast \\\hline
&&  &  &  &  & t_{n-i+1} & \cdots & \ast \\
&&  &  &  &  & & \ddots & \vdots \\
&&&&&&&& t_n
\end{array}\right].
$$
Note that we can do these triangularizations implicitly as  we only really need the $a_{\ast}$. 
The final computational part, at step $i$, is to update $\bar{U} := \bar{U}C_i W_i$.

At iteration $i$, the Howell transform algorithm  thus has three
steps:

\begin{enumerate}
\item Compute the entries $\left [ \begin{array}{ccc} a_1 & \cdots & a_n 
\end{array} \right ] \in \Z/(h_{n-i})^{1 \times n}$ of~(\ref{eq:parthowell}).
\item Compute the matrices $C_i$ and $W_i$.
\item Update $\bar{U} := \bar{U} C_i W_i$.
\end{enumerate}
Note that if $h_{n-i}=1$ then iteration $i$ can be skipped since
$C_i$ and $W_i$ will be the identity matrices.
For Step (2) we can appeal to the following result.
\begin{lemma} \mylabel{smlem}
Given integers $h_{n-i} \in\Z_{>0}$ and $a_1,\ldots,a_n\in\Z/(h_{n-i})$,
we can compute the off-diagonal nonzero entries of matrices $C_i,
W_i \in\Z^{2n\times 2n}$ as seen in (\ref{eq:howellC}) and
(\ref{eq:howellW}), respectively, in time $O(n(\log h_{n-i})^2)$.
\end{lemma}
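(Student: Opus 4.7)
The nonzero off-diagonal entries of $W_i$ seen in~(\ref{eq:howellW}) are simply $h_{n-i}$ and the given inputs $-a_1,\ldots,-a_{n-1}$, so $W_i$ requires no computation at all. The real task is to produce the coefficients $c_1,\ldots,c_n \in \Z/(h_{n-i})$ appearing in $C_i$, satisfying $c_1 a_1 + \cdots + c_n a_n \equiv 1 \bmod h_{n-i}$ with $c_n$ coprime to~$s$. Existence of such coefficients follows from the prescription that $t_{n-i} = s/h_{n-i}$ be the $(n-i)$-th diagonal entry of the Howell form, which forces $\gcd(a_1,\ldots,a_n,h_{n-i}) = 1$.

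I would run an iterated extended Euclidean sweep. Set $g_0 := h_{n-i}$ and, for $j = 1,\ldots,n$, use extended GCD to compute $(g_j, \alpha_j, \beta_j)$ with $g_j = \alpha_j g_{j-1} + \beta_j a_j$, reducing all intermediates modulo $h_{n-i}$. By the coprimality hypothesis, $g_n = 1$. A back-substitution then assigns $c_n := \beta_n$ and, for $j$ descending from $n-1$ to $1$, $c_j := P_{j+1}\beta_j \bmod h_{n-i}$, where the running product $P_{j+1} := \alpha_n \alpha_{n-1} \cdots \alpha_{j+1} \bmod h_{n-i}$ is maintained incrementally by one modular multiplication per step. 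In the naive model, each extended GCD on inputs bounded by $h_{n-i}$ and each modular multiplication modulo $h_{n-i}$ cost $O((\log h_{n-i})^2)$ bit operations, so the sweep and the back-substitution together cost $O(n(\log h_{n-i})^2)$ bit operations.

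The subtle step, and the main obstacle, is to arrange that $c_n$ is coprime not merely to $g_{n-1} := \gcd(a_1,\ldots,a_{n-1},h_{n-i})$ (as the xgcd guarantees automatically) but to the full modulus~$s$. The freedom available is to replace $c_n$ by $c_n + \mu g_{n-1}$ for $\mu \in \Z$, compensating in $c_1,\ldots,c_{n-1}$ via the xgcd expression of $g_{n-1}$ as a combination of $a_1,\ldots,a_{n-1}$ modulo $h_{n-i}$, whose coefficients are byproducts already computed during the sweep. A short Chinese Remainder argument using $\gcd(a_1,\ldots,a_n,h_{n-i}) = 1$ shows that some $\mu$ makes $c_n + \mu g_{n-1}$ coprime to~$s$. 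I would locate such a $\mu$ via a standard Stab-type subroutine applied to $(c_n, g_{n-1})$ relative to~$s$, and then apply $n-1$ additional modular multiplications modulo $h_{n-i}$ to propagate the shift back to $c_1,\ldots,c_{n-1}$. With care in the stabilization step to restrict the effective modulus to (divisors of) $h_{n-i}$ whenever possible, the total cost remains $O(n(\log h_{n-i})^2)$ bit operations, as claimed.
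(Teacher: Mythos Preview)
Your explicit iterated extended-gcd sweep with back-substitution is precisely what lies behind the paper's citation to \citet[Lemma~2]{StorjohannMulders98:1}; the paper simply defers to that reference rather than spelling out the construction, and handles $W_i$ with the same one-line observation you make.

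One remark: you go further than the paper in attempting to enforce $c_n \perp s$ rather than merely $c_n \perp h_{n-i}$. The paper's proof says only that ``$c_n$ [is] just an extra gcd operation over $\Z/(h_{n-i})$,'' which yields the weaker coprimality, and $s$ is not even an input to the lemma as stated. Your Stab-based adjustment is correct in principle, but the phrase ``with care\ldots the total cost remains $O(n(\log h_{n-i})^2)$'' is not substantiated---a stabilization relative to $s$ naturally involves $\log s$, not $\log h_{n-i}$. In the surrounding Theorem~\ref{thm:sht} any such term is absorbed into the $O(n(\log\det S)^2)$ budget anyway, so the discrepancy is harmless for the paper's overall claims; but strictly within the lemma's stated bound it is a loose end that the paper itself also leaves untied.
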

\begin{proof}
\citet[Lemma~2]{StorjohannMulders98:1} show that
the $c_1,\ldots,c_{n-1}$ can be computed in the allotted time, with
$c_n$ just an extra gcd operation over $\Z/(h_{n-i})$.
Computing the entries of $W_i$ just involves negating the $a_i$.
\end{proof}

For the analysis of Steps (1) and (3), we consider the special case
of an input matrix $B=M {S}^*$ as specified in Figure~\ref{fig:sht}.

\begin{figure}[H]
\centering
\fbox{
\begin{minipage}{.95\textwidth}
\texttt{SpecialHowellTransform}$(A,M,S,[h_1,\ldots,h_n],p)$\\
{\bf Input:}
\begin{itemize}
\item[(i)] A nonsingular $A \in \Z^{n \times n}$.
\item[(i)] The Smith form  $S =
 \diag(s_1,\ldots,s_n)$ of $A$.
\item[] \- Let $s :=s_n$ and ${S}^* := s S^{-1}$.
\item[(ii)] A reduced Smith massager $M$ for $A$.
\item[(iii)] The diagonal entries $h_1,\ldots,h_n$ of the Hermite
form of $A$.
\item[(iv)] A prime $p$ that satisfies 
$p \perp s$ and $\log p \in \Theta(\loglog s)$.
\end{itemize}
{\bf Output:}
\begin{itemize}
\item[] A matrix $U = \rowmod(U,S)  \in \Z^{n \times n}$ such that
$M {S}^*U$ is a Howell form of $M {S}^*$ over $\Z/(s)$.
\end{itemize}
\end{minipage}}
\caption{Problem \texttt{SpecialHowellTransform} \label{fig:sht}}
\end{figure}

\begin{theorem} \mylabel{thm:sht}
Problem \texttt{SpecialHowellTransform} can be solved in 
in $$O(n(\log \det S)^2 + n^2(\log \det S) (\loglog \det S))$$
bit operations.
\end{theorem}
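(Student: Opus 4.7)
The plan is to bound the per-iteration cost of each of the three steps separately and then sum over the at most $n$ non-trivial iterations, skipping any iteration with $h_{n-i}=1$ (since then both $C_i$ and $W_i$ are identities). The key simplification I will make up front is that only the bottom $n \times 2n$ strip $U_{bot}$ of $\bar{U}$ ever influences the computation: because $\bar{B} = [\,0_{n\times n}\mid MS^*\,]$, we have $\bar{B}\bar{U} = MS^*U_{bot}$, and the output $U$ is the trailing $n\times n$ block of $U_{bot}$. I will maintain $U_{bot}$ row-reduced modulo $S$ throughout; this is free because $(MS^*U_{bot})_{ik} = \sum_j m_{ij}(s/s_j)(U_{bot})_{jk}$ is insensitive modulo $s$ to reducing $(U_{bot})_{jk}$ modulo $s_j$, and by the row-wise analogue of Lemma~\ref{remarkM} the storage for $U_{bot}$ is kept at $O(n\log\det S)$ bits.

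Step~(2) is handled directly by Lemma~\ref{smlem}, giving $O(n(\log h_{n-i})^2) \le O(n(\log h_{n-i})(\log\det S))$ per iteration and hence $O(n(\log\det S)^2)$ in total. For Step~(3), I will read off the postmultiplication $\bar{U}\mapsto \bar{U}C_iW_i$ as two kinds of column updates to $U_{bot}$: one rewrite of column $2n-i$ as a $\Z/(h_{n-i})$-linear combination of $n$ of its columns (coming from $C_i$), followed by $n$ updates of the form ``column $k\leftarrow$ column $k$ plus a $\Z/(h_{n-i})$-multiple of the new column $2n-i$'' (coming from $W_i$). Each scalar-by-entry multiplication at row $j$ costs $O((\log h_{n-i})(\log s_j))$ by Lemma~\ref{lem:modmult}, and reducing the resulting entry modulo $s_j$ to restore the row-reduced invariant stays within the same bound. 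Summing over the $n$ rows, the $n$ $W_i$-updates, and the single $n$-times more expensive $C_i$-rewrite gives $O(n(\log h_{n-i})(\log\det S))$ per iteration and $O(n(\log\det S)^2)$ overall.

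Step~(1) is the delicate part and is the only place where the specific structure $B = MS^*$ enters the analysis. Let $\widetilde{V}$ denote the $n\times n$ submatrix of $U_{bot}$ formed by columns $n-i+1,\ldots,2n-i$. By the invariant (\ref{eq:parthowell}) maintained at the start of iteration $i$, the target values $a_1,\ldots,a_n$ satisfy $(MS^*\widetilde{V})_{n-i,k} = t_{n-i}\,a_k$ for $k=1,\ldots,n$, so that row $n-i$ of $MS^*\widetilde{V}$, read as a column vector, is $\widetilde{V}^{T}S^*\,m_{n-i,:}^{T}$. This is exactly the format accepted by \texttt{ScaledMatVecProd}: $\widetilde{V}^{T}=\colmod(\widetilde{V}^{T},S)$ because $U_{bot}$ is kept row-reduced modulo $S$; $m_{n-i,:}^{T}=\rowmod(m_{n-i,:}^{T},S)$ because $M=\colmod(M,S)$; and the divisibility by $t_{n-i}=s/h_{n-i}$ that makes the $-(h_{n-i}/s)$-scaled output integral is precisely the Howell invariant. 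A single call $\texttt{ScaledMatVecProd}(\widetilde{V}^{T}, S, m_{n-i,:}^{T}, h_{n-i}, p)$ therefore returns $(-a_1,\ldots,-a_n)^{T}$, and by Theorem~\ref{thm:smvp} its cost is
\begin{equation*}
O\bigl(n(\log\det S)(\log h_{n-i} + \loglog\det S) + (\log\det S)^2\bigr)
\end{equation*}
bit operations.

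Summing the Step~(1) cost over the non-trivial iterations and using $\sum_i \log h_{n-i} \le \log\det S$ yields $O(n(\log\det S)^2 + n^2(\log\det S)(\loglog\det S))$, which dominates the Step~(2) and Step~(3) totals and gives the claimed bound. The main obstacle will be Step~(1), where two ingredients must come together: the transpose reformulation that rewrites a row-of-product computation as the column-of-product call \texttt{ScaledMatVecProd} expects, and the recognition that the divisibility by $t_{n-i}$ needed for an output of bit-size proportional to $\log h_{n-i}$ (rather than $\log s$) is exactly what the Howell invariant~(\ref{eq:parthowell}) supplies by construction.
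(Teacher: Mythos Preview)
Your proposal is correct and follows essentially the same approach as the paper: maintain only the bottom $n\times 2n$ block of $\bar{U}$ row-reduced modulo $S$, handle Step~1 via the transpose reformulation and a call to \texttt{ScaledMatVecProd}, bound Step~2 by Lemma~\ref{smlem}, and bound Step~3 by the column-update structure of $C_i$ and $W_i$. The only slip is that \texttt{ScaledMatVecProd} returns $(a_1,\ldots,a_n)^T$ rather than $(-a_1,\ldots,-a_n)^T$ (there is no sign flip in its specification), but this is immaterial to either correctness or the cost analysis.
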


\begin{proof}
We adapt Howell's algorithm described at the start of this section
to compute an $n \times 2n$ matrix $\bar{U}$ such $M S^* \bar{U} =
\left [ \begin{array}{cc} 0_{n \times n} & T \end{array} \right ]$,
with $T$ a Howell form of $MS^*$ over $\Z/(s)$. Our output $U$ is
thus the submatrix comprised of the last $n$ columns of $\bar{U}$.
Because of the presence of the scaling matrix $S^*$, we can keep
the rows of $\bar{U}$ reduced modulo the corresponding diagonal
entries in $S$. In other words, we maintain $\bar{U} = \rowmod(\bar{U},S)$
throughout the algorithm.

Initialize $\bar{U}= \left [ \begin{array}{c|c} 0_{n \times n} &
I_n \end{array} \right ]$.  We perform $n$ iterations for
$i=0,1,\ldots,n-1$.  At the start of iteration $i$ the matrix $MS^*
\bar{U}$ has exactly the shape shown in~(\ref{eq:parthowell}). 
Like before, iteration $i$ consists of three steps:
\begin{enumerate}
\item Compute the entries $\left [ \begin{array}{ccc} a_1 & \cdots & a_n 
\end{array} \right ] \in \Z/(h_{n-i})^{1 \times n}$ of~(\ref{eq:parthowell}).
\item Compute the matrices $C_i$ and $W_i$.
\item Update $\bar{U} := \rowmod(\bar{U} C_i,S)$ and then $\bar{U} := \rowmod(\bar{U} W_i,S)$.
\end{enumerate}

At iteration $i$, the computation of Step~1 aligns with the
specification of the \texttt{ScaledMatVecProd} subroutine that is
later developed in Section~\ref{sec:dotprod}. In particular, 
the output of
$$\mbox{\texttt{ScaledMatVecProd}}(M',S,u',h_{n-i},p),$$ where
\begin{itemize}
\item $M'$ is the transpose of the submatrix of $\bar{U}$ containing columns from $(n-i+1)$ to $(2n-i)$, and
\item $u'$ is the transpose of row $n-i$ of $M$,
\end{itemize}
contains exactly the $a_i$'s we want. The cost of this call to \texttt{ScaledMatVecProd}
is 
\begin{equation} \mylabel{eq:cs1}
O(n(\log \det S)(\log h_{n-j} + \loglog \det S) + (\log \det S)^2)
\end{equation}
bit operations (Theorem~\ref{thm:smvp}).

By Lemma~\ref{smlem}, the cost of Step~2 is
\begin{equation} \mylabel{eq:cs2}
O(n(\log h_{n-i})^2)
\end{equation}
bit operations.

Finally, the two multiplications in Step~3, namely,
$\rowmod(\bar{U}C_i,S)$ and $\rowmod((\bar{U}C_i)W_i,S)$,
are covered by Corollary~\ref{cor:applyherm} and Lemma~\ref{lem:applyherm},
respectively, and have cost bounded by 
\begin{equation} \mylabel{eq:cs3}
O(n(\log \det S)(\log h_{n-i}))
\end{equation} 
bit operations.

Summing~(\ref{eq:cs1}), (\ref{eq:cs2}) and (\ref{eq:cs3}) over 
all iterations $i$ with $h_{n-i}>1$ gives the cost bound stated in the theorem.
\end{proof}

\begin{example}
Let
$$
A = \left[ \begin {array}{cccc} -13&10&-20&27\\ {}27&30&
15&30\\ {}0&15&15&6\\ {}-21&0&-15&9
\end {array} \right],
$$
with Smith form
$S = \diag(1,3,15,105)$ and reduced Smith massager
$$
M = \left[ \begin {array}{cccc} 0&2&0&55\\ {}0&0&7&32
\\ {}0&2&2&41\\ {}0&2&10&10
\end {array} \right]
$$
be given. We are also given the diagonal entries $h_1,h_2,h_3,h_4
= 1,15,15,21$ of the Hermite form of $A$.  Let ${S}^* = sS^{-1}$
with $s=105$.  We illustrate the method used in the proof of
Theorem~\ref{thm:sht} to compute a matrix $U$ such that $T = MS^*
U$ is in Howell form over $\Z/(s)$.  Note that we know that $T$
will have diagonal entries $t_1,t_2,t_3,t_4 = 105, 7, 7, 5$, that
is, $t_i = s/h_i$.

Initialize $\bar{U} = \left [ \begin{array}{cc} 0_{n \times  n} &
I_n \end{array} \right ]$. 
At the start of iteration $j=0$ we have
$$
MS^* \bar{U} = \left [ \begin{array}{cccccccc} 
\phantom{0} & \phantom{0} & \phantom{0} & \phantom{0} & \ast &  \ast & \ast & \ast \\
& & & & \ast & \ast & \ast & \ast \\
& & & & \ast & \ast & \ast & \ast \\
& & & &  0 t_4 & 14t_4  & 14 t_4& 2 t_4
\end{array} \right ],
$$
with $[a_1,a_2,a_3,a_4] = [ 0, 14,14, 2]$.
Working over $\Z/(21)$, we solve the system 
$$
c_1 0 + c_2 14 + c_3 14 + c_4 2 = 1 \bmod 21,
$$
to obtain $c=[c_1,c_2,c_3,c_4] = [0,2,2,4]$. Thus, $C_0$ and $W_0$ are
$$
\left[ \begin {array}{cccccccc} 1&&&&&&&\\ 
&1&&&&&&\\ &&1&&&&&
\\ &&&1&&&&\\ &&&&1&&
&0\\ &&&&&1&&2\\ &&&&&
&1&2\\ &&&&&&&4\end {array} \right]
 \mbox{ and }~\left[ \begin {array}{cccccccc} 1&&&&&&&\\ 
&1&&&&&&\\ &&1&&&&&
\\ &&&1&&&&\\ &&&&1&&&\\ &&&&&1&&\\ &&&&&
&1&\\ &&&21&0&-14&-14&1\end {array} \right] ,
$$
respectively.
After updating $\bar{U} = \rowmod(\bar{U}C_0 W_0,S)$ we have
$$
\bar{U} = 
\left[ \begin {array}{cccccccc} \phantom{0}&\phantom{0}&\phantom{0}
 &0&0&0&0&0\\ 
&&&0&0&0&2&2\\ &&&12&0&2&3&2
\\ &&&84&0&49&49&4\end {array} \right].
$$

Now we move on to iteration $j=1$. We have
$$
MS^* \bar{U} = 
\left[ \begin {array}{cccccccc} \phantom{0} &\phantom{0} &\phantom{0}
 &\ast&\ast&\ast&\ast&\ast \\&&&\ast&\ast&\ast&\ast&\ast\\ 
&&&6t_3&0t_3 &6t_3&13t_3&\ast\\ &&&&&&&t_4
\end {array} \right]$$
with $[a_1, a_2, a_3, a_4] = [6, 0, 6, 13]$. Working over $\Z/(15)$,
we  solve  the system
$$
c_1 6 + c_2 0 + c_3 6 + c_4 13 = 1 \bmod 15
$$
to obtain $c=[c_1,c_2,c_3,c_4] = [-1,0,-1,1]$. Thus, $C_1$ and $W_1$ are
$$
\left[ \begin {array}{cccccccc} 1&&&&&&&\\ 
&1&&&&&&\\ &&1&&&&&
\\ &&&1&&&-1&\\ &&&&1&&0
&\\ &&&&&1&-1&\\ &&&&&
&1&\\ &&&&&&&1\end {array} \right]  
~\mbox{and}~ \left[ \begin {array}{cccccccc} 1&&&&&&&\\ 
&1&&&&&&\\ &&1&&&&&
\\ &&&1&&&&\\ &&&&1&&
&\\ &&&&&1&&\\ &&15&-6&0
&-6&1& \\ &&&&&&&1\end {array} \right] 
$$
respectively. After updating $\bar{U} = \rowmod(\bar{U} C_1 W_1,S)$ we have
$$
\bar{U} =
\left[ \begin {array}{cccccccc} \phantom{0}&\phantom{0}&0&0&0&0&0&0\\
&&0&0&0&0&2&2\\ &&0&3&0&8&4&2 \\ 
&&0&63&0&28&21&4\end {array} \right].
$$
Now we move on to iteration $j=2$. We have
$$
M S^* U = 
\left [ \begin{array}{cccccccc} \phantom{0} & \phantom{0} & \ast & \ast & 
\ast & \ast & \ast & \ast \\
 & & 0 & 9t_2 & 0t_2 & 4t_2 & \ast & \ast \\
 & &  &  & & & t_3 & \ast \\
 & & &  &  & & & t_4  \end{array} \right ]
$$
with $[a_1,a_2,a_3,a_4] = [0,9,0,4]$. Working over $\Z/(15)$
we solve the equation
$$
c_1 0 + c_2 9 + c_3 0 + c_4 4 = 1 \bmod 15
$$
to obtain $[c_1,c_2,c_3,c_4] = [0,1,0,-2]$. Thus, $C_2$ and $W_2$
are 
$$
\left [ \begin{array}{cccccccc} 1 & & & & & & & \\
 & 1 & & & & & &  \\
 & & 1 &  & & 0 & & \\
 & &   &1  & &1 & & \\
 & &   &   &1 &0 & & \\
 & &   &   & & -2 & & \\
 & &   &   & &  &1 & \\
 & &   &   & &  & & 1
\end{array} \right ] 
~\mbox{and}~
\left [ \begin{array}{cccccccc} 1 & & & & & & & \\
 & 1 & & & & & &  \\
 & & 1 &  & & & & \\
 & &   &1  & & & & \\
 & &   &   &1 & & & \\
 &15&0 &-9 &0 & 1 & & \\
 & &   &   & &  &1 & \\
 & &   &   & &  & & 1
\end{array} \right ] 
$$
respectively. 
After updating $\bar{U} = \rowmod(\bar{U} C_2 W_2,S)$ we have
$$
\bar{U} = \left[ \begin {array}{cccccccc} \phantom{0}&0&0&0&0&0&0&0\\ 
&0&0&0&0&0&2&2\\ &0&0&0&0&2&4&2
\\ &0&0&0&0&7&21&4\end {array} \right].
$$
Since $t_1=105$ implies $C_3=W_3=I_{2n}$, we can stop.
If we let $U$ be the submatrix of $\bar{U}$ comprised of the last
$n$ columns, then $MS^* U$ will be in Howell form.
\end{example}

\section{Scaled matrix vector product} \mylabel{sec:dotprod}

In order to obtain our softly cubic complexity, we need to show
that the key step in our special Howell triangulation algorithm
(Figure~\ref{fig:sht}) and in deducing the Hermite from Howell form
(Figure~\ref{fig:hvh}) can be computed efficiently. We do this by
giving an algorithm for the scaled matrix$\times$vector product problem
shown in Figure~\ref{fig:sdp2}.

\begin{figure}[H]
\centering
\fbox{
\begin{minipage}{.95\textwidth}
\texttt{ScaledMatVecProd}$(M,S,u,h,p)$\\
{\bf Input:}
\begin{itemize}
\item[(i)] A nonsingular Smith form  $S = \diag(s_1,\ldots,s_n )\in
\Z^{n \times n}$.\\
Note: Let $s := s_n$ and $S^* := sS^{-1}$.
\item[(ii)] $M
\in \Z^{n\times n}$ such that $M=\colmod(M,S)$.
\item[(iii)] $u \in \Z^{n \times 1}$ such that $u = \rowmod(u,S)$.
\item[(iv)] A divisor $h \in \Z_{\geq 1}$ of $s$ such that $(s/h)^{-1}M{S}^* u$ is over $\Z$.
\item[(v)] An odd prime $p$ such that $p \perp s$ and $\log p \in
\Theta(\loglog \det S)$.
\end{itemize}
{\bf Output:}
\begin{itemize}
\item[] $v = (v_i)_{1\leq i\leq n} \in \Z/(h)^{n\times 1}$ such
that 
$$
\frac{s}{h} 
\overbrace{
\left [ \begin{array}{c} v_{1} \\
 \vdots \\
v_n
\end{array} \right]}^{\textstyle v}  \equiv
\overbrace{
\left [\begin{array}{ccc}
m_{11} & \cdots & m_{1n}\\
\vdots &  \ddots & \vdots\\
m_{n1} & \cdots & m_{nn} 
 \end{array}
 \right ]}^{\textstyle M}
\overbrace{
\left [ \begin{array}{ccc}
\frac{s}{s_1} & & \\
  & \ddots & \\
  & & \frac{s}{s_n} \end{array} \right ]}^{\textstyle {S}^*}
\overbrace{
\left [ \begin{array}{c} u_{1} \\
\vdots \\
u_{n} \end{array} \right ]}^{\textstyle u}
\bmod s.
$$
\end{itemize}
\end{minipage}}
\caption{Problem \texttt{ScaledMatVecProd} \label{fig:sdp2}}
\end{figure}

From Lemma~\ref{remarkM}, we know that the sum of the bitlengths of the nontrivial columns of $M$
is bounded by $O(\log \det S)$.
Since ${S}^*u$ has entries reduced modulo $s$, Lemma~\ref{lem:AB} shows
that the matrix$\times$vector product $\Rem( M ({S}^*u), s)$ can be computed in 
\begin{equation}\label{eq:ideal}
O(n(\log \det S)(\log s))
\end{equation}
bit operations.  Dividing $\Rem(M({S}^*u),S)$ by $s/h$
gives the output
vector $v$.  

However, the cost estimate in~(\ref{eq:ideal}) is too high for
our purposes. Ideally, we would like to replace the $\log s$ factor 
in~(\ref{eq:ideal}) with $\log h$.  Instead, we are able to
obtain the following slightly weaker result.

\begin{theorem}\mylabel{thm:smvp}
Problem \texttt{ScaledMatVecProd}$(M,S,u,h,p)$ can be solved in 
\begin{equation}  \label{eq:cost}
O(n (\log \det S )(\log h + \loglog \det S) + (\log \det S)^2)
\end{equation}
bit operations.
\end{theorem}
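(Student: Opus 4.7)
The plan is to produce an algorithm whose cost meets~(\ref{eq:cost}). Directly forming $\Rem(MS^*u,s)$ via Lemma~\ref{lem:AB} and then dividing entrywise by $t:=s/h$ costs $O(n(\log\det S)(\log s))$, which exceeds the target by the factor $(\log s)/(\log h+\loglog\det S)$ whenever $h\ll s$. The savings I would exploit are threefold: (i)~$t$ divides every entry of $MS^*u$, so only $\log h$ bits per entry of $\Rem(MS^*u,s)$ are informative; (ii)~the auxiliary prime $p$ with $p\perp s$ and $\log p\in\Theta(\loglog\det S)$ provides a second, small modulus coprime to $s$; and (iii)~the assumptions $M=\colmod(M,S)$ and $u=\rowmod(u,S)$ bound the bitlengths of column $i$ of $M$ and of $u_i$ each by $\log s_i$, summing to $\log\det S$.

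My approach would first precompute, for each $i$ with $s_i>1$, two reduced scalars $\gamma_i^{(h)}\equiv (s/s_i)u_i\pmod h$ and $\gamma_i^{(p)}\equiv (s/s_i)u_i\pmod{p^K}$ with $K:=\lceil\log h/\log p\rceil$, so that $p^K\ge h$. Each per-index reduction manipulates values of bitlength at most $\log s_i$ against moduli of bitlength $O(\log h+\log p)$; a careful accounting using $\sum_i(\log s_i)^2\le(\log\det S)^2$ would cap the $n$-independent part of this precomputation by $O((\log\det S)^2)$ and the $n$-dependent part by $O(n(\log\det S)(\log h+\log p))$. Then Lemma~\ref{lem:AB}, applied to the column-compact matrix $M$ first with the vector $\gamma^{(h)}$ modulo $h$ and then with the vector $\gamma^{(p)}$ modulo $p^K$, delivers the two residues $w\bmod h$ and $w\bmod p^K$ of $w:=MS^*u$ in total $O(n(\log\det S)(\log h+\log p))$ bit operations.

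The main obstacle is the final componentwise reconstruction of $v$ from the pair $(w_j\bmod h,\,w_j\bmod p^K)$ together with the known factor $t$. When $\gcd(t,h)=1$ the recovery is just $v_j\equiv t^{-1}(w_j\bmod h)\pmod h$. In the general case $\gcd(t,h)>1$, the $h$-residue determines $v_j$ only modulo $h/\gcd(t,h)$, and the $p^K$-residue must be combined with $\gcd(p,s)=1$, the identity $w_j=tv_j+s k_j$, and the a priori bound $k_j<n$ (derived from the unreduced partial sum $\sum_i(s/s_i)\Rem(m_{ji}u_i,s_i)<ns$) to pin down $v_j\in[0,h)$ uniquely. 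Verifying correctness of this reconstruction and showing that it fits in $O((\log h+\log p)^2)$ bit operations per entry---hence $O(n(\log h+\log p)^2)$ overall, which is within the target---will be the most delicate step of the argument. Summing the preprocessing, the two residue computations, and the entrywise reconstruction then yields the complexity bound~(\ref{eq:cost}).
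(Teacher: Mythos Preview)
Your approach has a genuine gap in the reconstruction step. You compute $\gamma_i^{(p)}\equiv (s/s_i)u_i\pmod{p^K}$ and then form $\sum_i m_{ji}\gamma_i^{(p)}\bmod p^K$; this is the $p^K$-residue of the \emph{unreduced} integer
\[
w_j \;:=\; \sum_{i=1}^n \frac{s}{s_i}\,m_{ji}u_i,
\]
not of the reduced quantity $\tilde w_j:=\sum_i (s/s_i)\Rem(m_{ji}u_i,s_i)$. The bound $k_j<n$ you invoke (via $\tilde w_j<ns$) applies to $\tilde w_j$, but for $w_j$ each term can be as large as $(s/s_i)\cdot s_i\cdot s_i=s\cdot s_i$, so $w_j$ can be of order $s\sum_i s_i$, and the multiplier $k_j$ in $w_j=tv_j+sk_j$ has no useful bound. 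Since $p^K$ is coprime to $s$, the residue $w_j\bmod p^K$ carries no information about $w_j\bmod s$ unless you can bound $w_j$ itself, which you cannot. The $h$-residue alone is also insufficient: when $h\mid t$ (e.g.\ $h^2\mid s$) you get $w_j\equiv tv_j\equiv 0\pmod h$ for every $v_j$, so nothing is learned. Computing the residues of $\tilde w_j$ instead would require forming $\Rem(m_{ji}u_i,s_i)$ for all $i,j$, costing $\Theta(n\sum_i(\log s_i)^2)$, which can be $\Theta(n(\log\det S)^2)$ and blows the budget.

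The paper closes exactly this gap by \emph{partial linearization}: each $m_{ji}$ is expanded in radix $X$ (with $X$ the least power of~$2$ exceeding $2h\log\det S$), and the entries $u_iX^k$ are reduced modulo $s_i$. This yields an integer $D_j\equiv w_j\pmod s$ with the explicit magnitude bound $D_j<2sX\log\det S$, so that $(h/s)D_j<2hX\log\det S<Y$, where $Y$ is the least power of $p$ exceeding $X^2$. Because $(h/s)D_j$ is a nonnegative integer below $Y$, it equals its own $Y$-residue, and the computation $\Rem(h\vec M\vec w,Y)$ recovers it exactly. The linearization is what manufactures a small-magnitude representative whose $p$-power residue is meaningful; without it (or an equivalent device), the two-modulus CRT idea cannot be made to work within the stated cost.
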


In order to simplify the presentation of the algorithm, 
let $$m := \left [ \begin{array}{ccc} m_1 & \cdots & m_n \end{array} 
\right ]$$ denote a row of $M$.  Our goal then is to compute
a scalar $v \in \Z$ such that
\begin{equation}
\mylabel{eqmsu2}
\frac{s}{h} v \equiv
\overbrace{
\left [ \begin{array}{ccc} m_{1} & \cdots & m_{n} \end{array} \right ]
\left [ \begin{array}{ccc}
\frac{s}{s_1} & &  \\
  & \ddots & \\
  & & \frac{s}{s_n} \end{array} \right ]
\left [ \begin{array}{c} u_1 \\ 
\vdots \\
u_n \end{array} \right ]}^{\textstyle A} \bmod s.
\end{equation}
Afterwards,
we simply replace the row vector $m$
in~(\ref{eqmsu2}) with the matrix $M$.

We begin with a high level description of the algorithm.
The right hand side of~(\ref{eqmsu2}), if computed over
$\Z$ without taking $\bmod s$, is given by
\begin{equation} \mylabel{eq:A}
A=\sum_{i=1}^n\frac{s}{s_i}m_iu_i.
\end{equation}
An {\it a priori} magnitude bound is $A \in O(s^2 \log \det S)$.  
The formulation in~(\ref{eq:A}) highlights --- since we only
require an integer congruent to $A\bmod s$ --- that
the products $m_i u_i$ can be computed
modulo $s_i$ since they are scaled by $s/s_i$.  
In Subsection~\ref{ssec:pred1}, we show how to replace the scalar products $m_i u_i$ with dot products
that give an integer congruent to $m_i u_i$ modulo $s_i$.
This leads to a formula $D\equiv A \bmod s$ but with 
magnitude bound $D \in O(sh(\log \det S)^2)$.
Then in Subsection~\ref{ssec:pred2}, we show how to exploit the fact that
$(s/h)$ is a divisor of $D$, that is, 
$(h/s) D \in O(h^2(\log \det S)^2)$.

\subsection{Precision reduction via partial linearization} \mylabel{ssec:pred1}

Let $X \in \Z_{>1}$ be a positive radix
and, for a nonnegative integer $k$, define
$$\vec{X}^{(k)} := \left [ \begin{array}{c} X^0 \\
X^1 \\
\vdots\\
X^{k-1} \end{array} \right ] \in \Z^{k \times 1}.
$$
For $1\leq i \leq n$,  we let $\vec{m}_{i} \in \Z_{\geq 0}^{1 \times k_i}$ be the 
unique vector of coefficients of the $X$-adic expansion of $m_i$, that is, 
$||\vec{m}_i|| < X$ and $m_{i} = \vec{m}_{i} \cdot  \vec{X}^{(k_i)}$, where
$$k_i := \left \lceil \frac{\log s_i}{\log X} \right \rceil.$$
We can then rewrite the formula for $A$ in~(\ref{eq:A}) as
\begin{eqnarray}
\nonumber A & =  & \sum_{i=1}^n \frac{s}{s_i} \vec{m_i} \vec{X}^{(k_i)}u_i \\
\mylabel{eq:A2}  & = & \overbrace{\left [ \begin{array}{ccc} \vec{m}_1 & \cdots & \vec{m}_n \end{array}\right ]}^{\textstyle \vec{m}}
\left [ \begin{array}{ccc} \frac{s}{s_1} I_{k_1} & & \\
 & \ddots & \\
& & \frac{s}{s_n} I_{k_n} \end{array} \right ]
\left [ \begin{array}{c} u_1 \vec{X}^{(k_1)} \\
\vdots \\
u_n \vec{X}^{(k_n)} \end{array} \right ].
\end{eqnarray}
\begin{example} \mylabel{examp:lin1}
Let $m =[ 9, 7926]$, $u=[1012,8057]^t$ and $X=10$. Then, $A=mu$ can be
computed as
$$
A =  
\overbrace{\left[ \begin {array}{c|cccc} 9 & 6&2&9&7\end {array} \right]}^{\textstyle [\vec{m}_1, \vec{m}_2]} 
\overbrace{\left[ \begin {array}{c} 1012 \\\hline 8057\\ 80570
\\ 805700\\ 8057000\end {array}
\right]}^{\textstyle [u_1 \vec{X}^{(1)}, u_2 \vec{X}^{(4)}]^t}  = 63868890.
$$
\end{example}

For the components of $\vec{m}$, we will often  separately consider cases $k_i=1$ and 
$k_i>1$. Note that, in the latter case, $k_i>1$ implies $(\log s_i)/(\log X)~>~ 1$,
and hence we have the upper bound 
\begin{eqnarray}
k_i & =  &  \left \lceil \frac{\log s_i}{\log X} \right \rceil 
  \leq   1 + \frac{\log s_i}{\log X} 
  \leq  \frac{2 \log s_i}{\log X}. \mylabel{eqn:bndki}
\end{eqnarray}

\begin{lemma} \mylabel{lem:sizevecm} The sum of the bitlengths of the entries of $\vec{m}$ is bounded
by $O(\log \det S)$.
\end{lemma}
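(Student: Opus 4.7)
The plan is to bound the contribution to the sum from each component $\vec{m}_i$ separately, using the fact that $M = \colmod(M,S)$ forces $0 \leq m_i < s_i$, and then add over $i$ using $\log \det S = \sum_i \log s_i$.

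First I would split the index set into three cases according to the size of $s_i$. If $s_i = 1$, then $k_i = \lceil \log 1 / \log X \rceil = 0$, so $\vec{m}_i$ has no entries and contributes $0$ to the total. If $1 < s_i \leq X$, then $k_i = 1$ and $\vec{m}_i$ consists of the single entry $m_i$ itself, whose bitlength satisfies $\lg m_i \leq 1 + \log_2 s_i \leq 2 \log_2 s_i$ (the last inequality holding because $s_i \geq 2$). If $s_i > X$, then $k_i > 1$, and every entry of $\vec{m}_i$ is an $X$-adic digit in $[0,X)$, hence has bitlength at most $1 + \log_2 X \leq 2 \log_2 X$.

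In the third case I would invoke the upper bound $k_i \leq 2(\log s_i)/(\log X)$ already derived in~(\ref{eqn:bndki}). Multiplying by the per-entry bound $2 \log X$ gives a sum of bitlengths for $\vec{m}_i$ of at most $4 \log s_i$. Combining with the second case, each $\vec{m}_i$ contributes $O(\log s_i)$ to the grand total, independent of the chosen radix $X$.

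Summing over $i$ and using that $\sum_i \log s_i = \log \prod_i s_i = \log \det S$ then yields the claimed bound $O(\log \det S)$. No step really constitutes an obstacle; the only subtlety to be careful about is the trivial case $s_i = 1$, where the ceiling in the definition of $k_i$ evaluates to $0$ rather than $1$, so that such indices do not need to be charged against $\log \det S$ (which would otherwise fail when many invariant factors are trivial).
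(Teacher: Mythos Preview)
Your proof is correct and follows essentially the same approach as the paper's own argument: split according to whether $k_i=1$ or $k_i>1$, bound each $\vec{m}_i$ by $O(\log s_i)$ using~(\ref{eqn:bndki}) in the latter case, and sum. The only cosmetic difference is that you parameterize the cases by $s_i$ (trivial, $\leq X$, $>X$) and explicitly note that $s_i=1$ forces $k_i=0$, whereas the paper simply restricts attention to $k_i\geq 1$; both are fine.
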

\begin{proof}
If $k_i=1$ then 
$\vec{m}_i$ consists of a single entry bounded in magnitude by $s_i>1$.
The sum of the bitlengths of all such entries of $\vec{m}$ is bounded by
$$
\sum_{\substack{i=1 \\k_i=1}}^n \lg s_i \leq  \sum_{\substack{i=1 \\k_i=1}}^n
 (1 + \log s_i) \leq   \sum_{\substack{i=1 \\k_i=1}}^n 
(2\log s_i) \leq  2 \log \det S.
$$
If $k_i>1$ then $\vec{m}_i$ contains $k_i$ entries with magnitude
bounded by $X$, and thus the sum of the bitlength
of entries in $\vec{m}_i$ is bounded by
\begin{equation}
k_i \lg X  \leq \left ( \frac{2 \log s_i}{\log X}\right ) (1 + \log X) 
 \leq  \left (\frac{2\log s_i}{\log X} \right )(2 \log X)
  \leq  4 (\log s_i),  \mylabel{dk4}
\end{equation}
with the first inequality coming from bound~(\ref{eqn:bndki}). The sum of
the right hand side of~(\ref{dk4}) over all $i$ with $k_i>1$ is thus also $O(\log \det S)$.
\end{proof}
Now we return to the reformulation of $A$ shown in~(\ref{eq:A2}).
Since we only require $A \bmod s$, we can preemptively reduce
the column vector in~(\ref{eq:A2}) by defining
$\vec{u}_i := \Rem(u_i \vec{X}^{(k_i)},s_i)$ for $1\leq i \leq n$.
Then
\begin{equation}\label{defB}
 D :=  \sum_{i=1}^n \frac{s}{s_i} \vec{m}_i \vec{u}_i
\end{equation}
is congruent to $A \bmod s$.  
\begin{example} \mylabel{examp:lin2}
Let $m =[ 9, 7926]$, $u=[1012,8057]^t$ and $X=10$ be as in Example~\ref{examp:lin1} and set $s=10000$. Then, 
$$
D =  
\overbrace{ \left[ \begin {array}{c|cccc} 9& 6&2&9&7\end {array} \right] }^{\textstyle \vec{m} }
\overbrace{\left[ \begin {array}{c} 1012 \\\hline \noalign{\medskip} 8057\\ \noalign{\medskip}570
\\ \noalign{\medskip}5700\\ \noalign{\medskip}7000\end {array}
 \right]}^{\textstyle \vec{u}}  = 158890
$$
is congruent modulo $s$ to $A=mu$.
\end{example}
Our first lemma derives a bound on the
magnitude of $D$.
\begin{lemma} 
\mylabel{lem:u}
Let $D$ be defined  as in~(\ref{defB}).  Then 
$D < 2sX \log \det S$.
\end{lemma}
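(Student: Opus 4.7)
The plan is to bound each summand of $D$ individually using the size constraints on $\vec{m}_i$ and $\vec{u}_i$, and then control the resulting sum $\sum_i k_i$. By the $X$-adic definition, every entry of $\vec{m}_i$ lies in $[0, X)$, while every entry of $\vec{u}_i = \Rem(u_i \vec{X}^{(k_i)}, s_i)$ lies in $[0, s_i)$ by construction. A term-by-term estimate on the dot product of two length-$k_i$ vectors therefore gives $\vec{m}_i \vec{u}_i < k_i X s_i$, so that $\tfrac{s}{s_i}\vec{m}_i \vec{u}_i < k_i X s$. Summing over $i$ yields
\begin{equation*}
D ~<~ s X \sum_{i=1}^n k_i.
\end{equation*}

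The remaining task is to show $\sum_{i=1}^n k_i \leq 2 \log \det S$. First I would drop the indices with $s_i = 1$: by hypothesis $M = \colmod(M, S)$, so the corresponding $m_i$ is zero and its contribution to $D$ vanishes, allowing us to take $k_i = 0$ for such $i$ without loss. For the remaining indices I split into two cases. If $k_i > 1$, the bound $k_i \leq 2 \log s_i / \log X$ from (\ref{eqn:bndki}) combined with $X \geq 2$ (so $\log X \geq 1$) gives $k_i \leq 2 \log s_i$. If $k_i = 1$ and $s_i > 1$, then $s_i \geq 2$ forces $\log s_i \geq 1 = k_i$. In either case $k_i \leq 2 \log s_i$, so
\begin{equation*}
\sum_{i=1}^n k_i ~\leq~ 2 \sum_{\substack{i=1\\ s_i > 1}}^n \log s_i ~=~ 2 \log \det S,
\end{equation*}
which combined with the previous display produces $D < 2 s X \log \det S$, as required.

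The only subtle point is that the entries of $\vec{u}_i$ are bounded by $s_i$ rather than by $s$; this is precisely the payoff of the preemptive reduction modulo $s_i$ in the definition of $\vec{u}_i$, and it is what makes the scaling factor $s/s_i$ cancel with the $s_i$ appearing in the dot product bound. Once that observation is in place, everything else is a routine case split on $k_i=1$ versus $k_i>1$, closely mirroring the case split already used in the proof of Lemma~\ref{lem:sizevecm}.
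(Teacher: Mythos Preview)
Your proof is correct and follows essentially the same route as the paper: bound the dot product to get $D < sX\sum_i k_i$, then show $\sum_i k_i \leq 2\log\det S$. The paper phrases the first step slightly differently (it bounds entries of $(s/s_i)\vec{u}_i$ by $s$ directly rather than bounding entries of $\vec{u}_i$ by $s_i$ and then cancelling), and handles the second step without an explicit case split by writing $k_i = \lceil \log s_i/\log X\rceil \leq 1 + \log s_i \leq 2\log s_i$ for $s_i>1$; your case analysis on $k_i=1$ versus $k_i>1$ reaches the same conclusion and is equally valid.
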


\begin{proof}
From~(\ref{defB}), 
we see that
\begin{equation*} 
D = \left [ \begin{array}{ccc} \vec{m}_1 & \cdots & \vec{m}_n 
\end{array} \right ]
\left [ \begin{array}{c} (s/s_1) \vec{u}_1 \\
\vdots \\  (s/s_n) \vec{u}_n \end{array} \right ]
\end{equation*}
is a dot product of length $\sum_{i=1}^n k_i$, where the row vector
has entries from $[0,X)$, and the column vector
has entries from $[0,s)$.  This implies 
$D < s X \sum_{i=1}^n k_i.$
We can then bound this by
\begin{eqnarray*}
D & < & sX\sum_{i=1}^n k_i\\
&= & sX\sum_{i=1}^n \lceil \log s_i/\log X\rceil \\
& \leq & sX\sum_{\substack{i=1\\s_i\neq 1}}^n (1 + \log s_i) \\
& \leq & sX\sum_{\substack{i=1\\s_i\neq 1}}^n (2\log s_i) \\
& \leq & 2sX\log\det S.
\end{eqnarray*}
\end{proof}

Our next lemma bounds the cost of computing vector $\vec{u}_i$.  Note
that the lemma holds independently of the choice of $X$ (e.g., $X=2$ is valid).

\begin{lemma} \mylabel{lem:costui}
The vectors $\vec{u}_i$, for $1\leq i\leq n$, can be computed
in $O((\log \det S)^2)$ bit operations.
\end{lemma}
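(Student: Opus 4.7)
The plan is to compute $\vec{u}_i$ entry by entry using an iterative scheme and then bound the cost via Lemma~\ref{lem:modmult} and the bound on $k_i$ given in~(\ref{eqn:bndki}).

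Fix some $i$ with $s_i > 1$. Since the input satisfies $u = \rowmod(u,S)$, we have $u_i \in [0,s_i)$, and so entry $0$ of $\vec{u}_i$ equals $u_i$ and needs no computation. For $j = 1,2,\ldots,k_i-1$, entry $j$ of $\vec{u}_i$ equals $\Rem(u_i X^j, s_i)$, and can be computed iteratively from the previous entry by the formula
\[
\Rem(u_i X^j, s_i) \;=\; \Rem\bigl(\Rem(u_i X^{j-1}, s_i) \cdot X,\, s_i\bigr).
\]
By Lemma~\ref{lem:modmult}, multiplying a quantity of bitlength $\lg X$ by an element of $\Z/(s_i)$ and reducing modulo $s_i$ costs $O((\lg X)(\log s_i))$ bit operations. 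The total cost to compute $\vec{u}_i$ is therefore $O(k_i (\lg X)(\log s_i))$.

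To bound this sum over all $i$, I would split into cases. If $k_i \leq 1$ then $\vec{u}_i$ is either empty (when $s_i = 1$) or already in hand, and the cost is $O(1)$. If $k_i > 1$, then by definition of $k_i$ we have $X < s_i$, so $\lg X \leq 2\log s_i$, and the bound~(\ref{eqn:bndki}) gives $k_i (\log X) \leq 2 \log s_i$. Multiplying these shows that the cost per index is
\[
O\bigl(k_i (\lg X)(\log s_i)\bigr) \;=\; O\bigl((\log s_i)^2\bigr).
\]
Summing over all $i$ and using the bound $\sum_{i : s_i > 1} (\log s_i)^2 \leq \bigl(\sum_{i : s_i > 1} \log s_i\bigr)^2 = (\log \det S)^2$ for nonnegative summands gives the claimed total cost of $O((\log \det S)^2)$ bit operations.

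No step here is especially delicate; the only thing to be careful about is to avoid the apparent dependence on $\log X$ in the per-iteration cost, which is handled by observing that $k_i > 1$ forces $X < s_i$ so that the bound on $k_i (\log X)$ from~(\ref{eqn:bndki}) kicks in. This is precisely why the lemma can hold independently of the choice of $X$.
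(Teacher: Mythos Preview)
Your proof is correct and follows essentially the same approach as the paper: compute each $\vec{u}_i$ by the iterative scheme $\Rem(u_iX^j,s_i)=\Rem(\Rem(u_iX^{j-1},s_i)\cdot X,s_i)$, bound each step via Lemma~\ref{lem:modmult}, and use~(\ref{eqn:bndki}) to absorb the $\log X$ factor into $\log s_i$ before summing. One small presentational slip: the phrase ``Multiplying these'' is misleading, since you are not literally multiplying the two displayed inequalities; what you actually need is $\lg X \in O(\log X)$ together with $k_i\log X \leq 2\log s_i$ from~(\ref{eqn:bndki}), which gives $k_i(\lg X)(\log s_i)\in O((\log s_i)^2)$ directly.
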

\begin{proof}
First consider the cost for a fixed $i$. If $k_i = 0$, then $\vec{u}_i
\in \Z^{0 \times 1}$, and there is no computation needed.
Similarly, if $k_i = 1$, then $\vec{u}_i = \left [ \begin{array}{c} u_i 
\end{array} \right ]$.
This leaves us with the case $k_i > 1$.  Let
$$  \left [ \begin{array}{c} a_1 \\
 a_2 \\
 \vdots \\
 a_{k_i} \end{array} \right ] := \vec{u}_i
= \left [ \begin{array}{c} \Rem(u_iX^0,s_i) \\
\Rem(u_iX^1,s_i) \\
\vdots \\
\Rem(u_iX^{k_i-1},s_i) \end{array} \right ]
$$
be our target vector. We can compute the $a_i$ using
a Horner scheme by:\\

\noindent
$a_1 := u_i$\\
\For $k=2$ \To $k_i$ \Do\\
\ind{1} $a_k := \Rem(a_{k-1}X,s_i)$\\
\Od\\

\noindent
By Lemma~\ref{lem:modmult}, there exists a constant $c$
such that the cost of one iteration of
the loop is bounded by $c(\log X)(\log s_i)$.
Since the loop iterates $k_i-1 < k_i$ times, the total cost
to compute $\vec{u}_i$ is $ck_i(\log X)(\log s_i)$.
The total cost to compute all $\vec{u}_i$ is then
\begin{eqnarray*}
\sum_{\substack{i=1\\k_i>1}}^n ck_i(\log X)(\log s_i)
& \leq &  c \sum_{\substack{i=1\\k_i>1}}^n \left (\frac{2\log s_i}{\log X}\right )
                    (\log X)(\log s_i)  \\
& \leq & 2c \sum_{\substack{i=1\\k_i>1}}^n (\log s_i)^2 \\
 & \leq  & 2c(\log \det S)^2,
\end{eqnarray*}
where the first inequality comes from~(\ref{eqn:bndki}).
\end{proof}

\subsection{Precision reduction via modular computation} \mylabel{ssec:pred2}
As shown in the proof of Lemma~\ref{lem:u}, we have
\begin{equation}  \label{eq:dp1}
D = \left [ \begin{array}{ccc} \vec{m}_1 &  \cdots & \vec{m}_n 
\end{array} \right ]
\left [ \begin{array}{c} (s/s_1) \vec{u}_1 \\
\vdots \\  (s/s_n) \vec{u}_n \end{array} \right ].\end{equation}
In order to reduce the precision of computing this dot product,
we can exploit the fact that $D$ has a known divisor $s/h$,
that is, $(h/s)D \in \Z$.  Multiplying~(\ref{eq:dp1}) by $(h/s)$
gives
\begin{equation} \label{lem:dp2}
(h/s) D = h \left [ \begin{array}{ccc} \vec{m}_1 &  \cdots & \vec{m}_n 
\end{array} \right ]
\left [ \begin{array}{c} (1/s_1) \vec{u}_1 \\
\vdots \\  (1/s_n) \vec{u}_n \end{array} \right ].\end{equation}
Lemma~\ref{lem:u} gives $D < 2sX\log \det S$ and hence $(h/s)D
< 2hX\log \det S$.  The idea now is to choose a modulus $Y \in
\Z_{>0}$ that is relatively prime to $s$ and satisfies $Y \geq 2hX\log \det S$.
Then, $(h/s)D < Y$.  Since  any integer $a$ that satisfies
$0 \leq a < Y$ gives $\Rem(a,Y) = a$, 
we can compute $(h/s)D$ by working modulo $Y$.
To this end, let
$\vec{w}_i = \Rem((1/s_i) \vec{u}_i,Y)$ for $1\leq i \leq n$.
Then,
\begin{equation}
\label{Bconst} (h/s)D  =  
  \Rem \left ( h  \left [ \begin{array}{cccc} \vec{m}_1 & 
\vec{m}_2 & \cdots & \vec{m}_n \end{array} \right ]
\left [ \begin{array}{c} \vec{w}_1 \\
\vec{w}_2 \\
\vdots \\  \vec{w}_n \end{array} \right ] , Y \right ) .
\end{equation}
In order to obtain a good complexity for computing the $\vec{w}_i$ vectors
from the $\vec{u}_i$ vectors, the moduli $X$ and $Y$ need to be
well chosen.

\begin{lemma} \mylabel{lem:w} If
 $X \in \Z_{>0}$ is the smallest power of 2 such that $X > 2h\log \det S$, and
$Y \in \Z_{>0}$ is the smallest power of $p$ such
that $Y > X^2$,
then 
\begin{itemize}
\item[(i)] $Y > 2hX\log \det S$,
\item[(ii)] $\log Y \in O(\log X)$, and
\item[(iii)] the vectors $\vec{w}_i$ for $1\leq i\leq n$ can be computed
from the vectors $\vec{u}_{i}$ in time $O((\log \det S)^2)$.
\end{itemize}
\end{lemma}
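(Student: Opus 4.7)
The plan is to handle (i) and (ii) by simple arithmetic on the defining inequalities for $X$ and $Y$, and then for (iii) give an explicit modular-inverse-based algorithm whose cost analysis reuses the partial-linearization bound from~(\ref{eqn:bndki}). For (i), I would chain the defining inequalities: $Y > X^2 > X \cdot (2h\log\det S) = 2hX\log\det S$. For (ii), minimality of $Y$ forces $Y/p \leq X^2$, hence $\log Y \leq \log p + 2\log X$. Since $h \geq 1$ gives $X > 2\log\det S$ and therefore $\log X \geq 1 + \loglog\det S$, the hypothesis $\log p \in \Theta(\loglog\det S)$ yields $\log p \in O(\log X)$, so $\log Y \in O(\log X)$.

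For (iii), my algorithm would do the following for each $i$ with $s_i > 1$: first compute the modular inverse $\sigma_i := s_i^{-1} \bmod Y$ by extended Euclidean, which is well-defined because $Y$ is a power of $p$, $p \perp s$, and $s_i \mid s$; then set each entry of $\vec{w}_i$ to $\Rem(\sigma_i \cdot \vec{u}_{i,j}, Y)$. By the classical analysis of extended Euclidean in the naive cost model, computing $\sigma_i$ costs $O((\log s_i)(\log Y))$ bit operations. By Lemma~\ref{lem:modmult}, taking $a = \vec{u}_{i,j}$ (of bitlength at most $\log s_i$) and $b = \sigma_i \in \Z/(Y)$, each of the $k_i$ modular multiplications also costs $O((\log s_i)(\log Y))$, so the total cost for index $i$ is $O(k_i (\log s_i)(\log Y))$.

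The main obstacle is showing that this aggregate collapses to $O((\log\det S)^2)$ with no hidden $n$ factor, and it is this constraint that forces the particular choices of $X$ and $Y$. Invoking $k_i \leq 1 + (\log s_i)/(\log X)$ from~(\ref{eqn:bndki}), I would split the summed cost as
\[
\sum_i (\log s_i)(\log Y) \;+\; \frac{\log Y}{\log X}\sum_i (\log s_i)^2.
\]
Part (ii) gives $\log Y / \log X = O(1)$, and since $h \leq s \leq \det S$ we also have $\log Y \in O(\log\det S)$. The first piece is therefore bounded by $(\log Y)(\log\det S) = O((\log\det S)^2)$; and using the inequality $\sum_i (\log s_i)^2 \leq \bigl(\sum_i \log s_i\bigr)^2 = (\log\det S)^2$, which holds because the terms are nonnegative, the second piece is $O((\log\det S)^2)$ as well.
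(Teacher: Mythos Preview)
Your proposal is correct and follows essentially the same approach as the paper: parts (i) and (ii) are handled identically, and for (iii) both you and the paper compute the modular inverse $s_i^{-1}\bmod Y$ via extended Euclidean, then multiply each entry of $\vec{u}_i$ by this inverse modulo $Y$, bounding the aggregate cost using~(\ref{eqn:bndki}) and the relation $\log Y\in O(\log X)$. The only cosmetic difference is that the paper separates the case $k_i=1$ from $k_i>1$ (using the bound $k_i\leq 2\log s_i/\log X$ in the latter), whereas you use $k_i\leq 1+\log s_i/\log X$ uniformly and split the resulting sum into two pieces; both routes yield the same $O((\log\det S)^2)$ bound.
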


\begin{proof}
Part (i) follows by substituting $X > 2h\log \det S$ for one of the
factors of $X$ in the inequality  $Y > X^2$.  Part (ii) follows from
the choice of $Y$ as the smallest power of $p$, where $\log p \in
O(\loglog \det S)$ as per the problem specification.

For part (iii), we first  precompute $\bar{s}_i :=
\Rem(1/s_i,Y)$ for all $1~\leq~i~\leq~n.$  Note that $\bar{s}_i$ can
be computed by using the extended euclidean algorithm with input $(s_i,Y)$.
Thus, there exists a constant $c$ such that $\bar{s}_i$ can
be computed in time $c(\lg s_i)(\lg Y)$.  The total cost of computing
all the $\bar{s}_i$ is then bounded by
\begin{eqnarray}
\sum_{\substack{i=1\\s_i\neq 1}}^n c(\lg s_i)(\lg Y) & \leq & 
 c \sum_{\substack{i=1\\s_i\neq 1}}^n (1+ \log s_i)(1+ \log Y))\nonumber \\ 
 & \leq &  c \sum_{\substack{i=1\\s_i\neq 1}}^n (2 \log s_i)(2\log Y) \nonumber\\
 & \in & O((\log \det S)(\log Y)).
\label{eq:targ1}
\end{eqnarray}
The bound~(\ref{eq:targ1}) is within our target cost since $\log Y \in O(\log h + \loglog\det S)$,
which is bounded by $O(\log \det S)$ using the fact that $h \mid \det S$.

Since $\bar{s}_i < Y$ and $||\vec{u}_i|| < s_i$, 
it follows from Lemma~\ref{lem:modmult} that there exists a constant $c'$ such that
the cost of computing  $\vec{w}_i := \Rem(\bar{s}_i\vec{u}_i,Y) \in \Z^{k_i \times 1}$ 
is bounded by $c' k_i (\log s_i) (\log Y)$.
To bound the cost of computing all the $\vec{w}_i$ we consider separately
the case $k_i=1$ and $k_i > 1$.   For the case $k_i=1$ we obtain a total cost of
$$\sum_{\substack{i=1\\k_i=1}}^n  c' (\log s_i)(\log Y) \in O((\log \det S)(\log Y)),$$
which we have already seen to be within our cost bound.
For the case $k_i >1$ we obtain a total cost of
\begin{eqnarray*}
\sum_{\substack{i=1\\k_i>1}}^n  c' k_i (\log s_i)(\log Y) 
 & \leq &  c' \sum_{\substack{i=1\\k_i>1}}^n  \left ( \frac{(2\log s_i}{\log X} \right ) (\log s_i)(\log Y)\mbox{~~~~~~(\ref{eqn:bndki})}\\
 & \leq & \left ( \frac{2c'\log Y}{\log X} \right ) \sum_{\substack{i=1\\k_i>1}}^n  (\log s_i)^2 \\
 & \leq & 
  O((\log \det S)^2).
\end{eqnarray*}
The last inequality uses the fact that $\log Y \in O(\log X)$.
\end{proof}

\subsection{Proof of Theorem~\ref{thm:smvp}}

We first choose dual moduli $X$ and $Y$ as specified in Lemma~\ref{lem:w}.
Construct the partial linearization
$$\vec{M} = \left [ \begin{array}{ccc} \vec{m}_{11} & \cdots & \vec{m}_{1n} \\
\vdots & \ddots & \vdots \\
\vec{m}_{n1} & \cdots & \vec{m}_{nn} \end{array} \right ] \in \Z^{n
\times (k_1+\cdots + k_n)}$$ by replacing column $i$ of $M$ with
the $n \times k_i$ matrix containing the coefficients of its $X$-adic
expansion, for $1\leq i\leq n$.  Since $X$ is a power of 2, the
construction of $\vec{M}$ can be done in  time linear in the size
of $M$, thus in $O(n\log \det S)$ bit operations.

By Lemmas~\ref{lem:costui} and~\ref{lem:w}, we can compute in $O((\log
\det S)^2)$ time a vector 
\begin{equation} \mylabel{vecw} \vec{w}  = \left [\begin{array}{c} \vec{w}_1 \\ \vdots \\ \vec{w}_n 
\end{array} \right ] \in \Z/(Y)^{(k_1+\cdots +k_n) \times
1}\end{equation} such that our target vector $v$ is then given by
$v = \Rem(\Rem(h \vec{M} \vec{w},Y), h)$. We can thus compute
$v$ in three steps:
\begin{enumerate}
\item $a := \Rem(\vec{M} \vec{w},Y)$
\item  $b := \Rem(ha,Y)$
\item $v := \Rem(b,h)$ 
\end{enumerate}
By Lemmas~\ref{lem:sizevecm} and~\ref{lem:AB}, Step~1 can
be done in $O(n(\log \det S)(\log Y))$ bit operations.
By Lemma~\ref{lem:modmult}, Step~2 has cost $O(n(\log h)(\log Y))$ which,
since $h \mid \det S$, is bounded by $O(n(\log \det S)(\log Y))$.
Similarly, Step~3 computes $n$ division with remainder operations involving
the divisor $h$ and a dividend bounded in magnitude by $Y$, a step which also 
has cost $O(n(\log h)(\log Y))$.
This shows that once $\vec{w}$ is precomputed, computing the target vector $v$
can be done in time $O(n(\log \det S)(\log Y))$. Finally, by the definition of
$Y$ we have that $\log Y \in O(\log h + \loglog \det S)$.

\begin{remark} 
For clarity, Subsections~\ref{ssec:pred1} and~\ref{ssec:pred2} have
explained how to construct the vector $\vec{w}$ in~(\ref{vecw}) in
two steps: (a) first construct the vectors $\vec{u}_i \in
\Z/(s_i)^{k_i \times 1}$, $1\leq i\leq n$, as in Lemma~\ref{lem:costui};
(b) then use Lemma~\ref{lem:w} to construct the $\vec{w}_i \in
\Z/(Y)^{k_i \times 1}$ from $\vec{u}_i$.   An issue with producing
$\vec{u}$ explicitly is that it may require $\Omega((\log \det S)^2)$
bits to represent.  For this reason, each
of the $k_1+\cdots +k_n$ entries of $\vec{u}$ should be produced
one by one and then used to produce the corresponding entry of $\vec{w}$,
thus avoiding the need to store $\vec{u}$ explicitly.
With this adjustment, the intermediate space requirement of the algorithm
remains bounded by $O(n\log \det S)$ bits.
\end{remark}

\section{The Hermite form algorithm}\mylabel{sec:11}

At this point, we have developed all of the components for our
algorithm that computes the Hermite form $H$ of a nonsingular integer
matrix $A$.

Before we proceed with our main result, we note that all the
algorithms that have been given in Sections~\ref{sec:5}-\ref{sec:dotprod}
work with a reduced Smith massager $M$ and a Smith form $S$, and their cost
estimates depend on the dimension $n$ and $\log \det S$. For the
Hermite form algorithm, we would like to bound the cost in terms
of $n$ and $\log \|A\|$. Since $S$ will be the Smith form of $A$,
by Hadamard's bound, we have that

\begin{equation} \mylabel{eq:hadbound}
\log \det S = \log |\det A|\leq n\log\left(n^{1/2}\|A\|\right).
\end{equation}
Using (\ref{eq:hadbound}), the cost estimate $O(n(\log \det S)^2)$ from Theorem~\ref{thm:diag}, directly translates to
\begin{equation} \mylabel{eq:cost21simple}
O(n^3(\log n+\log \|A\|)^2).
\end{equation}
Similarly, in a slightly less trivial way, the cost estimate
\[O(n(\log \det S)^2 + n^2(\log \det S) (\loglog \det S))\]
from Theorems~\ref{thm:sht} and~\ref{thm:hvh} is also bounded by (\ref{eq:cost21simple}). The first part is the same as before, and for the second part
\begin{align*}
O(n^2(\log \det S) (\loglog \det S)) &\subseteq O(n^3(\log(n\|A\|))(\log(n\log(n\|A\|)))) \\
&\subseteq O(n^3(\log(n\|A\|))(\log n+ \log\log(n\|A\|))) \\
&\subseteq O(n^3(\log n+\log \|A\|)^2),
\end{align*}
since $O(\log n+ \log\log(n\|A\|))\subseteq O(\log n+\log \|A\|)$.

The following theorem is the main result of the article.

\begin{theorem} \mylabel{thm:hermstandard}
There exists a Las Vegas randomized algorithm that computes the
Hermite form $H\in\Znn$ of a nonsingular integer matrix $A\in\Znn$.
The algorithm uses standard integer and matrix multiplication and has cost
$O(n^3 (\log n + \log ||A||)^2(\log n)^2)$ bit operations.
\end{theorem}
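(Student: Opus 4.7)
The plan is to assemble the Hermite form algorithm as a pipeline of the four subroutines whose cost analyses have already been worked out in Sections~\ref{sec:diag} through~\ref{sec:dotprod}, and then observe that the cost of the first stage (computing a Smith massager) dominates the total.

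First I would call \texttt{SmithMassager}$(A)$ from Theorem~\ref{thm:sm} to produce a triple $(M,S,p)$ consisting of the Smith form $S$, a reduced Smith massager $M$, and a prime $p$ coprime to $\det S$ with $\log p \in \Theta(\log n + \loglog \|A\|)$. This is the only randomized step of the algorithm and carries the Las Vegas failure probability; its cost $O(n^{3}(\log n+\log\|A\|)^{2}(\log n)^{2})$ matches the bound claimed in the theorem. Next I would invoke \texttt{HermiteDiagonals}$(A,M,S)$ from Theorem~\ref{thm:diag} to obtain the diagonal entries $h_{1},\ldots,h_{n}$ of $H$. With these in hand I would run \texttt{SpecialHowellTransform}$(A,M,S,[h_{1},\ldots,h_{n}],p)$ from Theorem~\ref{thm:sht} to obtain a row-reduced multiplier $U$ for which $MS^{*}U$ is a Howell form of $MS^{*}$ modulo $s$. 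Finally, I would call \texttt{HermiteViaHowell}$(A,M,S,U,[h_{1},\ldots,h_{n}],p)$ from Theorem~\ref{thm:hvh} to recover the Hermite form $H$.

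For the complexity bound, the only new work is to combine the stated cost estimates and translate them into a bound in terms of $n$ and $\log\|A\|$. Hadamard's bound $\log\det S \leq n\log(n^{1/2}\|A\|)$ as in~(\ref{eq:hadbound}) yields immediately that the $O(n(\log\det S)^{2})$ cost of \texttt{HermiteDiagonals} and the $O(n(\log\det S)^{2}+n^{2}(\log\det S)(\loglog\det S))$ cost of both \texttt{SpecialHowellTransform} and \texttt{HermiteViaHowell} are each bounded by $O(n^{3}(\log n+\log\|A\|)^{2})$, as already noted in the paragraph containing~(\ref{eq:cost21simple}). This is strictly dominated by the $O(n^{3}(\log n+\log\|A\|)^{2}(\log n)^{2})$ cost of \texttt{SmithMassager}, so the overall cost is as claimed.

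For correctness and the Las Vegas guarantee, Theorem~\ref{thm:sm} ensures that whenever \texttt{SmithMassager} does not report \textsc{Fail} its output $(M,S,p)$ is correct; conditional on this, Theorems~\ref{thm:diag}, \ref{thm:sht}, and~\ref{thm:hvh} are deterministic and produce the true Hermite form of $A$. The failure probability of the whole algorithm is therefore at most the failure probability of the Smith massager step, which is at most $1/2$.

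The main (mild) obstacle is simply verifying that the asymptotic bookkeeping goes through cleanly, in particular checking that the $\loglog\det S$ factor appearing in the costs of Theorems~\ref{thm:sht} and~\ref{thm:hvh} is absorbed into the $(\log n+\log\|A\|)^{2}$ factor after applying Hadamard's bound; this is the short calculation carried out just before the theorem statement and requires no new ideas.
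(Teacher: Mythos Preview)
Your proposal is correct and matches the paper's own proof essentially step for step: the paper assembles exactly the four calls \texttt{SmithMassager}, \texttt{HermiteDiagonals}, \texttt{SpecialHowellTransform}, \texttt{HermiteViaHowell}, invokes Hadamard's bound via~(\ref{eq:hadbound}) and~(\ref{eq:cost21simple}) to reduce the last three costs to $O(n^{3}(\log n+\log\|A\|)^{2})$, and observes that the Smith massager step dominates. Your correctness and Las Vegas discussion likewise mirrors the paper's, which appeals to Definition~\ref{def:sm} and Remark~\ref{rem:U} for the same conclusion.
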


\begin{proof}
The algorithm proceeds in four steps.

\begin{enumerate}
\item $M,S,p := \texttt{SmithMassager}(A)$
\item $h_1,\ldots,h_n := \texttt{HermiteDiagonals}(A,M,S)$
\item $U := \texttt{SpecialHowellTransform}(A,M,S,[h_1,\ldots,h_n], p)$
\item $H := \texttt{HermiteViaHowell}(A,M,S,U,[h_1,\ldots,h_n], p)$
\end{enumerate}

Step~1 uses the Las Vegas algorithm of \citet{BirmpilisLabahnStorjohann20,
BirmpilisLabahnStorjohann21}, restated in Theorem~\ref{thm:sm}, to
compute the Smith form $S$ and a reduced Smith massager $M$ of $A$.  The
cost is as stated in the current theorem.  Note that computing $M$ and $S$
is the only randomized component of the Hermite form algorithm.
The Smith massager algorithm also returns a prime $p$ such that
$p\perp\det S$. The prime is used in the
\texttt{ScaledMatVecProd} procedure in the algorithms used in Steps~3
and~4.

Step~2 exploits the fact that $M$ is maintained column modulo $S$
and computes the diagonal entries of $H$. By Theorem~\ref{thm:diag}
and Hadamard's bound 
this is done with 
\begin{equation} O(n^3(\log n+\log \|A\|)^2) \mylabel{hcost} \end{equation}
bit operations.

Step~3 computes a matrix $U\in\Znn$ such that $T=MS^* U$ is
right equivalent modulo $s$ to a Howell form of $MS^*(1/s)$, where $s$
is the largest invariant factor in $S$ and $S^* = sS^{-1}$. 
By Theorem~\ref{thm:sht} and
Hadamard's bound, the time complexity of Step~3 simplifies
to~(\ref{hcost}).

Finally, Step~4 computes the Hermite denominator $H$ of $T(1/s)$.
By Theorem~\ref{thm:hvh}, the cost of Step~4 is also~(\ref{hcost}).

To see correctness, note that by Definition~\ref{def:sm} the Hermite
denominator of $MS^{-1} = MS^*(1/s)$ is the Hermite form of $A$.
Since $T$ is right equivalent to $MS^*$ over $\Z/(s)$, $T(1/s)$ has
the same Hermite denominator as $MS^*(1/s)$ (cf.\ Remark~\ref{rem:U}).
The matrix $H$ computed in Step~4 is thus the Hermite form of $A$.
\end{proof}

\section{Using fast integer multiplication} \mylabel{sec:fhermalg}

Our Hermite form algorithm is designed to have a softly cubic
complexity in the parameter $n$ in an environment that assumes
standard integer multiplication: the cost of multiplying together
two integers of bitlength $d$ is $O(d^2)$ bit operations.  If we
are in an environment where integer multiplication
has cost $O(d^{1+\epsilon})$ bit operations for some $0 < \epsilon
\leq 1$, we can give a variation of our Hermite form algorithm that
establishes the following result.

\begin{theorem} \mylabel{thm:fast} 
There exists a Las Vegas randomized algorithm that computes the
Hermite form $H \in \Z^{n \times n}$ of a nonsingular integer matrix
$A \in \Z^{n \times n}$ using
$O(n^{3+\epsilon}(\log ||A||)^{1+\epsilon})$
bit operations.  
\end{theorem}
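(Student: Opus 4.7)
The plan is to retrace the proof of Theorem~\ref{thm:hermstandard}, replacing standard integer multiplication throughout with a fast (pseudo-linear) integer multiplication of cost $\M(d)\in O(d^{1+\epsilon})$ for $d$-bit operands. I would first establish fast-arithmetic analogs of the computational primitives in Sections~\ref{sec:4}--\ref{sec:dotprod}: Lemma~\ref{lem:modmult} becomes a bound proportional to $\M(D+\log Y)$, and Lemma~\ref{lem:AB}, Corollary~\ref{cor:ABrows}, and Lemma~\ref{lem:applyherm} inherit the corresponding pseudo-linear savings by direct inspection of their proofs. The fast-arithmetic variants of \texttt{hcol} (Theorem~\ref{thm:hcol}) and of \texttt{SmithMassager} (Theorem~\ref{thm:sm}) are already available from the cited literature at cost $(n\log d)^{1+o(1)}$ and $(n^3\log\|A\|)^{1+o(1)}$ respectively, so Step~1 of the four-step Hermite algorithm already fits within the target bound.

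Next, I would rerun the analyses of Theorems~\ref{thm:diag}, \ref{thm:sht} and \ref{thm:hvh} using these fast-arithmetic primitives. For example, in \texttt{HermiteDiagonals} the total work of the \texttt{hcol} calls together with the updates $M:=\colmod(\hat H_i M,S)$ becomes $(n\log\det S)^{1+o(1)}$ bit operations by summing the pseudo-linear contributions and invoking $\sum_i\log s_i = \log\det S$. Applying Hadamard's bound $\log\det S \leq n\log(n\|A\|)$ then pushes this into $(n^2\log\|A\|)^{1+o(1)}$, well inside the target. Similar recomputations cover the non-\texttt{ScaledMatVecProd} portions of Theorems~\ref{thm:sht} and~\ref{thm:hvh}.

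The main technical work is in producing a fast-arithmetic version of \texttt{ScaledMatVecProd}. The partial linearization of Subsection~\ref{ssec:pred1} and the modular precision reduction of Subsection~\ref{ssec:pred2} carry over unchanged, as they are structural and independent of the multiplication cost. What changes are the cost bounds of Lemmas~\ref{lem:costui} and~\ref{lem:w}, together with the final three-step dot product in the proof of Theorem~\ref{thm:smvp}: every $\Rem$ operation now has pseudo-linear cost in its operand size. A careful accounting, mirroring the standard-arithmetic version but with $\M(\cdot)$ in place of each bilinear bit-count, yields a per-call cost that sums, over the $n$ column-by-column calls from Step~4 and the $n$ row-by-row calls from Step~3, to at most $(n^3\log\|A\|)^{1+o(1)}$ bit operations.

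The principal obstacle I foresee is not conceptual but bookkeeping. Every cost bound in Sections~\ref{sec:4}--\ref{sec:dotprod} carries residual factors of $\log n$ and $\loglog\det S$ that, under standard arithmetic, are silently absorbed into the $(\log n + \log\|A\|)^2$ term. With fast arithmetic, these additive factors must be shown to fit inside the ``$+o(1)$'' exponent of $n^{3+\epsilon}(\log\|A\|)^{1+\epsilon}$, which requires handling the regime $\log\|A\|=O(\log n)$ separately and verifying that every $\loglog\det S$ and $(\log n)^{c}$ factor is dominated by $n^{\epsilon}$ for any fixed $\epsilon>0$. Once this is verified uniformly across the four steps, summing the four contributions yields the bound stated in Theorem~\ref{thm:fast}.
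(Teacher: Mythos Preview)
Your plan diverges from the paper's proof, and the divergence hides a real obstacle rather than mere bookkeeping.

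The difficulty is in your proposed fast-arithmetic reanalysis of \texttt{ScaledMatVecProd}. The per-call cost in Theorem~\ref{thm:smvp} contains an additive $(\log\det S)^2$ term arising from Lemmas~\ref{lem:costui} and~\ref{lem:w}. That term is not an artifact of quadratic integer multiplication: it is essentially the total bitlength of the intermediate vector $\vec{u}$, which the paper itself notes (in the Remark closing Section~\ref{sec:dotprod}) may require $\Omega((\log\det S)^2)$ bits. Producing the entries of $\vec{u}$ one at a time saves space but not time, since each entry of $\vec{u}_i$ has $\Theta(\log s_i)$ bits and there are $k_i\approx\log s_i/\log X$ of them. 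Replacing each $\Rem$ by a pseudo-linear operation therefore still leaves you with a per-call cost of order $(\log\det S)^2/\log X_j$ in the worst case. Summed over the $n$ calls this is $\Theta(n(\log\det S)^2)$ up to lower-order factors, hence $\Theta(n^3(\log n+\log\|A\|)^2)$ by Hadamard. When $\log\|A\|$ is large compared to $n$ this is \emph{not} $O(n^{3+\epsilon}(\log\|A\|)^{1+\epsilon})$: for example with $\log\|A\|=n^{c}$ and any fixed $\epsilon<c/(c+1)$ the bound fails. So ``careful accounting with $\M(\cdot)$ in place of each bilinear bit-count'' does not close the gap.

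The paper sidesteps this by abandoning \texttt{ScaledMatVecProd} (and with it the partial-linearization machinery of Subsections~\ref{ssec:pred1}--\ref{ssec:pred2}) entirely in the fast-arithmetic variant. Instead it introduces three short new tools: Lemma~\ref{lem:outfast} and Corollary~\ref{cor:vecmat} for the update steps $\colmod(H_jM,S)$ and $\rowmod(\bar{U}C_iW_i,S)$, each costing $O(n(\log\det S)^{1+\epsilon})$; and, crucially, it invokes a result from the literature (\citet[Lemma~4.11]{Storjohann10a}, restated as Lemma~\ref{lem:inv}) that directly returns any single row or column of $\Rem(MS^*U,s)$ in $O(n(\log\det S)^{1+\epsilon})$ bit operations. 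This replaces Step~1 of both the Howell-transform loop and the Hermite-via-Howell loop. A further simplification is that \texttt{HermiteDiagonals} is dropped altogether: since the full row of~(\ref{eq:parthowell}) is now available at each Howell iteration, the pivot $t_{n-i}=s/h_{n-i}$ is recovered on the fly as the gcd of those entries. Each of the $n$ iterations then costs $O(n(\log\det S)^{1+\epsilon})$, and the total $O(n^2(\log\det S)^{1+\epsilon})$ gives the claimed bound via Hadamard.
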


Before proving the theorem, we give
three lemmas. Let $S = \diag(s_1,\ldots,s_n)$
be a nonsingular Smith form, and let $M\in\Znn$
satisfy $M = \colmod(M,S)$. Also, let $s := s_n$ and $S^* := sS^{-1}$.

Consider the update step $M := \colmod(H_j M,S)$ required in the proof
of Theorem~\ref{thm:hvh}. The dominant cost is to compute the outer
product of column $j$ of $H_j$ with row $j$ of $M$, keeping this
column reduced modulo $S$. Our first lemma shows that this can be done
efficiently. We also use the lemma in the transpose situation
to bound the cost of the update $\bar{U} := \rowmod(\bar{U}W_i,S)$
required in the proof of Theorem~\ref{thm:sht}.

\begin{lemma} \mylabel{lem:outfast}
Given
a $u \in \Z/(s)^{n \times 1}$, together with
an $m \in \Z^{1 \times n}$ such that $m= \colmod(m,S)$,
we can compute
$\colmod(u m, S)$ in 
$O(n (\log \det S)^{1+\epsilon})$
bit operations.
\end{lemma}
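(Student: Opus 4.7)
The plan is to exploit two structural features of the problem. First, column $j$ of the outer product $um$ is the scalar multiple $um_j$, and by definition of $\colmod(\cdot,S)$ we only need it reduced modulo $s_j$. Second, because $S$ is a Smith form we have the divisibility chain $s_1\mid s_2\mid \cdots \mid s_n$, which lets us compute $u$ modulo each $s_j$ by a sequence of successive reductions rather than reducing $u$ from scratch for every column.

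Concretely, I would first build the vectors $u^{(j)}:=\Rem(u,s_j)\in\Z^{n\times 1}$ iteratively for $j=n,n-1,\ldots,1$, setting $u^{(n)}:=u$ and using the identity $u^{(j-1)}=\Rem(u^{(j)},s_{j-1})$, which is valid because $s_{j-1}\mid s_j$. Each entry of $u^{(j)}$ has bitlength at most $\log s_j$, so with fast arithmetic reducing it modulo $s_{j-1}$ costs $O((\log s_j)(\log s_{j-1})^{\epsilon})\le O((\log s_j)^{1+\epsilon})$ bit operations. Summed over all $n$ entries of $u$ and over all $j$ with $s_j>1$, the total chained-reduction cost is $O\bigl(n\sum_{j}(\log s_j)^{1+\epsilon}\bigr)$.

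Next, for each $j$ with $s_j>1$, I would compute column $j$ of $\colmod(um,S)$ as $\Rem(u^{(j)}m_j,s_j)$. Since both $u_i^{(j)}$ and $m_j$ have bitlength at most $\log s_j$ (using $m=\colmod(m,S)$ for the latter), each of the $n$ scalar multiplications and subsequent reductions modulo $s_j$ costs $O((\log s_j)^{1+\epsilon})$ with fast integer multiplication, contributing another $O\bigl(n\sum_{j}(\log s_j)^{1+\epsilon}\bigr)$ bit operations overall.

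To complete the analysis, I would invoke the subadditivity bound $\sum_{j}x_j^{p}\le\bigl(\sum_j x_j\bigr)^{p}$, valid for any $p\ge 1$ and nonnegative reals $x_j$, applied with $p=1+\epsilon$ and $x_j=\log s_j$. This yields $\sum_{j}(\log s_j)^{1+\epsilon}\le(\log\det S)^{1+\epsilon}$, giving the target bound $O(n(\log\det S)^{1+\epsilon})$. The main subtlety is that a naive direct reduction of $u$ modulo each $s_j$ from scratch would cost $O(n(\log s_n)(\log s_j)^{\epsilon})$ per column, which when summed via the power-mean inequality can be as large as $\Omega(n^{2-\epsilon}(\log\det S)^{1+\epsilon})$; the chained reduction exploits the Smith form divisibility to avoid paying this $\log s_n$ factor at every column.
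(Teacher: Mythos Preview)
Your proposal is correct and essentially identical to the paper's proof. The paper presents the same chained-reduction loop (reducing $u$ successively modulo $s_n,s_{n-1},\ldots$ and forming $\Rem(u^{(i)}m_i,s_i)$ at each step), obtains the same intermediate bound $O\bigl(n\sum_{i}(\lg s_i)^{1+\epsilon}\bigr)$, and then collapses the sum using $\sum_i \lg s_i\in O(\log\det S)$ together with the superadditivity of $x\mapsto x^{1+\epsilon}$ that you make explicit; the only cosmetic difference is that the paper interleaves the reduction and multiplication in a single loop rather than first building all $u^{(j)}$.
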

\begin{proof}
Let $m = \left [ \begin{array}{ccc} m_1 & \cdots & m_n \end{array}
\right ] \in \Z^{1 \times n}$. Then
$$\colmod(u m,S)  = \left[  \begin{array}{ccc} 
\bar{u}_1 &  \cdots & \bar{u}_n 
\end{array} \right ],$$
where $\bar{u}_i = \Rem(um_i,s_i) \in \Z/(s_i)^{n \times 1}$, $1\leq i\leq n$.
Note that if $s_i=1$ then $\bar{u}_i$ is necessarily the zero vector.
The $\bar{u}_{\ast}$ that are not necessarily zero 
can be computed using the following loop:\\

\noindent
$\bar{u} := u$\\
$\bar{u}_n := \Rem(\bar{u}m_n,s_n)$\\
\For $i$ \From $n-1$ \Downto 1 \Do\\
\ind{1} \If $s_i=1$ \Then \Break\/ \Fi\\
\ind{1} $\bar{u} := \Rem(\bar{u},s_i)$\\
\ind{1} $\bar{u}_i := \Rem(\bar{u}m_i,s_i)$\\
\Od\\

\noindent
The cost of computing $\bar{u}_n$ is bounded by 
$O(n(\log s)^{1+\epsilon})$
bit operations.
Since the operands at loop iteration $i$ have bitlength bounded by
$\lg s_{i+1}$, the cost at iteration $i$ is 
$O(n (\lg s_{i+1})^{1+\epsilon})$
bit operations. The total cost of the loop is thus
$O(n \sum_{i=2,s_i\neq 1}^n (\lg s_i)^{1+\epsilon})$.
Using the fact that $\sum_{i=2,s_i\neq 1} \lg s_i \in O(\log \det S)$, the 
total cost to compute the $\bar{u}_*$ is as stated in the lemma.
\end{proof}

Furthermore, consider the update step $\bar{U} := \rowmod(\bar{U} C_i,S)$ in the proof
of Theorem~\ref{thm:sht}. Since $C_i$ has at most one nontrivial column,
the dominant cost is to compute a matrix$\times$vector product, keeping
this row reduced modulo $S$.
The following corollary, applied to the transpose situation, shows
that this can be done efficiently.
The proof is analogous to the proof of Lemma~\ref{lem:outfast}.

\begin{corollary} \mylabel{cor:vecmat}
Given a $u \in \Z/(s)^{1 \times n}$, together with 
an $M \in \Z^{n \times n}$ such that $M = \colmod(M,S)$, 
we can compute $\colmod(uM,S)$ in 
$O(n(\log \det S)^{1+\epsilon})$ 
bit operations.
\end{corollary}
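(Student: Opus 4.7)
The plan is to mirror the argument for Lemma~\ref{lem:outfast}, with the roles of rows and columns swapped. Writing $M_j$ for column $j$ of $M$, entry $j$ of the row vector $uM$ equals the dot product $u \cdot M_j$, which is required only modulo $s_j$. Because $M = \colmod(M,S)$, the entries of $M_j$ already have bitlength at most $\lg s_j$; to keep the cost of each dot product proportional to $\log s_j$ rather than $\log s$, I would first reduce $u$ modulo $s_j$ and then perform the dot product over $\Z/(s_j)$. The divisibilities $s_1 \mid s_2 \mid \cdots \mid s_n = s$ then allow these reductions of $u$ to be carried out incrementally, going from $j=n$ down to $j=1$.

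Concretely, maintain a running row vector $\bar{u}$, initialized to $u$, and execute:

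\noindent
$\bar{v}_n := \Rem(\bar{u}\,M_n, s_n)$\\
\For $j$ \From $n-1$ \Downto $1$ \Do\\
\ind{1} \If $s_j = 1$ \Then \Break\ \Fi\\
\ind{1} $\bar{u} := \Rem(\bar{u}, s_j)$ (componentwise)\\
\ind{1} $\bar{v}_j := \Rem(\bar{u}\,M_j, s_j)$\\
\Od

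\noindent
The columns $\bar{v}_j$ corresponding to $s_j = 1$ are necessarily zero and need not be stored, so the concatenation of the computed $\bar{v}_j$ yields $\colmod(uM,S)$. Correctness rests on the observation that because $s_j \mid s_{j+1}$, reducing the previously obtained residue $u \bmod s_{j+1}$ modulo $s_j$ gives the correct value of $u \bmod s_j$.

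For the cost, computing $\bar{v}_n$ consists of $n$ modular multiply-accumulates over $\Z/(s)$, which with fast integer arithmetic takes $O(n(\log s)^{1+\epsilon})$ bit operations. At iteration $j<n$, the entries of $\bar{u}$ at the top of the loop body have bitlength at most $\lg s_{j+1}$; the componentwise reduction modulo $s_j$, together with the $n$ subsequent multiply-accumulates against the entries of $M_j$ reduced modulo $s_j$, then costs $O(n(\log s_{j+1})^{1+\epsilon})$ bit operations. Summing gives $O\bigl(n\sum_{j:\,s_j\neq 1}(\log s_j)^{1+\epsilon}\bigr)$, and the elementary inequality $\sum a_j^{1+\epsilon} \leq (\sum a_j)^{1+\epsilon}$ for positive reals, combined with $\sum_j \log s_j \leq \log \det S$, bounds this by $O(n(\log \det S)^{1+\epsilon})$ as claimed. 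I do not anticipate a genuine obstacle, since the argument is essentially a direct transpose of the proof of Lemma~\ref{lem:outfast}; the only point requiring care is ensuring the reductions are performed in the order dictated by the divisibility chain among the invariant factors.
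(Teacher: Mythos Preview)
Your proof is correct and follows exactly the approach the paper intends: the paper's own proof consists of the single sentence ``The proof is analogous to the proof of Lemma~\ref{lem:outfast},'' and you have carried out precisely that analogy, reducing $u$ modulo each $s_j$ in decreasing order of $j$ before taking the dot product with column $M_j$. The cost accounting and the use of the divisibility chain $s_1\mid\cdots\mid s_n$ match the paper's argument for the lemma.
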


The following result will be used in place of \texttt{ScaledMatVecProd}.

\begin{lemma}[\protect{\citet[Lemma~4.11]{Storjohann10a}}] \mylabel{lem:inv}
Given an $M \in \Z^{n \times n}$ such that $M= \colmod(M,S)$,
together with a $U \in \Z^{n \times n}$ such that $U = \rowmod(U,S)$,
then any individual row or column of $\Rem(MS^* U,s)$
can be computed  using 
$O(n (\log \det S)^{1+\epsilon})$ 
bit operations.
\end{lemma}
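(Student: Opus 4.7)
My plan is to adapt the scaled matrix-vector-product technique of Section~\ref{sec:dotprod} to the fast-multiplication setting. Since one extracted row or column of $\Rem(MS^*U,s)$ has output size $O(n\log s) \subseteq O(n\log\det S)$ bits, the target cost $O(n(\log\det S)^{1+\epsilon})$ is softly linear in that output size, so no large-scale structural savings are needed---a per-entry algorithm with cost $O((\log\det S)^{1+\epsilon})$ suffices.

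First I would reduce to $n$ scalar dot products. Fix the target column index $k$ (the row case is symmetric by transposition). Entry $i$ of $\Rem(MS^*U,s)$ equals $y_i=\sum_j m_{ij}(s/s_j)u_{jk}\bmod s$, and the hypotheses $M=\colmod(M,S)$ and $U=\rowmod(U,S)$ force $m_{ij}<s_j$ and $u_{jk}<s_j$. Using the congruence
\[
m_{ij}(s/s_j)u_{jk}\equiv (s/s_j)\bigl(m_{ij}u_{jk}\bmod s_j\bigr)\pmod s,
\]
set $a_{ij}:=m_{ij}u_{jk}\bmod s_j<s_j$, after which $(s/s_j)a_{ij}<s$ automatically. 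Computing all $a_{ij}$ for a single row $i$ costs $\sum_j O((\log s_j)^{1+\epsilon})$ bit operations with fast multiplication, which is $O((\log\det S)^{1+\epsilon})$ by the subadditivity bound $\sum_j x_j^{1+\epsilon}\le (\sum_j x_j)^{1+\epsilon}$, valid for $x_j\ge 0$ and $1+\epsilon\ge 1$.

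Next I would evaluate $y_i=\sum_j(s/s_j)a_{ij}\bmod s$ by a Horner-with-reduction scheme along the divisibility chain $s_1\mid s_2\mid\cdots\mid s_n=s$. Setting $r_j:=s_{j+1}/s_j\in\Z_{\ge 1}$, define
\[
\tilde T_1:=a_{i1},\qquad \tilde T_{j+1}:=\bigl(r_j\tilde T_j+a_{i,j+1}\bigr)\bmod s_{j+1}.
\]
A short induction using $r_js_j=s_{j+1}$ yields $\tilde T_j\equiv\sum_{k\le j}(s_j/s_k)a_{ik}\pmod{s_j}$, so $\tilde T_n\equiv y_i\pmod s$. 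Since $\tilde T_j<s_j$ throughout, the $j$-th update multiplies an integer of bitlength $\log r_j$ by one of bitlength $<\log s_j$ and reduces the result modulo $s_{j+1}$, at total cost $O((\log s_{j+1})^{1+\epsilon})$; telescoping, $\sum_{j=1}^{n-1}(\log s_{j+1})^{1+\epsilon}\le(\log\det S)^{1+\epsilon}$, so the per-row cost is $O((\log\det S)^{1+\epsilon})$. Iterating over the $n$ rows, and using Lemma~\ref{remarkM} together with its row analog to isolate the nontrivial columns of $M$ and rows of $U$ within the same budget, gives the claimed $O(n(\log\det S)^{1+\epsilon})$.

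The main obstacle is the explicit modular reduction inside the recurrence: without reducing modulo $s_{j+1}$ at each step, intermediate values would grow to bitlength $\Theta(\log s)$ and each update would cost $\Omega((\log s)^{1+\epsilon})$, yielding $n(\log s)^{1+\epsilon}$ per row and overshooting the target whenever $\log s$ is large compared to the individual $\log s_j$. The identity $r_j(\tilde T_j\bmod s_j)\equiv r_j\tilde T_j\pmod{s_{j+1}}$---a direct consequence of $r_js_j=s_{j+1}$---is what allows the reduction at every step, turning the ragged per-step bitlengths into a telescoping sum bounded by $\log\det S$ and converting the per-row cost from $n(\log s)^{1+\epsilon}$ down to $(\log\det S)^{1+\epsilon}$.
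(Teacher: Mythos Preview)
Your argument is correct. The paper itself does not prove this lemma; it is imported verbatim as \cite[Lemma~4.11]{Storjohann10a}, so there is no in-paper proof to compare against. Your self-contained derivation is a clean alternative to citing the external reference.

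A brief comparison with what one finds in the literature: the cited source treats this in the context of an outer-product expansion of the inverse and bounds the work via product trees, whereas your approach is more elementary and stays entirely at the level of scalar arithmetic. The key device you use---the Horner recurrence $\tilde T_{j+1}=(r_j\tilde T_j+a_{i,j+1})\bmod s_{j+1}$ along the divisibility chain $s_1\mid\cdots\mid s_n$---is exactly what converts the naive per-entry cost $O(n(\log s)^{1+\epsilon})$ into $O\bigl(\sum_j(\log s_j)^{1+\epsilon}\bigr)\subseteq O((\log\det S)^{1+\epsilon})$. The induction establishing $\tilde T_j\equiv\sum_{k\le j}(s_j/s_k)a_{ik}\pmod{s_j}$ is correct (the lift from $\bmod\,s_j$ to $\bmod\,s_{j+1}$ under multiplication by $r_j$ is precisely the point), and the subadditivity bound $\sum_j x_j^{1+\epsilon}\le(\sum_j x_j)^{1+\epsilon}$ is valid for nonnegative $x_j$ and $\epsilon\ge 0$. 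Two minor remarks: the reduction modulo $s_{j+1}$ in the recurrence is in fact at most a single subtraction since $r_j\tilde T_j+a_{i,j+1}<2s_{j+1}$; and the appeal to Lemma~\ref{remarkM} at the end is not strictly needed, since $s_j=1$ already forces $m_{ij}=u_{jk}=0$ and those indices can simply be skipped.
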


We now prove Theorem~\ref{thm:fast}.

\begin{proof} (Of Theorem~\ref{thm:fast}).
We begin by~(i) computing the Smith form $S$ and a reduced Smith massager
$M$ of $A$, then~(ii) compute an integer matrix $U$ such that
$M(s_nS^{-1})U$ is right equivalent to a Howell form of $sA^{-1}$
over $\Z/(s)$, and finally~(iii) compute $H$ as the Hermite denominator
of $MS^{-1}U$.

\citet[Theorem~19]{BirmpilisLabahnStorjohann21} establish that
phase~(i) can be done within the time stated in Theorem~\ref{thm:fast}.

For phase~(ii), we adapt the algorithm, with $n$ iterations and three
steps per iteration, given in the proof of Theorem~\ref{thm:sht}.
In Step~1, use Lemma~\ref{lem:inv} to compute the required $n$
entries \begin{equation} \mylabel{nent}
\left [ \begin{array}{cccc}
t_{n-i}a_1 & \cdots & t_{n-i} a_{n-1} & t_{n-i} a_n 
\end{array} \right ] \in \Z/(s)^{1 \times n}.
\end{equation}
Since we are not given $t_{n-i}=s/h_{n-i}$ as input,
we compute it now as the gcd of entries of the $n$ elements in~(\ref{nent})
at a cost of 
\begin{equation} \mylabel{costh} 
O(n (\log s)^{1+\epsilon})
\end{equation} bit operations.
In Step~2, the update matrices $C_i$ and $W_i$ can be computed
in the time~(\ref{costh}) using an analog of Lemma~\ref{smlem}.
In Step~3, the update $\bar{U} := \rowmod(\bar{U}C_iW_i,S)$
is done in time~(\ref{costh}) using Lemmas~\ref{cor:vecmat} 
and~\ref{lem:outfast}.
Since there are $n$ iterations, and 
$\lg s \leq \lg \det S \in O(n(\log n + \log ||A||))$, 
the overall cost of phase~(ii) is as stated in the theorem.

Similar to phase~(ii), the overall cost bound for phase~(iii)
follows by adapting the two-step
algorithm in the proof of Theorem~\ref{thm:hvh}
by using Lemma~\ref{lem:inv} for Step~1, and Lemma~\ref{lem:outfast} for
Step~2.
\end{proof}

Finally, if we assume we are using a pseudo-linear algorithm
for integer multiplication, such as the $O(d\log d)$ algorithm
of \citet{HarveyvanderHoeven21}, we obtain the following corollary.

\begin{corollary}
There exists a Las Vegas randomized algorithm that computes the
Hermite form $H \in \Z^{n \times n}$ of a nonsingular integer matrix
$A \in \Z^{n \times n}$ using
$(n^3\log ||A||)^{1+o(1)}$
bit operations. This cost estimate assumes
the use of a pseudo-linear algorithm for integer multiplication.
\end{corollary}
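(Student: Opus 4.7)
The plan is to re-trace the proof of Theorem~\ref{thm:fast}, substituting the pseudo-linear bound $M(d) = O(d \log d)$ from \citet{HarveyvanderHoeven21} in place of the $O(d^{1+\epsilon})$ bound used there, and to verify that the resulting total cost simplifies to $(n^3 \log ||A||)^{1+o(1)}$. All operand bitlengths that arise in the algorithm are bounded by $\log \det S$, which by Hadamard's inequality satisfies $\log \det S \in O(n(\log n + \log ||A||))$.

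Phase~(i) of the algorithm, which computes the Smith form $S$ and a reduced Smith massager $M$ of $A$, admits a pseudo-linear variant from \citet{BirmpilisLabahnStorjohann21} whose running time is already $(n^3 \log ||A||)^{1+o(1)}$ bit operations, matching the target bound. For phases~(ii) and~(iii), the three key primitives used in the proof of Theorem~\ref{thm:fast}, namely Lemma~\ref{lem:outfast}, Corollary~\ref{cor:vecmat}, and Lemma~\ref{lem:inv}, all originate from cost estimates of the form $O(n \cdot M(\log \det S))$ bit operations. Substituting the pseudo-linear $M$ gives $n (\log \det S)^{1+o(1)}$ bit operations per invocation.

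Each of phases~(ii) and~(iii) performs $n$ iterations, each invoking a constant number of the primitives, so the total cost of those two phases is bounded by $n^2 (\log \det S)^{1+o(1)}$. Applying Hadamard's bound gives $(n^3(\log n + \log ||A||))^{1+o(1)} = (n^3 \log ||A||)^{1+o(1)}$, where the $+o(1)$ exponent absorbs the polylogarithmic factors in $n$, $\log ||A||$, and $\log\log ||A||$.

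The main thing to verify is that the reduced invariants $M = \colmod(M,S)$ and $U = \rowmod(U,S)$ continue to bound all intermediate operands by $O(\log \det S)$ bits throughout the $n$ iterations of both phases, exactly as exploited in the proof of Theorem~\ref{thm:fast}. Once this is confirmed, the substitution of $d^{1+\epsilon}$ by $d(\log d)^{O(1)}$ in each of the constituent lemmas is a direct cost replacement, and no new algorithmic component is required; the corollary follows from Theorem~\ref{thm:fast} by routine bookkeeping.
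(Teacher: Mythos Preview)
Your proposal is correct and follows the same approach as the paper: the paper simply states that the corollary is obtained from Theorem~\ref{thm:fast} by replacing the $O(d^{1+\epsilon})$ integer multiplication cost with the pseudo-linear $O(d\log d)$ bound of \citet{HarveyvanderHoeven21}, and your write-up just unpacks this substitution in more detail. The only remark is that your final paragraph about re-verifying the size invariants on $M$ and $U$ is unnecessary, since those invariants are independent of the integer multiplication model and are already established in the proof of Theorem~\ref{thm:fast}.
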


\section{Conclusion and topics for future research}\label{sec:12}

We have given a Las Vegas randomized algorithm to compute the Hermite
form $H \in \Z^{n \times n}$ of a nonsingular matrix $A \in \Z^{n
\times n}$. The algorithm has worst-case expected running time
\begin{equation} \mylabel{eq:costalg} O(n^3 (\log n + \log ||A||)^2
(\log n)^2) \end{equation} bit operations using standard integer
and matrix multiplication.

The core tool used is the Smith massager which helps control the
size of intermediate results.  The $(\log n)^2$ factor
in~(\ref{eq:costalg}) is due to the first step of the algorithm,
which computes a Smith form $S$ and Smith massager $M$ of $A$.  This
first step is accomplished using the Las Vegas algorithm of
\citet[Theorem~19]{BirmpilisLabahnStorjohann21} which allows the
use of fast matrix multiplication, and shows that $S$ and $M$ can
be computed using an expected number of $O(n^{\omega}\, (\log n  +
\log ||A||)^2 (\log n)^2)$ bit operations assuming standard integer
multiplication.  Computing $M$ is also the only part of the Hermite
form algorithm that requires randomization.

Once $M$ is precomputed, the algorithm in this paper computes $H$
deterministically using a further  $O(n^3 (\log n  + \log ||A||)^2)$
bit operations.  The intermediate space requirement of the algorithm
to compute $H$ from $M$ is bounded by $O(n^2(\log n + \log ||A||))$
bits, which is the same as that required to write down $H$ in the
worst case.

We have also given a variant of our Hermite form algorithm that has
a worst case expected running time $(n^3\log ||A||)^{1+o(1)}$ bit
operations, assuming the use of a pseudo-linear algorithm for integer
multiplication.

Our Hermite form algorithms extend to the case of an input matrix
$A \in \Z^{m \times n}$ of full column rank $n$ and $m>n$.  Up to
a row permutation, and up to adding at most $n-1$ zero rows, we may
assume without loss of generality that
$$
A = \left [ \begin{array}{c} A_1 \\ A_2  \\ \vdots \\ A_k 
\end{array} \right ] \in \Z^{kn \times n},
$$
where each $A_{\ast}$ is $n \times n$, 
$A_1 \in \Z^{n \times n}$ is nonsingular, and $k = 
\lceil n/m \rceil$.
Initialize $H_1 := A_1$.
Compute, in succession for $i=2,3,\ldots,k$, 
the leading principal $n \times n$ submatrix 
$H_{i}$ of the Hermite form of the nonsingular matrix
\begin{equation} \mylabel{eqHH}
\left [ \begin{array}{cc} H_{i-1} & \\
A_i & I_n \end{array} \right ] \in \Z^{2n \times 2n}.
\end{equation}
Then $H_k \in \Z^{n \times n}$ is the leading principal $n\times n$ submatrix
of the Hermite form of $A$.  \citet[Theorem~27 and
Remark~34]{BirmpilisLabahnStorjohann21} show that computing the
Hermite form of~(\ref{eqHH}) reduces to that of computing
the Hermite form of a matrix of dimension bounded by $4n$ that
has entries with bitlength $O(\log n + \log ||A||)$. Computing
the Hermite form of an $A \in \Z^{m \times n}$ of rank $n$ can
thus be done in a Las Vegas fashion using 
an expected number of $O(mn^2(\log n + \log ||A||)^2(\log n)^2)$
bit operations using standard integer and matrix arithmetic, or an expected
number of $O(mn^2 \log ||A||)^{1+o(1)}$
bit operations using pseudo-linear integer multiplication.

In terms of future directions, a natural goal is to find an algorithm
to compute the Hermite form of a nonsingular integer matrices that
has cost $(n^\omega \log ||A|| )^{1+o(1)}$ bit operations.  In
addition, we would like to find a deterministic algorithm for the
Hermite form problem with the same complexity.

\bibliographystyle{plainnat}
\bibliography{short}

\newcommand{\SortNoop}[1]{}

\end{document}